\title{\ttitle} 
\begin{document}

\frontmatter 

\setstretch{.5} 

\fancyhead{} 
\rhead{\thepage} 
\lhead{} 

\pagestyle{fancy} 

\newcommand{\HRule}{\rule{\linewidth}{0.5mm}} 

\hypersetup{pdftitle={\ttitle}}
\hypersetup{pdfsubject=\subjectname}
\hypersetup{pdfauthor=\authornames}
\hypersetup{pdfkeywords=\keywordnames}

%
%
%
%
%
%
%
%
\begin{titlepage}

\center
 
\raisebox{-1.cm}[0pt][0pt]{\includegraphics[width=1.5cm]{./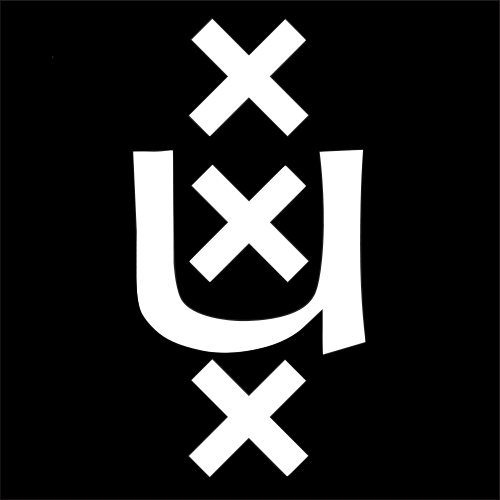}}
\raisebox{-.45cm}[0pt][0pt]{\textsc{\Huge University of Amsterdam}}\\[4.5 cm]

\Large{\textbf{MSc Computational Science}}\\[0.7cm]
\textsc{\Large \textbf{Master Thesis}}\\[1.cm]


\HRule \\[0.9cm]

{ \Large \bfseries Secure Identification in the Isolated Qubits Model}\\[0.9cm] 

\HRule \\[1.6cm]
 

{\large \emph by}\\[0.9cm]
{\large \bfseries Filippos-Arthouros Vogiatzian-Ternaxizian\\ 
10661565}\\[0.9cm] 
{\large  \emph{October 2015}}\\ 
[1.8cm] 


\begin{minipage}{0.4\textwidth}
\begin{flushleft} \large
{\large \emph{Supervisor:}}\\

\large{Dr. Christian Schaffner} 
\end{flushleft}
\end{minipage}
\hfill
\begin{minipage}{0.4\textwidth}
\begin{flushright} \large
\emph{Examiners:} \\
\large{Dr. Inge Bethke} \\ 
\large{Dr. Serge Fehr} 
\end{flushright}
\end{minipage}\\

\begin{minipage}{0.4 \textwidth}
 \begin{flushleft}
  
\raisebox{-158pt}[0pt][0pt]{\includegraphics[height=1.5cm, width=4cm]{./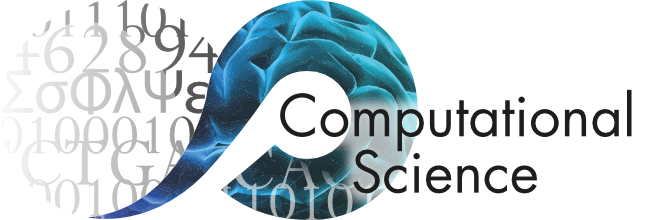}}\\ 

\raisebox{-158pt}[0pt][0pt]{\large{Informatics Institute/UvA}} 

 \end{flushleft}

\end{minipage}
\hfill
\begin{minipage}{0.4\textwidth}
 \begin{flushright}
  
\raisebox{-158pt}[0pt][0pt]{\includegraphics[height=1.5cm, width=4cm]{./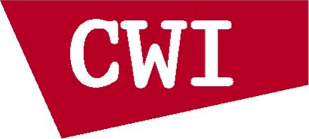}}\\ 

\raisebox{-158pt}[0pt][0pt]{\large{Centrum Wiskunde \& Informatica}} 

 \end{flushright}
\end{minipage}


\vfill 

\end{titlepage}

\addtotoc{Abstract} 

\abstract{\addtocontents{toc}{\vspace{1em}} 

Oblivious transfer is a powerful cryptographic primitive that is complete for secure multi-party computation.  
In oblivious transfer protocols a user sends one or more messages to a receiver, while the sender remains oblivious as to which messages have been received.
Protocols for oblivious transfer cannot exist in a classical or fully-quantum world, but can be implemented by restricting the users' power. 

The isolated qubits model is a cryptographic model in which users are restricted to single-qubit operations and are not allowed to use entangling operations.
Furthermore, all parties are allowed to store qubits for a long time before measuring them. 

In this model, a secure single-bit one-out-of-two randomised oblivious transfer protocol  was recently presented by Liu.
Motivated by this result, we construct a protocol for secure string one-out-of-two randomised oblivious transfer by simplifying and generalising the existing proof.

We then study for the first time interactive protocols for more complex two-party functionalities in this model based on the security of our construction. 
In order to guarantee the composability of our construction, users are restricted to measurement at the end of each sub-protocol. It is then possible to construct
secure one-out-of-two and one-out-of-k oblivious transfer protocols in the isolated qubits model.

Moreover, we study secure password-based identification, where a user identifies himself to another user by 
evaluating the equality function on their inputs, or passwords. 
We use the oblivious transfer constructions mentioned above as sub-protocols to construct a secure identification protocol.

Finally, we prove that constructing a secure identification protocol non-interactively is impossible, even using oblivious transfer.

}

\clearpage 

%
\setstretch{1.3} 

\acknowledgements{\addtocontents{toc}{\vspace{1em}} 

First of all, I would like to thank my supervisor, Christian Schaffner for introducing me to world of quantum cryptography and for giving
me the opportunity to work with him, for his valuable contribution throughout the project, the long hours he spent on trying to solve the riddles
of isolated qubits. 

Furthermore, I want to thank Yi-Kai Liu for helpful discussions and suggestions as well as reading through our first try to tackle his model.

I would also like to thank the examination committee for taking the time and effort of reading this thesis.

Last but not least, I want to thank my family and friends for their motivation and support during the last year. 
}
\clearpage 


\pagestyle{fancy} 

\lhead{\emph{Contents}} 
\tableofcontents 




\clearpage 

\setstretch{1.5} 

\lhead{\emph{Abbreviations}} 
\listofsymbols{ll} 
{
\textbf{OTM} & \textbf{O}ne-\textbf{T}ime \textbf{M}emory \\
\textbf{OT} & \textbf{O}blivious \textbf{T}ransfer \\
\textbf{\OT} & One-out-of-Two \textbf{O}blivious \textbf{T}ransfer \\
\textbf{\kOT} & One-out-of-k \textbf{O}blivious \textbf{T}ransfer \\
\textbf{\kROT} & One-out-of-k \textbf{R}andomised \textbf{O}blivious \textbf{T}ransfer \\
\textbf{IQM} & \textbf{I}solated \textbf{Q}ubits \textbf{M}odel \\
\textbf{POVM} & \textbf{P}ositive \textbf{O}perator \textbf{V}alued \textbf{M}easurement \\
\textbf{LOCC} & \textbf{L}ocal \textbf{O}perations and \textbf{C}lassical \textbf{C}ommunication \\

}

\setstretch{1} 

\mainmatter 

\pagestyle{fancy} 


\chapter{Introduction}

\label{Chapter1}

\lhead{Chapter 1. \emph{Introduction}} 

\section{History Of Cryptography: From Art To Science.}

The word cryptography comes from the greek words \greektext{krupt'o} \latintext (``secret") and \greektext{gr'afw} \latintext (``write"). 
In other words it defines the art of secret message transmission between two parties in a way that the message remains unreadable to any third party (\textit{adversary}).
This definition is accurate for the historical uses of cryptography but not for its modern form. 

In the last century, cryptography has evolved from art to science that does not rely on the obscurity of the encryption method but on formal mathematical definitions
and rigorous security proofs. Furthermore, modern cryptography deals not only with the problem of message encryption but also with problems such as authentication, digital signatures and multi-party computation.

In this section we give a brief overview of the history of cryptography and its evolution from the art of message encryption to its modern forms.

\subsection{First Steps: The Art Of Encrypting Messages}
\label{sec:firststeps}
The practice of cryptography is as old as the transmission of messages. Closely linked to the history of mankind, forms of encryption were developed independently in a number
of places and soon again forgotten as were the civilisations that used them.

According to Kahn \cite{Kah96}, cryptography has its roots in 1900 BC ancient Egypt, in the use of unusual hieroglyphs, instead of the ordinary ones, in the tomb of a nobleman, Khnumhotep II.
Together with the construction of impressive burial monuments, the need to impress the living took the form of decorating tombs with obscure encryptions. 
These cryptic puzzles did for the first time
intend to preserve the secrecy of the original text, at least enough to attract the curiosity of passersby for the short time it would take to decrypt and read.

Although there are probably inumerable examples of these first forms of cryptography we note its first known military use to transmit secret messages, the scytale.
First mentioned around the 7th century BC by Apollonius of Rhodes, used by the Spartans the scytale was a method to transmit a message secretly. 
Plutarch gives a more detailed account of its use in \textit{Lives} (Lysander, 19), two identical wooden rods, the scytalae, are used in the following way. A leather strap is wound around the scytale and
then the message is written on it along the length of the rod (see Figure \ref{img:scytale}). The leather strap is then sent to the receiver of the message who has to wind it around his scytale in order to read the message. If the message is intercepted,
it cannot be read unless a rod of the same diameter is used. It is furthermore hypothesized that this could be a method for message authentication instead of encryption, that is only if the sender used the correct 
scytale is the message
 readable by the receiver, thus making it more difficult to inject false messages by a third party.

Through the next centuries, the most common use of cryptography was encryption of text through ciphers by substitution of letters in a fixed way such as the Caesar's cipher. 
The latter uses a fixed left shift of the alphabet by three letters, i.e. A would be transcripted as D, B as E, and so on. 
More complicated ciphers were developped following the same principle, using a, possibly different, shift of the alphabet for every letter of the message, often defined by a secret key.

The most prominent example of complex substitution ciphers is the use of rotor machines, for example the Enigma and Lorentz cipher machines used in World War II (see Figure \ref{img:lorenz}).
These machines used a number of rotating disks (rotors) that implemented a complex, but fixed, substitution of letters. 
For every keypress the position of the rotors would change thus using a different substitution for every letter.

\begin{figure}[H]
 \centering
 \begin{subfigure}{0.45 \textwidth}
  \includegraphics[width= \textwidth, height= 5cm]{./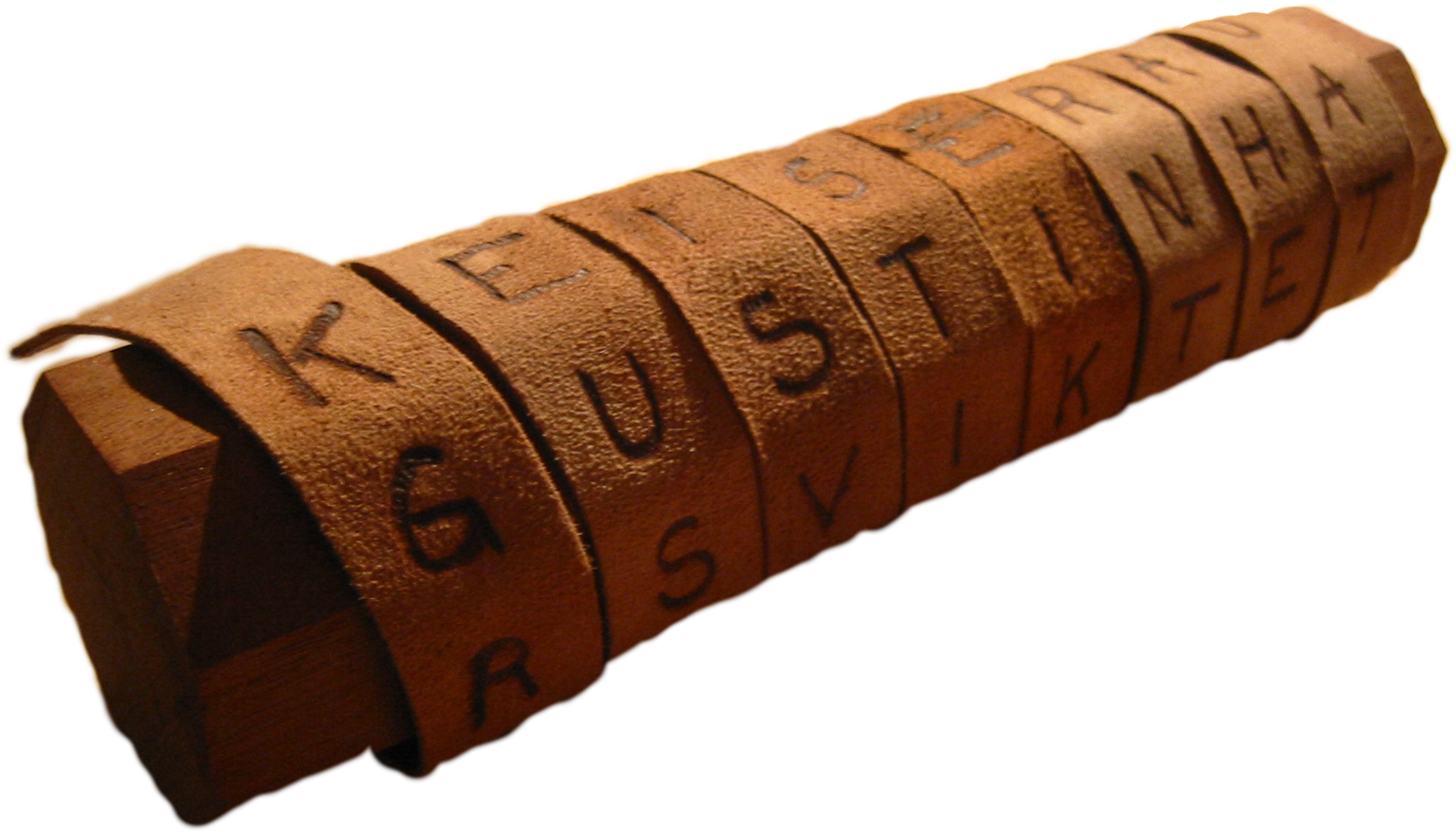}
   \caption{Scytale}
  \label{img:scytale}
 \end{subfigure}~
 \begin{subfigure}{0.45 \textwidth}
  \includegraphics[width= \textwidth, height= 5cm]{./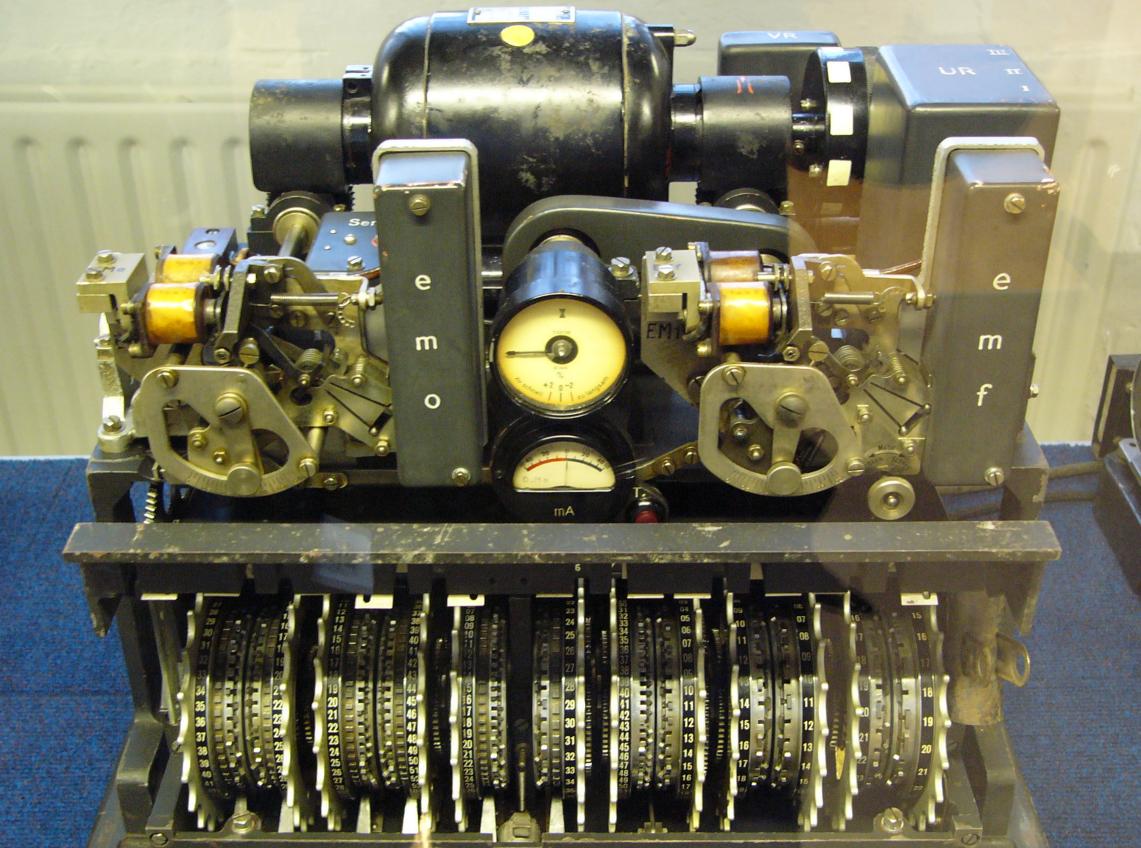}
   \caption{Lorenz rotor stream cipher machine}
  \label{img:lorenz}
 \end{subfigure}
 \caption{Examples of device dependent cryptographic implementations. \fref{img:scytale} The scytale described in more detail in \sref{sec:firststeps} (Source: \url{https://commons.wikimedia.org/wiki/File:Skytale.png})
 and \fref{img:lorenz} The Lorenz SZ42, an example of a rotor cipher machine (Source: \url{https://en.wikipedia.org/wiki/File:Lorenz-SZ42-2.jpg}).}
\end{figure}

\subsection{Modern Cryptography}

For more than twenty centuries cryptography focused mostly on the art of encrypting and conveying secret messages, mainly for military purposes.
A large number of very different and sometimes very complex protocols were implemented, but they all relied on the secrecy of the encryption method. 
Thus once the protocol was known by an adversary it was no longer secure.
The beginning of the end of this era of cryptography was foreseen by A. Kerckhoffs in the following statement:
 \begin{quote}
 A cryptosystem should be secure even if everything about the system, except the key, is public knowledge.
 
 \textit{A.Kerckhoffs}, ``La Cryptographie militaire'', 1883
  \end{quote}

This was later reformulated by C. Shannon as ``the enemy knows the system being used''\cite{Sha49}, starting the modern era of cryptography, where security of cryptographic schemes or protocols
does no longer rely on the obscurity of the encryption methods. For cryptography this was the paradigm shift from art to science.

Modern cryptography relies on the formulation of exact definitions for protocols and rigorous proofs of security \footnote{For a detailed introduction we refer to \cite{KL07}.}. 
Most noteably the security of most 
cryptographic protocols depends on the unproven assumption that some mathematical problems, such as the factorisation of integers \footnotemark, are hard to solve. 
A problem is computationally hard to solve if there exist no algorithms that can do so in polynomial time. This of course means that these protocols are not indefinitely secure since an adversary 
would be able to succeed in violating its security given enough time 
or an algorithm that could solve the problem on which the protocol's security relies efficiently.

Assumptions about the computational restriction of adversaries have so far proved to be sufficiently strong for modern cryptography, but 
 recent developments in quantum computing showed the existence of an algorithm that can factorise
integers in polynomial time if run on a quantum computer, Shor's algorithm \cite{Sho94}. That means that once sufficiently large quantum computers are in use, the implemented cryptographic protocols will become vulnerable.
Faced with this increasingly real danger, cryptographers are trying to develop new approaches to achieve security. 

\footnotetext{Integer factorisation is a widely used computational hardness assumption in cryptographic protocols, for example in RSA \cite{RSA78}. 
So far there exists no algorithm that can solve the problem of factoring a large integer into products of smaller number (usually primes) on a classical computer in polynomial time.}

\subsection{Quantum Cryptography}

In the early 1970's Wiesner proposed the idea of using two-state quantum-mechanical systems, such as polarised photons, to encode and transmit messages \cite{Wie83}. 
Motivated by Heisenberg's uncertainty principle he showed that it is possible to use two ``conjugate observables'', linear and circular polarisation of photons, to ``transmit two messages 
either but not both of which may be received''. This important result remained unpublished for a decade, but set the basis of a new form of cryptography that no longer relies on the 
computational limitation of an adversary to achieve security.
Quantum cryptography is solely based on the assumption that the laws of quantum mechanics model nature accurately to achieve security.
 
Although the first steps of quantum cryptography passed almost unnoticed\footnotemark{}  Brassard and Bennett used Wiesner's idea of ``conjugate coding'' to achieve something previously thought impossible.
The quantum key distribution protocol first developped by Bennett and Brassard and later Ekert \cite{BB84, Eke91, BBE92} that 
allows two users to exchange a secret key over a public quantum communication channel that is eavesdropped on. 
The strength of this quantum protocol lies in the fact that the users are able to detect an eavesdropper who is trying to obtain their key, since measuring a quantum state disturbs its original state.

\footnotetext{For a very enjoyable brief account of these first steps refer to \cite{Bra06}. }

Following this important success in quantum cryptography, the horizons of cryptography broadened and the quest to implement more cryptographic tasks such as secure multi-party quantum computation relying on quantum phenomena to achieve security began. 

Finally it is important to mention post-quantum cryptography as another approach to face the potential threat of quantum computers for the currently implemented cryptographic protocols.
It is the field of search for classical cryptographic assumptions that cannot be broken efficiently by quantum or classical computers \cite{BBD09}.

\section{Secure Two-Party Computation}

We have seen that for a long time cryptographers focused on the problem of transmitting secret messages.
One further problem of cryptography introduced by Yao in \cite{Yao82} is that of secure multi-party computation.
That is the problem where a number of $N$ players each of whom holds an input $x_1,\dots,x_N$ want to evaluate a function of all their inputs, $f(x_1,\dots,x_N)$ correctly without disclosing information
about their respective inputs. This is not only an interesting cryptographic problem, but one that leads to a number of useful applications such as secret voting, oblivious negotiation, private querying of database. 

While Yao introduced the problem of secure multi-party computation, in \cite{Yao82} he mainly focused on the two-party case. That is the problem of
two mutually distrustful parties correctly computing a function without revealing their inputs to each other.

In this thesis we will focus on one problem of two-party computation, namely secure password-based identification:
A user Alice identifies herself to a server Bob by securely evaluating the equality function on their inputs (or passwords).
In the literature this is often refered to as the ``socialist millionaire problem", a variant of the ``millionaire problem''\footnotemark[1],
in which the two millionaires want to determine if they are equally rich, without revealing any information about their actual wealth to each other \cite{Yao82}.

\footnotetext[1]{The millionaire or Yao's millionaire problem is a classic secure multi-party computation problem in which two millionaires want to determine who 
is richer without disclosing any information about their wealth to each other.}

\subsection{Bit Commitment \& Oblivious Transfer}

In this section we focus on two similar but fundamental two-party computation problems, \textit{bit commitment} and \textit{oblivious transfer}, their history and their importance.

Bit commitment schemes consist of two phases, the \emph{commit phase} where the sender Alice chooses the value of a bit and commits to it in the sense that it cannot be changed later and a \emph{reveal phase} during which the hidden value of the bit is revealed and before which the receiver Bob has no information about the value of the bit.

Oblivious transfer is the transfer of information in such a way that the sender does not know what information the receiver obtains. We will give a brief overview of its origin and its importance in secure two-party computation.

The term was coined by Rabin in \cite{Rab81}, where he introduced what is now known as Rabin OT, a protocol where one user Alice sends a message and another user Bob does or does not receive it with equal probability, while Alice
remains oblivious of the reception of the message, this is often refered to as secure erasure channel.

A similar notion was introduced in the first paper on quantum cryptography ``Conjugate Coding'', where Wiesner describes ``a means for transmitting two messages, either but not both of which may be received"  \cite{Wie83}. 
This was later rediscovered by Even, Goldreich and Lempel \cite{EGL85} and named \textit{one-out-of-two oblivious transfer} and denoted as \OT. Intuitively it can be thought of as a black box in which a user Alice can store two messages and another user Bob can choose 
to receive the first or second message but learns no further information about the message he does not receive. Furthermore it fulfills the condition for oblivious transfer, namely that Alice does not know which message Bob received.

A few years later, Cr\'epeau \cite{Cre88} proved that these two flavours of oblivious transfer are equivalent.
In the same year Kilian \cite{Kil88} proved that the \OT primitive is complete for two-party computation. This surprising result meant that a secure \OT construction is sufficient to implement any two-party computation,
making it a fundamental cryptographic problem. Moreover from the results of \cite{Kil88,Cre88} a \OT protocol can be used to implement bit commitment.
 Although a classical protocol was already introduced by Even, Goldreich and Lempel \cite{EGL85} it relies on computational assumptions that are insecure against a quantum adversary.
After the early success of quantum cryptography, 
 research focused on the problem of constructing unconditionally secure bit commitment schemes \cite{BC91,BCJL93} and oblivious transfer or \OT primitives \cite{BBCS92,Cre94}.

Despite these first results, hope to achieve unconditionally secure quantum bit commitment vanished as doing so was proved to be impossible
in a quantum setting in \cite{May96, LC97}. As discussed above, since an \OT primitive can be used to implement bit commitment, the impossiblity result for bit commitment 
implies that \OT is also impossible. In \cite{Lo97}, Lo proved that all quantum one-sided two-party computations, including \OT are insecure.
Furthermore Colbeck in \cite{Col07} and Buhrman et al. in \cite{BCS12} showed that secure two-party computation is impossible to achieve in a fully quantum setting.

One way to circumvent these impossibility results is to impose realistic restrictions on the users.
In the literature there are two successful models that do so, the bounded-quantum-storage model\cite{DFSS07,DFSS08} that upper bounds the size of quantum memory of the users 
and the noisy-storage model\cite{WST08, KWW12, Sch10} that assumes that the quantum memory used is imperfect. Under the assumption of bounding the quantum storage of a user, unconditionally
secure oblivious transfer, \OT and thus two-party computation can be achieved \cite{DFSS07, DFSS08}.

\section{One-Time Memories In The Isolated Qubits Model} \label{intro:OTM}
In 2013 Liu \cite{Liu14a} suggested a further alternative to the memory-restricting models discussed in the previous section, the \textit{isolated qubits model},
where all parties are restricted to the use of
\textit{local operations} on each qubit and \textit{classical communication}(LOCC). The restriction to local quantum operations on each qubit means that the users are not allowed to perform entangling operations on the isolated qubits.
 The model is motivated by experimental work on nitrogen vacancy centers in diamond that can be read out and manipulated optically while at the same time it is difficult to perform entangling operations on pairs of such centers.
 We discuss the isolated qubits model in more detail in \cref{Chapter2}.

 A \textit{one-time memory} (OTM) is a protocol or cryptographic device in which Alice stores two messages and sends it to Bob, who is then able to retrieve only one of the two messages.
 In essence it is a non-interactive or one-way \OT, but we will discuss their difference in more detail in \sref{OTMandOT}
 Liu showed that it is possible to build an imperfect OTM in the IQM that leaks a fraction of information about the unreceived message \cite{Liu14a, Liu14b}. 
 Furthermore, Liu recently showed that it is possible to use privacy amplification in order to achieve a secure OTM for a single bit \cite{Liu15}. 
 
 A significant difference between the isolated qubits model and the noisy- and bounded-quantum-storage models is that the parties are not forced to measure the qubits soon after reception, rather they are allowed to store the qubits for an indefinite amount of time.
 This means that the users are allowed to take advantage of any further information shared between them at a later point to decide on their measurement strategy. 
 On the other hand the noisy- and bounded-quantum-storage models allow entangling operations between the users which is not allowed in the isolated qubits model. 
 In this sense the memory-restricting and isolated qubits models are complementary, which is reflected in the fact that protocols that are secure in one model are not secure in the other. 
 Protocols in the noisy- or bounded-quantum-storage model are insecure in the isolated qubits model in which the adversary has access to unlimited and perfect storage of isolated qubits.
 The opposite is also true since the protocol presented in \cite{Liu14b} is not secure against an adversary that can perform entangling operations.
 We will discuss this in more detail in \sref{model-diff}.

\section{Our Contributions}

 In this thesis, we study the constructions of ``leaky'' string and secure single-bit one-time memories in the isolated qubits model (IQM) introduced in \cite{Liu14a, Liu14b, Liu15}.
 Using non-linear degenerate functions \cite{DFSS06} we simplify the proof presented in \cite{Liu15}. We then construct and prove the security of a string
 \textit{one-out-of-two sender-randomised oblivious transfer}, \ROT, protocol in this model.
 
 Relying on the construction of a secure string \ROT protocol we study for the first time interactive protocols for more complex two-party functionalities in the IQM.
 In order to do so, we assume that all parties measure the qubits they receive at the end of each sub-protocol, which allows us to construct composed protocols.
 First, we construct a \OT protocol that makes use of one instance of a \ROT functionality and prove its security.
 We then construct a weak but efficient \textit{sender-randomised one-out-of-$k$ oblivious transfer}, \sROT, protocol. 
 Finally, we construct a protocol that implements the password-based identification functionality securely, relying on a secure \sROT.

 Moreover, we study the possibility to construct protocols that implement the password-based identification functionality securely and non-interactively.
 We prove that such an implementation is impossible relying only on one-way transmission or even oblivious transfer of messages and qubits from Alice to Bob.
 
 \section{Outline Of The Thesis}

In Chapter \ref{Chapter2}, we introduce notation, the basic concepts from cryptography as well as the model we use in this thesis.
In Chapter \ref{Chapter3}, we extend bit one-time memories introduced in \cite{Liu14a,Liu14b,Liu15} to string \ROT{s} using results from \cite{DFSS06}.
In Chapter \ref{Chapter4}, we study more complex two-party functionalities that make use of multiple instances of the \ROT{s} constructed in Chapter \ref{Chapter3}. 
Firstly, we construct a \OT protocol that makes use of one \ROT functionality.
Secondly, we study  a \kOT protocol presented in \cite{BCR86} that makes use of $k$ \OT{s}. Finally, we present a construction for a weaker but more efficient \sROT protocol that uses only $\log{k}$ \ROT{s}.  
In Chapter \ref{Chapter5}, we prove that constructing a non-interactive identification protocol is impossible even using secure \kOT functionalities.
We then propose a protocol to achieve secure password-based identification and prove its security using the secure \sROT constructed in Chapter \ref{Chapter4}.
In the last Chapter \ref{Chapter6} we summarise our results and discuss their significance.
\chapter{Preliminaries}
\label{Chapter2}
\lhead{Chapter 2. \emph{Preliminaries}}

In this chapter, we introduce notation and the basic tools 
that we will use in this thesis.

We assume some familiarity with basic probability theory and quantum information theory. 
A brief overview of the probability theory notions used in this thesis can be found in Appendix~\ref{AppendixA} and
for an indepth introduction to quantum information theory we refer the reader to \cite{NC00}.

\section{Basic Notation}

We use uppercase letters such as $X,Y,Z$ to denote random variables, calligraphic letters $\Xcal,\Ycal,\mathcal{Z}$ to denote sets and lowercase letters $x,y,z$ to denote a specific value of a random variable.
Furthermore, for a sequence of random variables $X_1,\dots,X_k$ we write $\overline{X_i}$, with $i \in \{1,\dots,k\}$ to denote the sequence $X_1,\dots,X_k$ excluding $X_i$.

Moreover, we introduce the symbol $\pr_{X \lr Y \lr Z}$, as used in \cite{DFSS07} and \cite{FS09}, to denote that the distribution of a random variable $X$ is 
independent of a random variable $Z$ given a random variable $Y$:

\begin{align}
 \pr_{X|YZ} = \pr_{X|Y},
\end{align}
 
 we then write:

\begin{align}
 \pr_{XYZ} = \pr_{X \lr Y \lr Z}.
\end{align}

This notation is extended to $\pr_{XYZ|\cal{E}} = \pr_{X \lr Y \lr Z | \cal{E}}$
to denote that the distribution of a random variable $X$ is independent of a random variable $Z$ given a random variable $Y$ conditioned on an event $\cal{E}$:

\begin{align}
 \pr_{X|YZ\cal{E}} = \pr_{X|Y\cal{E}}.
\end{align}

Finally, the \emph{smoothed min-entropy} of a random variable $X$ conditioned on a random variable $Y$ is denoted by $H^\eps_{\infty}(X|Y)$. 
For more information we refer the reader to Appendix~\ref{AppendixB}.

For any matrix $A \in \CC^{m \times n}$ and vector $x \in \CC^n$ we use $\|A\|,\|A\|_F$ and $\|A\|_{\Tr}$ to denote the operator, the Frobenius and the trace norm respectively. Further information
on these norms is included in Appendix~\ref{AppendixC}.

A brief overview of the Bachmann-Landau symbols:

We write $f(k)=O(g(k))$ if $\exists c>0 ,\,  \exists k_0 , \, \forall k>k_0 : \,  |f(k)|\leq c |g(k)|$.

We write $f(k)=o(g(k))$ if $\forall c>0 ,\, \exists k_0 , \, \forall k>k_0 :\, |f(k)|\leq c |g(k)|$.

We write $f(k)=\Omega(g(k))$ if $\exists c>0 ,\, \exists k_0 , \, \forall k>k_0 :\, |f(k)|\geq c |g(k)|$.

We write $f(k)=\Theta(g(k))$ if $\exists c_1>0 ,\, \exists c_2 >0 ,\, \exists k_0 , \, \forall k>k_0 : \,  c_1 |g(k)| \leq |f(k)| \leq c_2 |g(k)|$.

\section{Functions}

In this section we give a brief overview of special families of functions that we use in this thesis:
 
\subsection{Non-Degenerate Linear Functions}

Non-degenerate linear functions are functions that depend non-trivially on their inputs and are defined in \cite[Definition~4.2]{DFSS06} as follows:

\begin{definition}
\label{def:ndlf}
 A function $\beta: \{0,1\}^\ell \times \{0,1\}^\ell \mapsto \set{0,1}$ is called a non-degenerate linear function if it is of the form:
 \begin{align}
  \beta: (s_0,s_1)\mapsto <u_0,s_0> \xor <u_1,s_1>
 \end{align}

 for two non-zero strings $u_0, u_1 \in \{0,1\}^\ell$, where $<\cdot,\cdot>$ is the bit-wise inner product defined as:
 \begin{align}
  <a,b>=\XOR_{i=1}^{\ell}a_i\cdot b_i
 \end{align}
\end{definition}

We further mention the definition of a more relaxed notion.

\begin{definition}{\cite[Definition~4.3]{DFSS06}}\label{2-balanced1}
A binary function  $\beta: \{0,1\}^\ell \times \{0,1\}^\ell \mapsto \set{0,1}$ is called \emph{2-balanced} if for any $s_0, s_1 \in \{0,1\}^\ell$ the functions
$\beta(s_0,\cdot)$ and $\beta(\cdot,s_1)$ are balanced, meaning that $|\{ \sigma_1 \in \{0,1\}^\ell : \beta(s_0,\sigma_1)=0| = 2^\ell/2 \}$ and
$|\{ \sigma_0 \in \{0,1\}^\ell : \beta(\sigma_0,s_1)=0| = 2^\ell/2 \}$.
\end{definition}

Finally we note the following result that will allow us to use the fact that for any string $s_i$ the functions $\beta(s_i,\cdot)$ and $\beta(\cdot,s_i)$ are balanced in the proof of Lemma~\ref{lem:oneptsec} in \sref{balanced}

\begin{lemma}{\cite[Lemma~4.4]{DFSS06}} \label{2-balanced2}
 Every non-degenerate linear function is 2-balanced.
\end{lemma}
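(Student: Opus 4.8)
The plan is to reduce the statement to the elementary fact that every non-zero binary linear functional on $\{0,1\}^\ell$ is balanced, i.e.\ takes each of the values $0$ and $1$ on exactly $2^\ell/2$ inputs. So let $\beta\colon (s_0,s_1)\mapsto <u_0,s_0>\xor<u_1,s_1>$ be a non-degenerate linear function, with $u_0,u_1\neq 0$ as in Definition~\ref{def:ndlf}. First I would fix an arbitrary $s_1\in\{0,1\}^\ell$ and observe that $\beta(\cdot,s_1)$ is the map $\sigma_0\mapsto <u_0,\sigma_0>\xor c$, where $c:=<u_1,s_1>$ is a constant bit. Since XOR-ing by a fixed bit is a bijection of $\set{0,1}$ that swaps the preimages of $0$ and $1$, the function $\beta(\cdot,s_1)$ is balanced iff $\sigma_0\mapsto <u_0,\sigma_0>$ is; and by the symmetric reasoning the same reduction applies to $\beta(s_0,\cdot)$ for any fixed $s_0$, with $u_1$ in place of $u_0$.

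It then remains to show that $\sigma\mapsto<u,\sigma>$ is balanced whenever $u\neq 0$. Here I would pick an index $j$ with $u_j=1$ --- which exists precisely because $u$ is non-zero --- and use the involution on $\{0,1\}^\ell$ that flips the $j$-th coordinate. This is a fixed-point-free bijection, and flipping coordinate $j$ changes $<u,\sigma>$ by exactly $u_j=1$; hence it matches each $\sigma$ with $<u,\sigma>=0$ to a distinct $\sigma'$ with $<u,\sigma'>=1$ and vice versa, so each preimage has size $2^{\ell-1}=2^\ell/2$. (Equivalently: $\ker<u,\cdot>$ is a subspace of dimension $\ell-1$.) Together with the reduction of the first paragraph this shows $\beta(s_0,\cdot)$ and $\beta(\cdot,s_1)$ are balanced for all $s_0,s_1$, which is exactly 2-balancedness in the sense of Definition~\ref{2-balanced1}.

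I do not expect a real obstacle: once phrased this way the proof is a one-line dimension count for a non-zero binary linear functional. The only thing to be careful about is the book-keeping --- checking that after fixing one argument of $\beta$ what remains in the other argument is still a \emph{non-zero} linear functional, which is exactly where the hypothesis that both $u_0$ and $u_1$ are non-zero gets used.
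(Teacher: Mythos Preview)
Your proof is correct and is the standard elementary argument. Note however that the paper does not actually prove this lemma: it is quoted as \cite[Lemma~4.4]{DFSS06} without proof, so there is no ``paper's own proof'' to compare against. Your argument --- reducing to the balancedness of a non-zero linear functional $\sigma\mapsto\langle u,\sigma\rangle$ via the bit-flip involution on a coordinate where $u$ is nonzero --- is exactly the natural one and would be accepted in any context.
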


\subsection{$t$-wise Independent Hash Functions}

We introduce the definition of $t$-wise independent hash functions as defined in \cite{Liu15}.

\begin{definition}
 Let $\calH$ be a family of functions $h:\{1,\dots,N\} \mapsto \{1,\dots,M\}$ and $H$ be a function chosen uniformly at random from $\calH$.
 We call $\calH$ a family of $t$-wise independent functions if for all subsets $S \subset \{1,\dots, N\}$ of size $|S| \leq t$, where $t \geq 1$ is an integer,
 the random variables $\{H(x)|x\in S\}$ are independent and uniformly distributed in $\{1,\dots,M\}$.
\end{definition}

Note that sampling and applying a random function from a family of $t$-wise independent hash functions can be done efficiently (\cite[Proposition~2.5]{Liu15}).

We present a large-deviation bound for quadratic functions of $2t-$wise independent random variables \cite[Proposition~2.7]{Liu15}:

\begin{proposition}
 \label{prop:large-bound}
 
 Let $t \geq 2$ be an even integer, and let $\calH$ be a family of $2t$-wise independent functions $\{1,\dots,N\} \mapsto \{0,1\}$.
 Let $A \in \RR^{N\times N}$ be a symmetric matrix, $A^T=A$. Let $H$ be a function chosen uniformly at random from $\calH$, and define the random variable
 \begin{align}
  S= \sum_{x,y=1}^{N} A_{xy} \left( \left(-1\right)^{H(x)}  \left(-1\right)^{H(y)} - \delta_{xy} \right),
 \end{align}

 where $\delta_{xy}$ is the Kronecker $\delta$ that equals 1 if $x=y$ and 0 otherwise.
 
 Then the expected value of $S$ is $\EE[S]=0$ and we have the following large-deviation bound: for any $\lambda > 0$,
 
 \begin{align}
  \pr(|S|\geq \lambda) \leq 4 e^{\frac{1}{6t}} \sqrt{\pi t} \left( \frac{4 \|\Atil \|_F^2 t }{e \lambda^2} \right)^\frac{t}{2} 
  + 4 e^{\frac{1}{12t}} \sqrt{2\pi t} \left( \frac{8 \|\Atil \|^2 t }{e \lambda} \right)^t,
 \end{align}

 where $\Atil \in \RR^{N\times N}$ is the entry-wise absolute value of $A$, that is $\Atil_{xy} = |A_{xy}|$.
\end{proposition}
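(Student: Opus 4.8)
The plan is to run the classical moment method. Because $t$ is even we have $S^t\ge 0$, so Markov's inequality gives $\pr(|S|\ge\lambda)\le\EE[S^t]/\lambda^t$, and the entire task reduces to a sufficiently sharp upper bound on the $t$-th moment of $S$. Set $\epsilon_x:=(-1)^{H(x)}$; then $(\epsilon_x)_{x=1}^N$ are $2t$-wise independent and each uniform on $\{-1,+1\}$, and since $\epsilon_x^2=1$ the summand $\epsilon_x\epsilon_y-\delta_{xy}$ equals $\epsilon_x\epsilon_y$ when $x\ne y$ and $0$ when $x=y$. Hence $S=\sum_{x\ne y}A_{xy}\epsilon_x\epsilon_y$ is a centred quadratic Rademacher form (it equals $\epsilon^{\top}A\epsilon-\Tr(A)$ with $\epsilon$ read as a vector), and in particular $\EE[S]=\sum_{x\ne y}A_{xy}\EE[\epsilon_x]\EE[\epsilon_y]=0$ by pairwise independence, which is available since $t\ge 2$. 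Note finally that every monomial appearing in $S^t$ involves at most $2t$ of the $\epsilon_x$, so $2t$-wise independence is exactly enough to compute $\EE[S^t]$ as if the $\epsilon_x$ were fully independent.

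With this in hand I would expand $\EE[S^t]=\sum A_{x_1y_1}\cdots A_{x_ty_t}\,\EE[\epsilon_{x_1}\epsilon_{y_1}\cdots\epsilon_{x_t}\epsilon_{y_t}]$ over all $t$-tuples of index pairs with $x_i\ne y_i$. The expectation factorises over the distinct indices present, and $\EE[\epsilon_z^{d}]$ is $1$ when $d$ is even and $0$ otherwise, so only those tuples survive in which every index occurs an even number of times among $x_1,y_1,\dots,x_t,y_t$. Regarding the $t$ unordered pairs $\{x_i,y_i\}$ as the edge multiset of a graph on $\{1,\dots,N\}$, the surviving tuples are exactly those whose underlying multigraph has all even degrees, hence decomposes into edge-disjoint closed walks; I would group the tuples by the \emph{cycle type} of this decomposition, i.e.\ a partition of $\{1,\dots,t\}$ into blocks of sizes $\ell_1,\dots,\ell_m$ summing to $t$. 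A block of size $\ell$, after summing out the corresponding indices (and crudely ignoring cross-block index coincidences, which can only shrink the sum), contributes a term of absolute value at most $\Tr(\Atil^{\ell})\le\|\Atil\|_F^2\,\|\Atil\|^{\ell-2}$. Counting partitions and the cyclic orderings inside blocks then produces a bound of the shape $\EE[S^t]\le\sum_{\ell_1+\cdots+\ell_m=t}(\text{combinatorial factor})\prod_j\|\Atil\|_F^2\,\|\Atil\|^{\ell_j-2}$.

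The last step is to evaluate and repackage this sum, and the two terms of the stated inequality are its two extreme regimes. Taking all blocks of size $2$ (the $t$ edges matched into doubled edges) gives $(t-1)!!$ copies of $\bigl(\sum_{x,y}\Atil_{xy}^2\bigr)^{t/2}=\|\Atil\|_F^{t}$, which after dividing by $\lambda^t$ and applying Stirling becomes the $\bigl(\tfrac{4\|\Atil\|_F^2 t}{e\lambda^2}\bigr)^{t/2}$ term; the opposite extreme of a single long cycle ($m=1$, $\ell_1=t$) gives, again after Stirling, the $\bigl(\tfrac{8\|\Atil\|^2 t}{e\lambda}\bigr)^{t}$ term; and every intermediate cycle type is absorbed into these two by a crude over-count, which is why a slack constant ($4$ rather than $1$) appears. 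Turning $(t-1)!!$, $(t/2)!$ and $t!$ into powers of $t/e$ via the two-sided estimate $\sqrt{2\pi m}\,(m/e)^m\le m!\le\sqrt{2\pi m}\,(m/e)^m e^{1/(12m)}$ is what produces the prefactors $4e^{1/(6t)}\sqrt{\pi t}$ and $4e^{1/(12t)}\sqrt{2\pi t}$ (the $\sqrt{\pi t}=\sqrt{2\pi(t/2)}$ tracking the $(t/2)!$ hidden inside $(t-1)!!=t!/(2^{t/2}(t/2)!)$). The main obstacle is exactly this middle stage: enumerating the even multigraphs by cycle type, bounding each type uniformly by a trace of a power of $\Atil$ while handling shared vertices, and then controlling the resulting multinomial sum tightly enough that the two dominant regimes really do dominate and the advertised constants come out. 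The surrounding steps --- Markov, the independence reduction, and the Stirling bookkeeping --- are routine, and one could even shortcut the combinatorics by quoting an off-the-shelf Hanson--Wright / polynomial-concentration moment inequality for $S$, though that would hide precisely the work needed to land on these explicit constants.
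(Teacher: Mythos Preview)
The paper does not prove this proposition; it is quoted verbatim as \cite[Proposition~2.7]{Liu15} and used as a black box. So there is no ``paper's own proof'' to compare against here.

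That said, your outline is the standard route and is essentially the argument Liu gives in the cited reference: bound $\EE[S^t]$ by expanding over $t$-tuples of edges, keep only the even-degree multigraphs, decompose into closed walks, bound each cycle of length $\ell$ by $\Tr(\Atil^{\ell})\le \|\Atil\|_F^2\|\Atil\|^{\ell-2}$, and then sum over cycle types with Stirling to obtain the two displayed terms. Your identification of the two extreme regimes (all $2$-cycles versus one long cycle) as the sources of the $\|\Atil\|_F$ and $\|\Atil\|$ terms is correct, as is the observation that $2t$-wise independence is exactly what is needed to compute $\EE[S^t]$. The part you flag as the main obstacle --- controlling the sum over intermediate cycle types so that it is absorbed into the two extremes with the stated constants --- is indeed where the work lies, and your sketch does not carry it out; but as a plan it matches the original proof.
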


\section{Functionalities \& Protocols}

An ideal functionality formally describes a cryptographic task, detailing the behaviours of honest and dishonest parties.
A protocol is a series of clearly defined instructions that the (honest) parties follow.
Finally we define the security for a protocol, describing the conditions that need to be fulfilled in order for a protocol to implement a functionality securely.

In this section, we introduce the ideal functionalities of \OT, \ROT, \kOT, \kROT, \sROT and password-based identification as well as equivalent security definitions that we will use in the following chapters.

\subsection{\OT}

First, we formally define the \OT functionality, that we discussed in \cref{Chapter1}, that allows two parties to share one out of two messages such that the sender is oblivious as to which message has been received, while
the receiver has no knowledge of the second message.

\begin{framed}
 \begin{func}\label{func:OT}
 Upon receiving input messages $A_0, A_1 \in \mathcal{X}$ from Alice, where $\mathcal{A}=\{0,1\}^l$ and the choice bit $D \in \{0,1\}$ from Bob, $\F_{\text{\OT}}$
 outputs $A_D$ to Bob and outputs nothing to Alice.
\end{func}
\end{framed}

Commonly security of a protocol is proven by showing that a real protocol is indistinguishable from the ideal functionality. 
However there exists an alternative approach, \cite[Proposition~4.3]{FS09} allows us to use an equivalent security definition. If a protocol fulfills this definition, 
then it securely implements the ideal functionality.

\begin{definition}\label{def:secot}
A $\eps-$secure \OT proocol is a protocol between Alice with inputs $A_0,A_1 \in \Acal$ and Bob with input $D\in\{0,1\}$ such that the following holds:
\begin{description}
  
\item [Correctness:] For honest user Alice and honest server Bob, for any distribution of Alice's inputs  $A_0,A_1 \in \Acal$ and Bob's input $D \in \{0,1\}$,
Alice gets no output and Bob receives output $G=A_D$, except with probability $\eps$.

\item [Security for Alice:] For any dishonest server Bob with output $G'$, there exists a random variable $D' \in \{0,1\}$ such that:

\begin{align}
 \pr_{D'A_0 A_1} \appe \pr_{D'}\cdot \pr_{A_0 A_1}
\end{align}

and 

\begin{align}
 \pr_{G'A_{D'}D'A_{1-D'}} \appe \pr_{G'|A_{D'}D'} \cdot \pr_{A_{D'}D'A_{1-D'}} \label{def-cond1}
\end{align}

\item [Security for Bob:] For any dishonest user Alice with output $V'$,  there exists random variables $A'_0,A'_1$ such that:

\begin{align}
 \pr[G=A'_D] \geq 1 - \eps,
\end{align}

and 
\begin{align}
  \pr_{D V'A'_0 A'_1} \appe \pr_{D} \cdot \pr_{V' A'_0 A'_1}
\end{align}

\end{description}
\end{definition}

\subsection{\ROT}

While \OT is a powerful tool we present a different oblivious transfer functionality, the randomised one-out-of-two oblivious transfer \ROT.
Contrary to the \OT Alice does not input two messages but receives two random messages from the functionality while Bob receives one out of the two messages depending on his input choice.
We present the formal definition of the \ROT functionality.

\begin{framed}
 \begin{func}\label{func:ROT}
 Upon receiving no input from Alice and the choice bit $D \in \{0,1\}$ from Bob, $\F_{\text{\ROT}}$
 outputs messages $A_0, A_1 \in \Acal$, where $\Acal =\{0,1\}^\ell$ to Alice and message $A_D$ to Bob.
\end{func}
\end{framed}

Furthermore, we introduce an equivalent security definition that protocols that securely implement the \ROT functionality should fulfill.

\begin{definition}\label{def:secrot}
A $\eps-$secure \ROT proocol is a protocol between Alice with no input and Bob with input $D\in\{0,1\}$ such that the following holds:
\begin{description}
  
\item [Correctness:] For honest user Alice and honest server Bob, for any distribution of Bob's input $D \in \{0,1\}$,
Alice receives output $A_0,A_1 \in \Acal$ and and Bob receives output $G=A_D$,  except with probability $\eps$.

\item [Security for Alice:] For any dishonest server Bob with output $G$, there exists a random variable $D' \in \{0,1\}$ such that:

\begin{align}
 \pr_{A_{1-D'}GA_{D'}D'} \appe \pr_{U} \cdot \pr_{GA_{D'}D'}
\end{align}

\item [Security for Bob:] For any dishonest user Alice with output $V'$,  there exists random variables $A'_0,A'_1$ such that:

\begin{align}
 \pr[G=A'_D] \geq 1 - \eps,
\end{align}

and 
\begin{align}
  \pr_{D V'A'_0 A'_1} \appe \pr_{D} \cdot \pr_{V' A'_0 A'_1}
\end{align}

\end{description}
\end{definition}

\subsection{\kOT}

In this section, we focus on a generalised oblivious transfer functionality that takes $k$ inputs instead of two, the 1-out-of-k Oblivious Transfer, denoted as \kOT.
It is a two-party functionality between a user Alice that inputs $k$ messages $X_1,X_2,\dots,X_k$ and a user Bob who is allowed to retrieve
only one of these messages $X_D$ according to his choice $D$. 
When the above functionality is implemented securely, Bob should not be able to learn additional information on any of the other messages. At the same time, the obliviousness 
of the protocol must still hold, Alice should not have any knowledge about the choice of Bob.

The formal definition of the \kOT functionality is the following:

\begin{framed}
 \begin{func}\label{func:kOT}
 Upon receiving input messages $X_1, \dots, X_k \in \mathcal{X}$ from Alice, where $\mathcal{X}=\{0,1\}^l$ and the choice $D \in \{1,\dots,k\}$ of Bob, $\F_{\text{\kOT}}$
 outputs $X_D$ to Bob and outputs nothing to Alice.
\end{func}
\end{framed}

We now introduce an equivalent security definition for the \kOT functionality.

\begin{definition}
\label{def:seckot}
A $\eps-$secure \kOT proocol is a protocol between Alice with inputs $X_1,\dots,X_k \in \Xcal$ and Bob with input $D\in\{1,\dots,k\}$ such that the following holds:
\begin{description}
  
\item [Correctness:] For honest user Alice and honest server Bob, for any distribution of Alice's inputs  $X_1,\dots,X_k \in \Xcal$ and Bob's input $D \in \{1,\dots,k\}$,
Alice gets no output and Bob receives output $G=X_D$, except with probability $\eps$.

\item [Security for Alice:] For any dishonest server Bob with output $G'$, there exists a random variable $D' \in \{1,\dots,k\}$ such that:

\begin{align}
 \pr_{D'X_1\dots X_k} \appe \pr_{D'} \cdot \pr_{X_1 \dots X_k}
\end{align}

and

\begin{align}
 \pr_{G'X_{D'}D'\overline{X_{D'}}} \appe \pr_{G'|X_{D'}D'} \cdot \pr_{ X_{D'}D'\overline{X_{D'}}}
\end{align}

\item [Security for Bob:] For any dishonest user Alice with output $V'$,  there exist random variables $X'_1, \dots X'_k$ such that:

\begin{align}
 \pr[G=X'_D] \geq 1 - \eps,
\end{align}

and 
\begin{align}
  \pr_{DV'X'_1 \dots X'_k} \appe \pr_{D} \cdot \pr_{V'X'_1 \dots X'_k}
\end{align}

\end{description}
\end{definition}

\subsection{\kROT}

In this section we introduce a slightly different flavour of the \kOT, where the user Alice does not input messages $X_1,\dots,X_k$ but instead has no inputs and receives as ouptut $k$ random messages $S_1,\dots,S_k$.
This functionality is defined formally below:

\begin{framed}
 \begin{func}\label{func:kROT}
 Honestly behaving Alice and Bob: Upon receiving no input from Alice and a choice  $D \in \{1,\dots,k\}$ from Bob, $\F_{\text{\kROT}}$ samples random independent strings $S_1,\dots,S_k \in \Scal =\{0,1\}^\ell$ and sends $S_1,\dots,S_k$ to Alice and $S_D$ to Bob.
 
 Honest Alice and dishonest Bob: Upon receiving no input from Alice, a choice $D \in \{1,\dots,k\}$ and a string $S_D \in \Scal$ from Bob, $\F_{\text{\kROT}}$ samples random independent strings $\overline{S_D} \in \Scal$, and sends $S_1,\dots,S_k$ to Alice.

 Dishonest Alice and honest Bob: Upon receiving input messages $S_1, \dots, S_k \in \mathcal{S}$ from Alice, where $\mathcal{S}$ and the choice $D \in \{1,\dots,k\}$ of Bob, $\F_{\text{\kROT}}$
 outputs $S_D$ to Bob and outputs nothing to Alice.
\end{func}
\end{framed}

We introduce the security definition for the \kROT functionality.

\begin{definition}\label{def:seckrot}
The sender-randomised \kROT is secure if the following conditions are fulfilled:
\begin{description}
  
\item [Correctness:] For honest user Alice and honest server Bob, for any distribution of Bob's input $D$, Alice gets outputs $S_1,\dots,S_k \in \Scal$ uniform and independent of $D$ and Bob receives output $S_D$, except with probability $\eps$.

\item [Security for Alice:] For any dishonest server Bob with output $G'$, there exists a random variable $D' \in \{1,\dots,k\}$ such that:

\begin{align}
 \pr_{\overline{S_{D'}}S_{D'}D'G'} \appe \pr_{U^{k-1}} \cdot \pr_{S_{D'}D'G'}
\end{align}

\item [Security for Bob:] For any dishonest user Alice with output $V'$,  there exist random variables $S_1',\dots, S_k'$ such that:

\begin{align}
 \pr[G=S_D']\geq 1 - \eps,
\end{align}

and 
\begin{align}
  \pr_{DV'S_1',\dots,S_k'} \appe \pr_{D} \cdot \pr_{V'S_1',\dots,S_k'}
\end{align}

\end{description}
\end{definition}

Finally we introduce the security definition for a slightly weaker \kROT functionality that we call \sROT.

\begin{definition}\label{def:secsrot}
The sender-randomised \sROT is $\eps-$secure if the following conditions are fulfilled:
\begin{description}
  
\item [Correctness:] For honest user Alice and honest server Bob, for any distribution of Bob's input $D$, Alice gets outputs $S_1,\dots,S_k \in \Scal$ uniform and independent of $D$ and Bob receives output $S_D$, except with probability $\eps$.

\item [Security for Alice:] For any dishonest server Bob with output $G'$, there exists a random variable $D' \in \{1,\dots,k\}$ such that for all $I \neq D'$:

\begin{align}
 \pr_{S_{I}S_{D'}D'G'} \appe \pr_{U} \cdot \pr_{S_{D'}D'G'}
\end{align}

\item [Security for Bob:] For any dishonest user Alice with output $V$,  there exist random variables $S_1',\dots, S_k'$ such that:

\begin{align}
 \pr[G=S_D']\geq 1 -\eps,
\end{align}

and 
\begin{align}
  \pr_{DV'S_1',\dots,S_k'} \appe \pr_{D} \cdot \pr_{V'S_1',\dots,S_k'}
\end{align}

\end{description}
\end{definition}

The \sROT is weaker since although every message that does not correspond to Bob's input remains hidden, this is not true for all messages simultaneously.
While weaker, the \sROT functionality is strong enough to construct a secure password-based identification protocol as we will show in \cref{Chapter5}.
Furthermore the \sROT protocol we present in \cref{Chapter4} is more efficient than the \kROT and \kOT protocols, as it makes use of $\log{k}$ instead of $k$ \OT{s}.

\subsection{Password-Based Identification}

  We define the functionality of identification, where a user Alice identifies herself to a server Bob by securely evaluating the equality function on their inputs, called passwords. Our definition is motivated by \cite{FS09}.
  
  \begin{framed}
   \begin{func}\label{func:secid}
 Upon receiving strings $W_{\A} \in \mathcal{W}$ from user Alice, where $\mathcal{W}:=\{1,\dots,k\}$, and $W_{\B} \in \mathcal{W}$ from server Bob, $\F_{ID}$ outputs the bit
 $G=W_{\A} \iseq W_{\B}$ to Bob.
 In case Alice is dishonest she may choose $W_\A = \perp$ (which never agrees with honest Bob's input) and (for any choice of $W_\A$) the bit $G$ is also output to Alice .
  \end{func}
  \end{framed}

  The idea behind the $\F_{ID}$ functionality is that Alice and Bob both have an input string $W_{\A}$ and $W_{\B}$ respectively to act as a password and
  Bob receives and outputs a bit corresponding to the acceptance of Alice's password if their chosen inputs are the same or the rejection if their inputs are
  not equal. In order for a protocol that fulfills the $\F_{ID}$ functionality to be secure, a dishonest server should not be able to learn Alice's password, except with 
  the probability that he guesses the password correctly. At the same time it has to be secure against a dishonest user Alice, so that Bob will not accept her password if
  it does not correspond to his choice $W_\B$. We introduce the definition that should be fulfilled by secure password-based identification protocols.

\begin{definition}\label{def:secid}
A password-based identification protocol is $\eps-$secure if the following conditions are fulfilled:
\begin{description}
  
\item [Correctness:] For honest user Alice and honest server Bob with inputs $W_\A = W_\B$, Bob outputs $G=1$ except with probability $\eps$.

\item [Security for Alice:] For any dishonest server Bob with output $G'$, for any distribution of $W_{\A}$, there exists a random variable
$W'$ that is independent of $W_{\A}$ such that :

\begin{align}
\pr_{W_{\A} W' G' | W'\neq W_\A} \appe \pr_{W_{\A}\lr W' \lr G' | W'\neq W_\A }.
\end{align}

\item [Security for Bob:] For any dishonest user Alice with output $V'$, for any distribution of $W_{\B}$, there exists a random variable $W'$
independent of $W_{\B}$ such that if $W' \neq W_\B$ then $\pr[G=1] \leq \eps $ and :

\begin{align}
\pr_{W_{\B} W' V' | W'\neq W_\B} \appe \pr_{W_{\B} \lr W' \lr V' | W'\neq W_\B}.
\end{align}

\end{description}
\end{definition}

\section{One-Time Memories In The Isolated Qubits Model}

\subsection{The Isolated Qubits Model}

In \cref{Chapter1}, we gave a brief introduction of the \text{isolated qubits model} that was first presented by Liu in \cite{Liu14a}. 
In more detail, parties in this model are restricted to local quantum operations on each qubit and classical communication between the qubits.
As detailed in \cite{Liu14a} any local operation and classical communication (LOCC) strategy, in the sense desribed above, can be described by a series of adaptive single-qubit measurements. 
In subsequent work, Liu describes how to model any LOCC adversary  by a separable positive-operator-value measurement (POVM) \cite{Liu14b}.

Furthermore, in contrast with the memory-restricting models described in \cref{Chapter1}, in the isolated qubits model, all parties are allowed to store qubits for a long time and are not allowed to perform entangling 
operations between qubits. While the restriction on entanglement operations reduces the power of an adversary, the possibility to store qubits for a long time has some important implications.
An adversary is thus allowed to store qubits and measure them at the end of a protocol, making use of any information he receives to decide on his measurement strategy.
Thus usual privacy amplification techniques using hash functions are not effective, which necessitates the use of stronger families of hash functions and a different approach on using them, as described in \cite{Liu15}.
We will describe this in more detail in Section~\ref{privamp}. 

Moreover the ability of storing qubits for a long time allows an adversary to measure the qubits received at the end of the composed protocol\footnotemark.

 It is then not clear if the sub-protocols remains secure.
Composability in the isolated qubits model has not been studied and it seems to be a non-trivial problem. 

\footnotetext{In cryptography, it is common usage to make calls to secure functionalities in a protocol. For example one could use a series of $n$ single-bit commitment functionalities to commit to an $n-$bit string.
One then argues that since every single bit is commited securely, the same holds for the concatenation of these bits. 
Composability of protocols allows one to use a modular design to construct and prove the security of complex protocols. }

In this thesis we assume that all parties have to measure all qubits used in a sub-protocol at the latest at the end of this sub-protocol.
This rather strong assumption allows us to construct composed protocols that make calls to functionalities as sub-routines.

\subsection{Leaky String \ROT}

\label{OTMandOT}

In this section, we introduce a protocol for imperfect \ROT motivated by the ``leaky'' one-time memory (OTM) construction presented in \cite{Liu14b}. 

The security definitions for the ``leaky'' and perfect OTMs presented in \cite{Liu14b,Liu15} are similar to the \ROT security definition, introduced earlier in this chapter.
In \cref{Chapter3} we use the ``leaky'' \ROT presented here to construct protocols a secure string \ROT.
We then use the latter in \cref{Chapter4} to construct a secure \OT protocol.

For consistency with the view of cryptographic tasks as functionalities that are implemented by protocols we do not use the notion of one-time memories as devices that store two messages out of which only one can 
be read. We instead construct protocols that implement the \ROT functionality (Functionality~\ref{func:ROT}) between two users, Alice and Bob. 
The main difference between an OTM and an oblivious transfer protocol is the fact that the first is non-interactive in the sense that only Alice sends information to Bob, while an oblivious transfer protocol
is not necessarily non-interactive.
In that sense, the latter is weaker since an OTM implements the oblivious transfer functionality, but an interactive oblivious transfer protocol does not implement the OTM functionality.

We first rewrite the ``leaky'' OTM as introduced in \cite{Liu14b} as a non-interactive ``leaky'' \ROT protocol that takes no input from Alice and input $D$ from Bob, and outputs $s$ and $t$ to Alice and one of the two messages 
to Bob depending on his input choice. This protocol leaks some information about both messages to Bob and is thus not secure. 

 \begin{prot}
\label{prot:leakyOT}
 A protocol for ``leaky'' string \ROT between users Alice with no input and and Bob with input $D \in \{0,1\}$ respectively.
 
 Let $C':\{0,1\}^\ell \mapsto \{0,1\}^{n\log{q}}$ be an error correcting code that is linear in $GF(2)$ and approaches the capacity of a $q$-ary symmetric channel $\calE_q$ with error probability
  $p_e = \frac{1}{2} - \frac{1}{2q}$.
 \begin{enumerate}
  \item Alice samples and receives as output two strings $s,t \in \{0,1\}^\ell $ uniformly at random.
  \item Alice computes $C'(s)$ and $C'(t)$ and views them as $n$ blocks of $\log{q}$ qubits.
  \item Alice prepares the qubits in the following way and sends them to Bob:
   
   For $i=1,\dots,n$:
   \begin{enumerate}
     \item Let $\gamma_i \in \{0,1\}$ be the outcome of an independent and fair coin toss.
     \item If $\gamma_i = 0$ then prepare the $i^{\text{th}}$ block of $\log{q}$ qubits of the codeword $C'(s)$ in the computational basis: $\ket{C'(s)_i}$
     \item If $\gamma_i = 1$ then prepare the $i^{\text{th}}$ block of $\log{q}$ qubits of the codeword $C'(t)$ in the computational basis: $H^{\otimes \log{q}}\ket{C'(t)_i}$
    \end{enumerate}

    \item Bob measures every qubit in the base corresponding to his input $D \in \{0,1\}$ in the following way:
      \begin{itemize}
       \item If $D=0$, he measures all the qubits he receives in the computational basis.
       \item If $D=1$, he measures all the qubits he receives in the Hadamard basis.
      \end{itemize}

     \item Bob runs the decoding algorithm for $C'$ on the string of measurement outcomes $z \in \{0,1\}^{n \log{q}}$ and receives $s$ or $t$ depending on his choice $D$.
 \end{enumerate}
\end{prot}

We present the definitions for separable measurements and $\delta$-non-negligible measurement outcomes as presented in \cite{Liu14b}, that
are used in Theorem~\ref{leakythm} and later in \cref{Chapter3}.

\paragraph{Separable Measurement}
A measurement on $m$ qubits is called \textit{separable} if it can be written in the form $\calE: \rho \mapsto \sum_i K_i^\dagger \rho K_i$, where each operator
$K_i$ is a tensor product of $m$ single-qubit operators $K_i = K_{i,1} \otimes \dots \otimes K_{i,m}$

\paragraph{$\delta$-non-negligible Measurement Outcome}

\begin{definition}
 For any quantum state $\rho \in \CC^{d\times d}$, and any $\delta >0$, we say that a measurement outcome (POVM element) $M \in \CC^{d \times d}$ is $\delta$-non-negligible if $\Tr(M\rho)\geq \delta \cdot \Tr(M)/d$.
\end{definition}

We rephrase the main result of the original paper, \cite[Theorem~2.3]{Liu14b}, that defines the security of the protocol: 

\begin{theorem}[``Leaky'' String \ROT]
\label{leakythm}
 For any $k\geq2$, and for any small constant $0<\mu<<1$, Protocol~\ref{prot:leakyOT} between Alice with no input and Bob with input $D\in \{0,1\}$, has the following properties:
 \begin{enumerate}
  \item Correctness: For honest users Alice and Bob, Alice receives two messages $s,t \in \{0,1\}^\ell$, where $\ell = \Theta(k^2)$ and Bob receives either $s$ or $t$ depending on his choice $D$, using only LOCC operations.
  \item ``Leaky'' security: Let $\delta_0 > 0$ be any constant, and set $\delta = 2^{-\delta_0 k}$. Honest user Alice receives outputs $s, t \in \{0,1\}^\ell$.
   For any dishonest LOCC Bob,  and any separable measurement outcome $M$ that is $\delta$-non-neglibible, we have the following security bound:
  \begin{align}
     H_\infty^\eps(S,T|Z=M) \geq \left(\frac{1}{2} - \mu \right)\ell - \delta_0k.
  \end{align}
Here $S$ and $T$ are the random variables describing the two messages, $Z$ is the random variable representing the Bob's measurement outcome, and we have $\eps \leq e^{-\Omega(k)}$.
 \end{enumerate}
\end{theorem}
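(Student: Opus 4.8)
The plan is to follow the skeleton of Liu's argument for \cite[Theorem~2.3]{Liu14b}, handling correctness and ``leaky'' security separately. For correctness, the point is that Protocol~\ref{prot:leakyOT} realises two parallel noisy channels. If honest Bob wants $s$ he measures every qubit in the computational basis: a block $i$ with $\gamma_i=0$ carries $\ket{C'(s)_i}$ and is delivered faithfully, while a block with $\gamma_i=1$ carries $H^{\otimes\log q}\ket{C'(t)_i}$ and, measured in the computational basis, gives a uniformly random symbol of $\{0,1\}^{\log q}$. Since the $\gamma_i$ are independent fair coins, the string of outcomes is exactly $C'(s)$ sent through the $q$-ary symmetric channel $\calE_q$ with crossover probability $p_e=\tfrac12-\tfrac1{2q}$. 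As $C'$ approaches the capacity of $\calE_q$, decoding recovers $s$ except with probability $e^{-\Omega(n)}$; fixing parameters so that $\ell=\Theta(k^2)$ forces $n=\Omega(k)$, hence failure probability $e^{-\Omega(k)}$. Recovering $t$ by measuring in the Hadamard basis is symmetric.

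For ``leaky'' security I would first invoke the reduction recalled above: a dishonest LOCC Bob is captured by a separable POVM, and since the $n$ blocks are prepared in tensor-product form a single outcome can be taken to be $M=M_1\otimes\cdots\otimes M_n$ with each $M_i\succeq 0$ on the $\log q$ qubits of block $i$. Averaging block $i$ over the coin $\gamma_i$, the conditional state is $\rho^{(i)}_{s,t}=\tfrac12\ket{C'(s)_i}\bra{C'(s)_i}+\tfrac12 H^{\otimes\log q}\ket{C'(t)_i}\bra{C'(t)_i}H^{\otimes\log q}$, so with $P_i(x)=\bra{x}M_i\ket{x}$ and $Q_i(y)=\bra{y}H^{\otimes\log q}M_iH^{\otimes\log q}\ket{y}$ (which satisfy $\sum_x P_i(x)=\sum_y Q_i(y)=\Tr M_i$) one gets
\begin{align}
\pr[Z=M\,|\,S=s,T=t]=\Tr(M\rho_{s,t})=\prod_{i=1}^n\Bigl(\tfrac12 P_i(C'(s)_i)+\tfrac12 Q_i(C'(t)_i)\Bigr).
\end{align}
Taking a uniform prior on $(s,t)$, Bayes' rule makes the normalisation equal to $2^{2\ell}\pr[Z=M]$, which by $\delta$-non-negligibility of $M$ is at least $\delta\, 2^{2\ell}\Tr(M)/q^n$. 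Writing $\hat P_i=P_i/\Tr M_i$, $\hat Q_i=Q_i/\Tr M_i$ and $\tilde R_i(x,y)=\tfrac1{2q}(\hat P_i(x)+\hat Q_i(y))$ --- a probability distribution on the pair --- this yields
\begin{align}
\pr[S=s,T=t\,|\,Z=M]\ \le\ \frac{q^{2n}}{\delta\, 2^{2\ell}}\ \prod_{i=1}^n \tilde R_i\bigl(C'(s)_i,C'(t)_i\bigr).
\end{align}

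It remains to show the product on the right is typically minuscule, and this is where the real work lies. Two ingredients enter: (i) a per-block uncertainty relation --- the computational and the all-Hadamard bases are mutually unbiased in dimension $q$, so $\hat P_i$ and $\hat Q_i$ cannot both be sharply peaked, which forces both marginals of $\tilde R_i$ to have min-entropy bounded away from $0$ uniformly in $M_i$; this is what prevents Bob from learning about $C'(s)$ and $C'(t)$ within the same block, and is where the factor $\tfrac12$ originates --- and (ii) the linearity and near-optimal distance of $C'$, which keeps the $2^{2\ell}$ codeword pairs $(C'(s),C'(t))$ spread out so that they cannot all land in a small set of atypical, high-weight points of the product measure $\bigotimes_i\tilde R_i$. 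Combining (i) and (ii) one shows that, apart from a set of $(s,t)$ of posterior probability at most $\eps=e^{-\Omega(k)}$, every pair has posterior weight at most $2^{-(\tfrac12-\mu)\ell+\delta_0 k}$ (the $\delta_0 k$ absorbing the $\log(1/\delta)$ loss), and then the characterisation of smoothed min-entropy gives $H^\eps_\infty(S,T\,|\,Z=M)\ge(\tfrac12-\mu)\ell-\delta_0 k$. I expect step (ii) --- passing from a per-block entropy deficiency to a statement about the whole codeword pair --- to be the main obstacle: bounding each factor by its block-maximum is useless since no positive-rate codeword can hit the peak of every block, so one must run a counting/list-decoding argument against the linear structure of $C'$, track how ``half a block per coin toss'' turns into the constant $\tfrac12$, and make the exceptional set genuinely $e^{-\Omega(k)}$. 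Getting the constant in front of $\ell$ to be exactly $\tfrac12-\mu$, rather than something smaller, is the quantitatively most demanding point.
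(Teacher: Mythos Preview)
The paper does not actually prove this theorem: immediately after the statement it writes ``The proof of this theorem can be found in \cite{Liu14b}.'' So there is no in-paper argument to compare against; the thesis simply imports the result from Liu's original work and uses it as a black box (the ``leaky'' security bound is only ever invoked abstractly, via the hypothesis $H_\infty^\eps(S,T|Z=M)\geq\alpha k$ in Theorem~\ref{thm:main}).

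Your sketch is a plausible high-level reconstruction of Liu's argument in \cite{Liu14b}: the correctness part (each block passes through $\calE_q$ and $C'$ decodes it) is right, and the security skeleton --- write the posterior as a product over blocks, invoke $\delta$-non-negligibility to control the normalisation, use the mutual unbiasedness of the computational and all-Hadamard bases for a per-block uncertainty trade-off, then exploit the linear/near-capacity structure of $C'$ to turn this into a global min-entropy bound --- matches the shape of Liu's proof. You are also honest about where the real difficulty sits: step~(ii), converting the blockwise bound into a statement about codeword pairs with the correct constant $\tfrac12-\mu$, is precisely the technical core of \cite{Liu14b} and is not something one can fill in with a sentence. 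If you were required to supply a self-contained proof here you would need to reproduce that counting argument in full; for the purposes of this thesis, citing \cite{Liu14b} is what the author does.
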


The proof of this theorem can be found in \cite{Liu14b}. This \ROT protocol leaks a constant fraction of information to Bob and is thus not secure for cryptographic tasks. 

\subsection{Privacy Amplification}
\label{privamp}
Common privacy amplification techniques rely on applying a function with a random seed to the string the user holds and require the users to share their seed at a later point.
 These techniques cannot be used in the isolated qubits model as a dishonest user can postpone his measurement until he has knowledge of the seed and use that information to adapt his measurement.

Liu introduces a privacy amplification technique that can be used in the isolated qubits model in \cite{Liu15}. The technique relies on the use of a fixed hash function of a family of $r$-wise hash functions, 
that is a family of stronger hash functions than the ones described above.
This method allows privacy amplification on the output of a leaky string OTM as the ones presented in \cite{Liu14a, Liu14b} and leads to the construction of a secure single-bit OTM \cite{Liu15}.
In \cref{Chapter3} we follow a similar approach to achieve secure string \ROT, instead of the single-bit OTM presented in \cite{Liu15}.

\subsection{Comparing The Isolated Qubits And Bounded Quantum Storage Models}
  \label{model-diff}
  
  In \sref{intro:OTM} we mentioned briefly that the OTM protocols studied in \cite{Liu14a,Liu14b,Liu15} are not necessarily secure in the noisy- and bounded-quantum-storage models and that at the same time
  protocols that rely on a quantum memory bound to achieve security are not guaranteed to be secure in the isolated qubits model. 
  
  In more detail, the OTM protocols constructed in the isolated qubits model are insecure in a model where entangling operations are allowed. 
  An attack against the OTM protocols by an adversary who is allowed to pefrorm entangling operations has been sketched in \cite{Liu14b}, relying on the gentle measurement lemma  \cite{Win99}
  and running the decoding alrgorithm for the error-correcting code on a superposition of many different inputs. This implies that the OTM and \ROT protocols described in \cite{Liu14a,Liu14b,Liu15} and this thesis
  are not secure in the noisy- and bounded-quantum-storage models.
  
  On the other hand protocols in the noisy- and bounded-quantum-storage model \cite{WST08, KWW12, Sch10}, rely on the memory bound or imperfect storage in order to achieve security. In protocols
  such as Protocol~\ref{prot:SchIID}, one user encodes qubits in the computational or Hadamard basis while the receiver measures the qubits either in a random basis or in a sequence of bases depending on his input.
  Since these measurements are destructive, the users commit to a particular choice of measurement bases.
  The correct sequence of bases is announced between the users at a later point, after the memory-bound has been applied. 
  This step allows the users to know which qubits they have measured in the same basis and thus have obtained the same result, unless the quantum communication channel is being eavesdropped on.
  At the same time the step of announcing the bases used to encode the sent qubits can be exploited by a malicious user in the isolated qubits model. Since the users are allowed to store the received qubits for an 
  indefinite amount of time after receiving the qubits, an adversary is allowed to wait until he has received the sequence of bases and thus measure all qubits correctly, which violates the security of these protocols.
  
  Thus we argue that protocols that rely on the restriction of a user to perform non-entangling operations cannot be secure in the memory restricting models. On the other hand protocols that rely on the inability of an adversary to store qubits 
  noiselessly or in large numbers cannot be secure in the isolated qubits model.
   
\chapter{ \ROT In The Isolated Qubits Model }
\label{Chapter3}
\lhead{Chapter 3. \emph{\ROT In The Isolated Qubits Model}}

In this chapter, we introduce a \ROT protocol in the isolated qubits model (IQM), motivated by the ``ideal'' OTM presented in \cite{Liu15}.
Our protocol takes no input from Alice and one bit $D$ as Bob's input, and outputs two strings $A_0$ and $A_1$ to Alice and one string $A_D$ to Bob.
This protocol first uses the ``leaky'' \ROT protocol presented in \cref{Chapter2} and makes use of the privacy amplification technique introduced in \cite{Liu15} to achieve security.
The \ROT protocol differs from the ``ideal'' OTM of \cite{Liu15} in the fact that the messages are strings instead of single bits as in the original.
To prove the security of the \ROT protocol we use some results presented in \cite{DFSS06} that allow us to simplify and extend the proof to longer messages, a technique that was not used in the original.

\section{Secure String \ROT }

As discussed in the previous chapter, the ``leaky'' \ROT, Protocol~\ref{prot:leakyOT}, is not secure because it leaks some information. 
Commonly in such a case one would use privacy amplification techniques to achieve security from this less secure protocol. 
Typically this involves applying a hash function with a seed that is picked by Alice and later announced to Bob, after he has measured the received qubits or messages.

In the isolated qubits model however, the use of such techniques is not possible since Bob is allowed to wait and measure the qubits at a later point, in this case after learning the seed of the 
hash function used for privacy ampification. 
A privacy amplification technique such as this would at best have no effect or even allow a dishonest user Bob to use that information to attack the protocol.
In \cite{Liu15}, Liu presented a technique for privacy amplification in the isolated qubits model by fixing two $r$-wise independent hash functions at the beginning of the protocol, and applying them on the outputs
of the ``leaky'' \ROT protocol. 

\subsection{Protocol String \ROT}

We introduce a protocol for string \ROT based on the protocol proposed by Liu \cite{Liu14b} and the privacy amplification technique that uses two fixed $r$-wise independent hash functions.

\begin{prot}
\label{prot:OTIQM}
  A protocol for string \ROT between user Alice with no input and Bob with input $D \in \{0,1\}$.
\begin{enumerate}
  \item Alice chooses two $r$-wise independent hash functions $F$ and $G$ uniformly at random.
  \item Alice with no input and Bob with input $D$ use a ``leaky'' string \ROT (such as Protocol~\ref{prot:leakyOT}). Alice receives as output two messages $s,t \in \{0,1\}^\ell$ and Bob, depending on his choice, receives $s$ if $D=0$ or $t$ if $D=1$.
  \item Alice receives output  $A_0,A_1 \in \{0,1\}^{\ell'}$ such that:
  \begin{align}
   A_0 &= F(s) \\ A_1 &= G(t)
  \end{align}
  \item Bob computes $F(s)$ or $G(t)$, depending on his input $D$ and obtains $A_D$.
   \end{enumerate}
\end{prot}

\subsection{Security Of The Protocol}

It is not difficult to see that if the ``leaky'' string \ROT is correct then Protocol~\ref{prot:OTIQM} is correct. 
Furthermore since the protocol is non-interactive Alice learns nothing about Bob's actions, as is reasoned in \cite{Liu15}.

The security for Alice of an \ROT, Definition~\ref{def:secrot}, is equivalent to the following definition, that was used in \cite{Liu15}:

\begin{definition}
\label{def-security}
We say that Protocol~\ref{prot:OTIQM} is secure if the following holds:
Let $k\geq 1$ be a security parameter.  
Suppose Alice receives as output two messages $A_0, A_1 \in \set{0,1}$. Consider any dishonest LOCC user Bob, and let $Z$ be the random variable representing the results of Bob's measurements. 
 Then there exists a random variable $D \in \{0,1\}$ such that:
\begin{align}
\begin{split}
\norm{\pr_{A_{1-D} A_{D} D Z} - \pr_{U^{\ell'}} \times \pr_{A_{D} D Z}}_1 \leq 2^{-\Omega(k)},
\end{split}
\end{align}
where $U^{\ell'}$ denotes the uniform distribution on $\set{0,1}^{\ell'}$.
\end{definition}

Theorem~ \ref{thm:main} then states that we can reduce a secure string \ROT protocol (Protocol~\ref{prot:OTIQM}) to a ``leaky'' string \ROT protocol (Protocol~\ref{prot:leakyOT}).
That is if there exists a protocol with output two strings $s,t \in \{0,1\}^\ell$ 
and leaking any constant fraction of information of $s$ and $t$, then we can construct a \ROT where Alice receives two strings $A_0, A_1 \in \{0,1\}^{\ell'}$ and only allows an exponentially small amount of 
information about either $A_0$ or $A_1$ to leak, and is thus secure.

\begin{theorem}
\label{thm:main}
For any constants $\theta \geq 1$, $\delta_0 > 0$, $\alpha > 0$, $\eps_0 > 0$ and $0 < \kappa < \min\left\{\frac{\delta_0}{2}, \frac{\eps_0}{2}, \frac{\alpha}{4}\right\}$  there exists a constant $k_0 \geq 1$ such that:   

\vskip 10pt

Suppose we have a ``leaky'' \ROT protocol in the isolated qubits model, such as Protocol~\ref{prot:leakyOT}, indexed by a security parameter $k \geq k_0$.
  More precisely, suppose that for all $k \geq k_0$, 
\begin{enumerate}
\item Alice receives as output from Protocol~\ref{prot:leakyOT} two messages $s,t \in \set{0,1}^\ell$, where $\ell \geq k$ and uses $m$ qubits, where $k \leq m \leq k^\theta$.
\item Correctness: For honest users, Alice receives $s$ and $t$ and Bob receives $s$ if $D=0$ or $t$ if $D=1$, using only LOCC operations.
\item ``Leaky'' security: Let $\delta_0 > 0$ be any constant, and set $\delta = 2^{-\delta_0 k}$. Honest user Alice receives outputs $s, t \in \{0,1\}^\ell$.
   For any dishonest LOCC Bob, let $Z$ be the random variable representing the result of his measurement. Let $M$ be any separable measurement outcome $M$ that is $\delta$-non-neglibible. Then:
\begin{align}
\label{eq:drf}
H_\infty^\eps(S,T|Z=M) \geq \alpha k,
\end{align}
where $\eps \leq 2^{-\eps_0 k}$.
\end{enumerate}

\vskip 10pt

Now assumbe Alice and Bob use Protocol~\ref{prot:OTIQM}, with $r$-wise independent hash functions $F,G: \{0,1\}^\ell \mapsto \{0,1\}^{\ell'}$, with 
\begin{align}
r = 4(\gamma+1) k^{2\theta}
\end{align}

and 
\begin{align}
 \ell' = \kappa k.
\end{align}

This choice of $r$ is motivated by the union bound, see equation (\ref{eq:ub}).  Here $\gamma$ is some universal constant. The choice of $\ell'$ is motivated by equations \eqref{eq:lreqq1}, \eqref{eq:lreqq2}, \eqref{eq:lreqq3} and \eqref{eq:lreqq4}

\vskip 10pt

Then Protocol~\ref{prot:OTIQM} is a secure \ROT protocol in the isolated qubits model, in the sense of Definition \ref{def-security}. 
 More precisely, for all $k \geq k_0$, the following statements hold, except with probability $e^{-\Omega(k^{2\theta})}$ over the choice of $F$ and $G$:
\begin{enumerate}
\item Alice receives as output from Protocol~\ref{prot:OTIQM} two messages $A_0, A_1 \in \{0,1\}^{\ell'}$ and uses $m$ qubits, where $k \leq m \leq k^\theta$.
\item Correctness: Correctness: For honest users Alice with no input and Bob with input $D$, Alice receives  $A_0$ and $A_1$ and  Bob receives $A_D$, using only LOCC operations.
\item ``Ideal'' security:  For honest Alice with outputs $(A_0, A_1)$ from Protocol~\ref{prot:OTIQM},  
for any dishonest LOCC user Bob, let $Z$ be the random variable representing the results of his measurements.  Then there exists a random variable $D\in \{0,1\}$, such that:
\begin{align}
\begin{split}
&\norm{\pr_{A_{1-D} A_{D} D Z} - \pr_U \times \pr_{A_{D} D Z}}_1 \\
&\qquad \leq 2^{-(\delta_0 k-2(\ell'+1))} + 2^{-(\eps_0k-2\ell'+3)} +  2^{-(\frac{\alpha}{2}k-2(\ell'+1))} 	
 + 2^{-(\frac{\alpha}{2}k-2(\ell'+2+\theta \ln{k})-\ln{(\gamma+1)})} \label{eq:distancebound}\\	
&\qquad\leq 2^{-\Omega(k)},
\end{split}
\end{align}
\end{enumerate}
\end{theorem}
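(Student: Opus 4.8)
The plan is to follow the privacy-amplification argument of Liu \cite{Liu15}, but to carry the two messages through as strings rather than bits, using the non-degenerate linear / 2-balanced function machinery of \cite{DFSS06}. The only non-trivial assertion is ``ideal'' security for Alice: correctness is inherited from the leaky \ROT (honest Bob decodes $s$ or $t$ and then applies $F$ or $G$), non-interactivity immediately gives obliviousness (Alice learns nothing about $D$), and the qubit count $k\le m\le k^\theta$ is unchanged. So I fix honest Alice's leaky outputs $s,t$ and let $M$ range over a dishonest, adaptive single-qubit Bob's possible separable POVM outcomes, with $Z$ the random variable of his outcome.

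Because $F,G$ are fixed before Bob acts, an adaptive Bob could in principle be tailored to them, so one cannot simply apply a single leftover-hash-type bound; the fix in this model is: (i) build a finite net over the separable POVM elements on the $m\le k^\theta$ qubits, of cardinality $e^{O(k^{2\theta})}$, such that every realizable outcome is approximated by a net element while $\delta$-non-negligibility and the leaked-entropy bound \eqref{eq:drf} survive up to negligible slack --- here Winter-type gentle-measurement estimates \cite{Win99} are used --- and (ii) argue for a single fixed net element, taking a union bound at the end. The value $r=4(\gamma+1)k^{2\theta}$ is chosen exactly so that, with $t$ of order $r$ in Proposition~\ref{prop:large-bound}, the per-element failure probability beats the net cardinality; this is the union bound \eqref{eq:ub}, and it yields the stated $e^{-\Omega(k^{2\theta})}$ exceptional probability over the choice of $F,G$.

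For a fixed $\delta$-non-negligible $M$ the argument reads: first use the smoothing parameter to pass from \eqref{eq:drf} to a genuine bound $H_\infty(S,T|Z=M,\mathcal{E})\ge\alpha k$ on an event $\mathcal{E}$ with $\pr[\mathcal{E}]\ge 1-\eps$; then apply a conditional entropy-splitting / 2-balancedness argument (Lemma~\ref{2-balanced2} together with the one-point security Lemma~\ref{lem:oneptsec} of \sref{balanced}) to produce the effective choice bit $D\in\{0,1\}$ with $H_\infty(S_{1-D}|D,Z=M,\mathcal{E})\ge\frac{\alpha}{2}k-O(1)$; conditioning further on $A_D$, which takes only $2^{\ell'}$ values, costs at most $\ell'$ bits, so $H_\infty(S_{1-D}|A_D,D,Z=M,\mathcal{E})\ge\frac{\alpha}{2}k-\ell'-O(1)$. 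Finally apply the $r$-wise independent privacy amplification by feeding the symmetric matrix encoding $\pr_{S_{1-D}|A_D D Z = M \mathcal{E}}$ into Proposition~\ref{prop:large-bound}: with $\ell'=\kappa k$ and the above $r$, the hashed value $A_{1-D}$ is $2^{-\Omega(k)}$-close to uniform given $(A_D,D,Z=M)$, for this $M$ and these $F,G$, outside the exceptional $F,G$-set.

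Putting the contributions together recovers \eqref{eq:distancebound} term by term: the $\delta$-negligible outcomes carry total weight producing the $2^{-(\delta_0k-2(\ell'+1))}$ term, the smoothing produces $2^{-(\eps_0k-2\ell'+3)}$, and the two privacy-amplification contributions give the $\frac{\alpha}{2}k$ terms, the second absorbing the $O(\log k)$ and $O(1)$ corrections (the $\theta\ln k$ and $\ln(\gamma+1)$) coming from the prefactors of Proposition~\ref{prop:large-bound} and the value of $r$; the hypothesis $\kappa<\min\{\delta_0/2,\eps_0/2,\alpha/4\}$ together with $\ell'=\kappa k$ forces all four exponents to be $\Omega(k)$. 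I expect the main obstacle to be step (i): compressing the continuum of adaptive-LOCC separable measurement outcomes into a net of size $e^{O(k^{2\theta})}$ while keeping both $\delta$-non-negligibility and the leaked-entropy bound intact, and then carrying out the error bookkeeping that keeps every exponent linear in $k$; the remaining steps are a careful but essentially routine chaining of the entropy-splitting lemma with the $r$-wise independent large-deviation bound.
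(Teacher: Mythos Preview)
Your global architecture---fix $(F,G)$; build a $\mu$-net of separable POVM elements of size $e^{O(k^{2\theta})}$; prove a per-net-point bound whose failure probability, via Proposition~\ref{prop:large-bound} with the chosen $r$, beats the net cardinality in the union bound \eqref{eq:ub}; then extend to arbitrary $M$ by continuity---matches the paper. The per-point mechanism, however, is not the one the paper uses, and your citations of the paper's own lemmas are off.

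The paper never performs entropy splitting on $(S,T)$ to manufacture $D$ and then hashes $S_{1-D}$. Instead, for every non-degenerate linear $\beta(A_0,A_1)=\langle u_0,F(S)\rangle\oplus\langle u_1,G(T)\rangle$ it bounds the bias $R^\beta(M)=\EE\bigl(1_\calE\cdot(-1)^{\beta(A_0,A_1)}\mid Z=M\bigr)$; this is a \emph{quadratic} form in the random signs $(-1)^{\langle u_0,F(\cdot)\rangle}$ and $(-1)^{\langle u_1,G(\cdot)\rangle}$, which is exactly the shape Proposition~\ref{prop:large-bound} handles---that is the real content of Lemma~\ref{lem:oneptsec}, with Lemma~\ref{2-balanced2} invoked only to check that the $\langle u_i,F(\cdot)\rangle$ are themselves $r$-wise independent and uniform. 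The union bound runs over net points \emph{and} over the at most $2^{2\ell'}$ choices of $\beta$. The bit $D$ is never constructed by hand: it drops out of Theorem~\ref{thm4.5}, the DFSS06 equivalence between a uniform $\beta$-bias bound and the $\eps'$-obliviousness condition, at the cost of the factor $2^{2\ell'+1}$---which is precisely the source of the $2(\ell'+1)$ and $2\ell'$ offsets in the four exponents of \eqref{eq:distancebound}. The continuity step is not gentle measurement but Lemma~\ref{lem:continuity}, a direct estimate $|R^\beta(M)-R^\beta(\Mtil)|\le 2\mu(2^m/\delta)^2$ obtained from the $\delta$-non-negligibility lower bounds on $\Tr(M\xi)$ and $\Tr(\Mtil\xi)$. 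Your split-then-hash route would require its own continuity argument, and since an entropy-split $D$ can jump discontinuously as $M$ varies, transporting $D(\Mtil)$ back to $D(M)$ is exactly where I would expect that version to break; the paper's $\beta$-formulation sidesteps this entirely by deferring the existence of $D$ to the final line via Theorem~\ref{thm4.5}.
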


Before proving this theorem we present the definition of the $\eps'-$obliviousness condition  in order to introduce Theorem~ \ref{thm4.5} that we use later to prove the security of Protocol~\ref{prot:OTIQM}.

Note that the $\eps'-$obliviousness condition (for Random 1-2 OT$^\ell$) extended for strings \cite[Definition~3.2]{DFSS06}  describes the security condition of Definition \ref{def-security}.  

\begin{definition}{\textbf{$\eps'$-Obliviousness condition:}}
 For any LOCC adversary who observes the measurement outcome $Z$, there exists a binary random variable $D$ such that 
 \begin{align}
  \| \pr_{A_{1-D}\, A_{D} \, D \, Z}  - \pr_{U^\ell} \times \pr_{A_{D} \ \,D \, Z} \| \leq  \eps'
 \end{align}
\end{definition}

Moreover, we introduce \cite[Theorem~4.5]{DFSS06}, that we will use to prove the security of Protocol~\ref{prot:OTIQM}.

\begin{theorem}
\cite[Theorem~4.5]{DFSS06}\label{thm4.5}
The $\eps'$-obliviousness condition is satisfied for any LOCC adversary who observes the measurement outcome $Z$ if and only if:

 \begin{align}
\forall \text{ non-degenerate linear function $\beta$:}  \| \pr_{\beta(A_0,A_1) \, Z}- \pr_{U^{\ell'}} \times \pr_{Z} \| \leq \frac{\eps'}{2^{2\ell'+1}} \label{secbound2}
 \end{align}
\end{theorem}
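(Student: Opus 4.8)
\textbf{Proof plan for Theorem~\ref{thm4.5}.}

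The plan is to prove both directions of the equivalence by relating the $L_1$-distance on the left-hand side of the obliviousness condition to the $L_1$-distances over all non-degenerate linear functions $\beta$, treating $Z$ as a classical conditioning variable (or, for a quantum measurement outcome, fixing the POVM element and working with the induced sub-normalised classical state). The key observation is that the function $(s_0,s_1)\mapsto (s_0,s_1)$ together with all its ``linear coordinates'' can be recovered from the bits $\beta(s_0,s_1)$ as $\beta$ ranges over non-degenerate linear functions: picking $u_0=e_i$, $u_1$ arbitrary non-zero, and then varying, one obtains a Fourier/Hadamard-type basis for functions on $\{0,1\}^\ell\times\{0,1\}^\ell$ minus the two ``degenerate'' subspaces (where one of $u_0,u_1$ is zero). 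So I would expand both distributions in the Fourier basis over $(A_0,A_1)$ and identify which Fourier coefficients survive.

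First I would set up notation: for a fixed value $z$ of $Z$ (or fixed measurement outcome), write the conditional distribution $\pr_{A_0A_1|Z=z}$ and its Fourier coefficients $\hat p_z(u_0,u_1) = \sum_{s_0,s_1}(-1)^{\langle u_0,s_0\rangle \oplus \langle u_1,s_1\rangle}\pr_{A_0A_1|Z=z}(s_0,s_1)$. The quantity $\|\pr_{\beta(A_0,A_1)Z}-\pr_{U}\times\pr_Z\|$ for $\beta$ given by $(u_0,u_1)$ is then, after summing over $z$, exactly $\tfrac12\sum_z \pr_Z(z)|\hat p_z(u_0,u_1)|$ (the non-uniformity of a $\pm1$-valued coordinate is measured by a single Fourier coefficient). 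On the other side, $\|\pr_{A_{1-D}A_DDZ}-\pr_{U^{\ell'}}\times\pr_{A_DDZ}\|$ says that, conditioned on $D$ and $A_D$ and $Z$, the string $A_{1-D}$ is close to uniform; expanding in Fourier over $A_{1-D}$ (with $A_D$ fixed), this is controlled by the coefficients $\hat p_z(u_0,u_1)$ with the coordinate in the ``$A_D$'' slot set to $0$ and the coordinate in the ``$A_{1-D}$'' slot non-zero — precisely the non-degenerate-linear coefficients, once one sums over the two possible values of $D$. Conversely, a bound on all $\hat p_z(u_0,u_1)$ with $u_0,u_1$ both non-zero bounds the conditional non-uniformity of $A_{1-D}$ given $A_D$ for a suitable choice of $D$.

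The hard part will be the direction ``Fourier bounds $\Rightarrow$ obliviousness,'' because one must \emph{produce} the binary random variable $D$: the naive expansion only controls $A_{1-D}$ given $A_D$ after one already knows $D$, so I would define $D$ (following \cite{DFSS06}) as the index $i$ minimising the conditional non-uniformity of $A_i$ given $(A_{1-i},Z)$ — i.e.\ $D$ is a deterministic function of $Z$ chosen to make $A_D$ the ``more revealed'' string — and then argue via a pigeonhole/averaging step that with this choice the residual Fourier mass on coefficients with a non-zero coordinate in the $A_{1-D}$ slot (and zero in the $A_D$ slot) is bounded by the hypothesis, which after resumming over $z$ and accounting for the $2^{2\ell'+1}$ normalisation (the number of relevant non-degenerate linear functions, up to the factor of $2$ from the $L_1$-to-bias conversion) gives $\eps'$. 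Throughout I would use that $\|\cdot\|$ here denotes total variation (half $L_1$), so the factors of $2$ and the $2^{2\ell'+1}$ in \eqref{secbound2} come out of counting the $(2^{\ell'}-1)$ non-zero choices in each slot and the standard bias-vs-Fourier-coefficient identity; the forward direction (obliviousness $\Rightarrow$ Fourier bound) is then just the contrapositive of this counting, specialising the conditional-uniformity statement to a single linear functional. I would remark that this is exactly the extension to strings of the bit-case argument in \cite{DFSS06}, so the combinatorial bookkeeping is the only new ingredient.
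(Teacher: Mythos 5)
First, a point of reference: the paper does not prove this statement at all --- it imports it verbatim as \cite[Theorem~4.5]{DFSS06} --- so there is no in-paper proof to compare your plan against; I can only judge it on its own terms and against what the cited proof must accomplish. Your Fourier set-up is the right language, and the easy direction (obliviousness $\Rightarrow$ small bias for every non-degenerate $\beta$) is essentially fine, although it is not ``the contrapositive of the counting'': it is a separate data-processing step using 2-balancedness (Lemma~\ref{2-balanced2}), and it naturally yields bias at most $\eps'$ rather than $\eps'/2^{2\ell'+1}$, so the ``iff'' really hides two implications with different constants.

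The genuine gap is in the hard direction, precisely at the step you flag as ``pigeonhole/averaging.'' The coefficients $\hat p_z(u_0,u_1)$ with exactly one of $u_0,u_1$ equal to zero are \emph{not} controlled by the hypothesis, and no choice of $D$ as a deterministic function of $Z$ can make them irrelevant. Concretely, take $Z$ trivial and let $A_0,A_1$ be independent, each uniform except for a single biased linear coordinate with bias $\delta$ (for $\ell'=1$: two independent bits with $\pr[A_i=0]=(1+\delta)/2$). Every non-degenerate $\beta$ then has bias $\delta^2$, so the right-hand side of \eqref{secbound2} holds with $\eps'/2^{2\ell'+1}\approx\delta^2$; but both semi-degenerate coefficients equal $\delta=\Theta(\sqrt{\eps'/2^{2\ell'+1}})$, so for \emph{any} $D$ determined by $Z$ alone the string $A_{1-D}$ retains bias $\Theta(\delta)$ given $(A_D,D,Z)$, which is far larger than $\eps'=2^{2\ell'+1}\delta^2$ once $\delta$ is small. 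The theorem survives only because the obliviousness condition merely asserts the \emph{existence} of a random variable $D$, which is therefore allowed to be correlated with $(A_0,A_1)$ themselves: the target object is the convex hull of ``$A_0$ uniform given the rest'' and ``$A_1$ uniform given the rest,'' with $D$ the mixture label, and in the example above such a correlated $D$ does bring the distance down to $O(\delta^2)$. So the real work --- and the part missing from your plan --- is an explicit decomposition of $\pr_{A_0A_1|Z=z}$ into such a mixture plus an error term that is controlled by the non-degenerate coefficients alone; your proposed $D=D(Z)$ ``more revealed index'' cannot produce the stated bound. (Your description of that choice is also internally inconsistent: minimising the conditional non-uniformity of $A_i$ makes $A_i$ the \emph{less} revealed string, so you would want $D=1-i$, but fixing that does not repair the quantitative failure above.)
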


Theorem~\ref{thm4.5} states that it is enough to show that  $\| \pr_{(\beta(A_0, A_1))Z}- \pr_U^{\ell'} \times \pr_{Z} \| \leq \frac{\eps'}{2^{2\ell'+1}}$ for all non-degenerate linear functions $\beta$, in order
to prove the security of the protocol.

\section{Proof Of Theorem~\ref{thm:main} }

In this section we  prove Theorem~\ref{thm:main} following the reasoning used in \cite{Liu15}. We  first show that with high probability over $F$ and $G$ the scheme is secure for any fixed separable measurement outcome $M$.
Then we  use the $\mu-$net $\Wtil$ for the set of all separable measurement outcomes and show that Protocol~\ref{prot:OTIQM} is secure at all points $\Mtil \in \Wtil$ with high probability.We  then show that any separable measurement $M$ can be
approximated by a measurement outcome in the $\mu-$net, $\Mtil \in \Wtil$. Then security at $\Mtil$ implies security at $M$ for any separable measurement. Thus Protocol~\ref{prot:OTIQM} is secure. 

\subsection{Security For Fixed Measurement $M$}
First, we  show that in the case when the adversary observes a fixed measurement outcome $Z=M$ the protocol is secure.
Assuming that $M$ is separable and $\delta-$non-negligible, the ``leaky'' security guarantee implies  $H_\infty^\eps(S,T|Z=M) \geq \alpha k$ (equation \eqref{eq:drf}).
The following lemma defines a smoothing event $\calE$ and the quantity $R^{\beta}(M)$ and states that $R^{\beta}(M)$ is small, with high probability over the choice of $F$ and $G$.

\begin{lemma}
\label{lem:oneptsec}
Fix any measurement outcome $M$ such that $H_\infty^\eps(S,T|Z=M) \geq \alpha k$.  
Then there exists an event $\calE$, occurring with probability $P(\calE|Z=M) \geq 1 - \eps$, such that the following statement holds for all non-degenerate linear functions $\beta:\{0,1\}^{\ell'}\times\{0,1\}^{\ell'} \mapsto \{0,1\}$:

We define:
\begin{align}
\label{eqn:bias}
\begin{split}
R^{\beta}(M) &= \EE( 1_\calE \cdot (-1)^{\beta(A_0,A_1)} \,|\, Z=M ),
\end{split}
\end{align}
which is a random variable depending on $F$, $G$, $S$ and $T$, since $A_0=F(S)$ and $A_1=G(T)$.  Then for all $\lambda > 0$ and for all non-degenerate linear functions $\beta$, 
\begin{align}
\label{eq:prbound}
\pr_{F,\, G; \, S,\, T}(\abs{R^{\beta}(M)} \geq \lambda) \leq 8e^{1/(3r)} \sqrt{\pi r} \biggl( \frac{8\cdot 2^{-\alpha k} r^2}{e^2 \lambda^2} \biggr)^{r/4}.
\end{align}
\end{lemma}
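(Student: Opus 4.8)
The plan is to reduce the bound on $R^\beta(M)$ to an application of the large-deviation bound for quadratic functions of $2t$-wise independent random variables, Proposition~\ref{prop:large-bound}, with a suitable choice of the symmetric matrix $A$ and parameter $t$. First I would use the smoothing event: since $H_\infty^\eps(S,T|Z=M)\geq\alpha k$, there is an event $\calE$ with $P(\calE|Z=M)\geq 1-\eps$ such that the conditional distribution $P_{STM}(s,t|\calE,Z=M)$ is bounded above by $2^{-\alpha k}$ for every $(s,t)$. Writing $q_{st}:=P(S=s,T=t,\calE|Z=M)$, we have $\sum_{s,t}q_{st}\leq 1$ and $0\leq q_{st}\leq 2^{-\alpha k}$. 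Now expand $(-1)^{\beta(A_0,A_1)}=(-1)^{\langle u_0,F(s)\rangle}\cdot(-1)^{\langle u_1,G(t)\rangle}$ using Definition~\ref{def:ndlf}, so that
\begin{align}
R^\beta(M)=\sum_{s,t}q_{st}\,(-1)^{\langle u_0,F(s)\rangle}(-1)^{\langle u_1,G(t)\rangle}.
\end{align}
The key observation is that the two families of sign variables $\{(-1)^{\langle u_0,F(s)\rangle}\}_s$ and $\{(-1)^{\langle u_1,F(t)\rangle}\}_t$ behave like $r$-wise independent $\pm1$ variables when $F,G$ are drawn from $r$-wise independent hash families (using that $u_0,u_1$ are nonzero and Lemma~\ref{2-balanced2}, so that each $\langle u_0,F(s)\rangle$ is an unbiased bit and $r$-wise independence is inherited across distinct $s$).

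Next I would symmetrise so as to fit the form of Proposition~\ref{prop:large-bound}. Combining the two independent randomised strings $F,G$ into a single $r$-wise independent function over the disjoint union of the two copies of $\{0,1\}^\ell$, and defining $H$ accordingly, one can write $R^\beta(M)$ (up to the diagonal Kronecker correction, which contributes only the negligible quantity $\sum_s q_{ss'}\delta$-type terms, and here actually vanishes since the two index sets are disjoint) as
\begin{align}
R^\beta(M)=\tfrac12\sum_{x,y}A_{xy}\bigl((-1)^{H(x)}(-1)^{H(y)}-\delta_{xy}\bigr),
\end{align}
where $A$ is the symmetric matrix that has $q_{st}$ in the off-diagonal block indexed by $(s,t)$ (and its transpose in the other block), and zero in the diagonal blocks. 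Then $\|\Atil\|_F^2=\sum_{x,y}A_{xy}^2=2\sum_{s,t}q_{st}^2\leq 2\cdot 2^{-\alpha k}\sum_{s,t}q_{st}\leq 2\cdot 2^{-\alpha k}$, and $\|\Atil\|$, the operator norm of the nonnegative matrix with entries $q_{st}$ in an off-diagonal block, is at most $\|\Atil\|_F\leq\sqrt{2}\cdot 2^{-\alpha k/2}$, which is exponentially small and hence dominated by the Frobenius term. Applying Proposition~\ref{prop:large-bound} with $t=r/2$ and $\lambda$ replaced by $2\lambda$ then yields, after absorbing the polynomial-in-$r$ prefactors and the second (operator-norm) term into the constant and the exponent, a bound of the shape $8e^{1/(3r)}\sqrt{\pi r}\,\bigl(8\cdot 2^{-\alpha k}r^2/(e^2\lambda^2)\bigr)^{r/4}$, matching equation~\eqref{eq:prbound}.

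The main obstacle, and the step that needs the most care, is justifying that $\{(-1)^{H(x)}\}_x$ is exactly $r$-wise independent and uniform over the combined index set: one has to check that for nonzero $u_0$ the map $F\mapsto(\langle u_0,F(s)\rangle)_{s\in S}$ on a set $S$ of size $\leq r$ of distinct inputs produces independent uniform bits, which follows because $F(s)$ for distinct $s$ are independent uniform in $\{0,1\}^{\ell'}$ by $r$-wise independence, and $\langle u_0,\cdot\rangle$ is a balanced (surjective linear) functional — this is precisely where Lemma~\ref{2-balanced2} and Definition~\ref{2-balanced1} enter. One also has to handle the bookkeeping of the diagonal correction term and the factor-of-two rescaling cleanly so that the exponents in Proposition~\ref{prop:large-bound} line up with those claimed in \eqref{eq:prbound}; I expect this to be routine once the matrix $A$ is set up, but it is the part where a constant could easily be lost. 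Finally, I would note that the bound holds simultaneously for the single fixed $\beta$ in the statement; the union bound over all non-degenerate linear $\beta$ (there are at most $2^{2\ell'}$ of them) and over the $\mu$-net of measurement outcomes is deferred to the later part of the proof of Theorem~\ref{thm:main}, and is the reason $r$ is taken as large as $4(\gamma+1)k^{2\theta}$.
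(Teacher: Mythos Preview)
Your proposal is correct and follows essentially the same route as the paper: extract the smoothing event from $H_\infty^\eps(S,T|Z=M)\geq\alpha k$, expand $(-1)^{\beta(A_0,A_1)}$ via Definition~\ref{def:ndlf}, merge $F$ and $G$ into a single $r$-wise independent function $H$ on the disjoint union $\{0,1\}\times\{0,1\}^\ell$ (exactly the paper's $H(i,s)$), set up the symmetric block matrix with the sub-probabilities in the off-diagonal blocks, bound $\|\Atil\|^2\le\|\Atil\|_F^2\le\tfrac12\cdot 2^{-\alpha k}$, and invoke Proposition~\ref{prop:large-bound} with $t=r/2$. The only cosmetic difference is that the paper absorbs the factor $\tfrac12$ into the matrix entries (so $A_{(0,s),(1,t)}=\tfrac12 q_{st}$ and $R^\beta(M)=S$ directly), whereas you keep $q_{st}$ in the matrix and carry the $\tfrac12$ outside, compensating by replacing $\lambda$ with $2\lambda$; both yield the same final bound. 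Your justification that $\langle u_0,F(\cdot)\rangle$ inherits $r$-wise independence and uniformity via Lemma~\ref{2-balanced2} is precisely the argument the paper gives, and your observation that the Kronecker correction vanishes because the diagonal blocks of $A$ are zero is also how the paper's computation works out.
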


\begin{proof}
From $H_\infty^\eps(S,T|Z=M) \geq \alpha k$, there exists a smoothing event $\calE$, occurring with probability $P(\calE \,|\, Z=M) \geq 1-\eps$, such that:
\begin{align} \forall \; s,t \in \set{0,1}^\ell, \;
P(\calE,\, S=s, T=t|\, Z=M) \leq 2^{-\alpha k}.
\end{align}

Then the following holds: 

\begin{align}\begin{split}
 \sum_{s,t \in \set{0,1}^\ell}P(\calE, S=s& , T=t \,|\, Z=M)^2 \\
 &= \sum_{s,t \in \set{0,1}^\ell} P(\calE, S=s , T=t \,|\, Z=M) \cdot P(\calE, S=s , T=t \,|\, Z=M) \\
 & \leq \sum_{s,t \in \set{0,1}^\ell} 2^{-\alpha k} \cdot P(\calE, S=s , T=t \,|\, Z=M) \\
 & = 2^{-\alpha k} \cdot  \sum_{s,t \in \set{0,1}^\ell} P(\calE, S=s , T=t \,|\, Z=M) \\
 & \leq 2^{-\alpha k} \label{eq:pb2}
\end{split}
\end{align}

We  now bound the quantity $R^{\beta}(M)$ in a similar way as in \cite{Liu15}.
For a non-degenerate linear function $\beta$ defined by non-zero strings $u_0,u_1$, $\beta(A_0,A_1) = <u_0,F(s)>+<u_1,G(t)>$, where by definition $A_0 = F(s)$ and $A_1 = G(t)$.
We then rewrite $R^{\beta}(M)$ as

\begin{align}
R^{\beta}(M) = \sum_{s,t \in \set{0,1}^\ell} (-1)^{<u_0,F(s)>+<u_1,G(t)>} P(\calE, S=s, T=t \,|\, Z=M).
\end{align}

Firstly, we define a function $H:\: \set{0,1} \times \set{0,1}^\ell \rightarrow \set{0,1}$: 
\begin{align}
H(i,s) = \begin{cases}
<u_0,F(s)>, &\text{if $i=0$} \\
<u_1,G(s)>, &\text{if $i=1$},
\end{cases} \label{eq:funA}
\end{align}
 
for two non-zero $u_0,u_1 \in \{0,1\}^{\ell'}$. 

Note that since $F,G$ are $r-$wise independent hash functions and $u_0,u_1$ are non-zero strings then $<u_0,F(s)>$ and $<u_1,G(s)>$ are also $r-$wise independent hash functions.

By definition, F is a $r-$wise independent hash function if for all subsets $S \subset \{0,1\}^{\ell}$ of size $|S| \leq r$, the random variables $\{F(x)|x \in S\}$ are independent and uniformly distributed in $\{0,1\}^{\ell'}$.
Then the random variables $\{<u_0,F(x)>|x\in S\}$ are also independent, where $<u_0,F(x)>=\XOR_{i=1}^{\ell'}u_{0_i} \cdot \{F(x)\}_i$.
Furthermore, from the fact that all non-degenerate linear functions are $2-$balanced, Lemma~\ref{2-balanced2}, and from the definition of $2-$balanced functions, Definition~\ref{2-balanced1}, we can see that  since $u_0$ is non-zero the random variables  $\{<u_0,F(x)>|x\in S\}$ are uniformly distributed.
(The same holds for  $\{<u_1,G(x)>~|~x~\in ~S\}$). \label{balanced}

Secondly, we define a matrix $A \in \RR^{(2\cdot 2^\ell) \times (2\cdot 2^\ell)}$ with entries $A_{(i,s)(j,t)}$, for $ i,j \in \set{0,1} $ and $ s,t \in \set{0,1}^\ell$, that take the values: 
\begin{align}
A_{(i,s),(j,t)} = 
\begin{cases}
\tfrac{1}{2} P(\calE, S=s, T=t \,|\, Z=M) &\text{if $(i,j)=(0,1)$} \\
\tfrac{1}{2} P(\calE, S=t, T=s \,|\, Z=M) &\text{if $(i,j)=(1,0)$} \\
0 &\text{otherwise}.
\end{cases} \label{eq:matA}
\end{align}

Finally, using \eref{eq:funA} and \eref{eq:matA}, $R^{\beta}(M)$ can be written in the following way,

\begin{align}
R^{\beta}(M) &= \EE( 1_\calE \cdot (-1)^{\beta(A_0,A_1)} \,|\, Z=M ) \\
&=\sum_{s,t \in \set{0,1}^\ell} P(\calE, S=s, T=t \,|\, Z=M) (-1)^{<u_0,F(s)>+<u_1,G(t)>} \\
&= \sum_{s,t \in \set{0,1}^\ell} \Bigl\{ \tfrac{1}{2} P(\calE, S=s, T=t \,|\, Z=M) (-1)^{<u_0,F(s)>+<u_1,G(t)>} \\
& \; \; \; \; \; \;\;\;\;\;\;\;\;\;\;\;\;\;  +  \tfrac{1}{2} P(\calE, S=t, T=s \,|\, Z=M) (-1)^{<u_1,G(t)>+<u_0,F(s)>} \Bigr\} \\
& = \sum_{(i,s),(j,t)} A_{(i,s),(j,t)} \bigl( (-1)^{H(i,s)} (-1)^{H(j,t)} - \delta_{(i,s),(j,t)} \bigr)
\end{align}

Since $<u_0,F>$ and $<u_1,G>$ are $r$-wise independent random functions, we can set $t = r/2$ and use Proposition~\ref{prop:large-bound},
using the following bounds on $\Atil$:
\begin{align}
\begin{split}
\norm{\Atil}^2 \leq \norm{\Atil}_F^2 &= \sum_{(i,s),(j,t)} A_{(i,s),(j,t)}^2 \\
&= \tfrac{1}{2} \sum_{s,t} P(\calE, S=s, T=t \,|\,  Z=M)^2 \\
&\leq \tfrac{1}{2} \cdot 2^{-\alpha k}, 
\end{split}
\end{align}
where in the last line we used equation (\ref{eq:pb2}).  
Then by substituting into Proposition~\ref{prop:large-bound} we prove equation (\ref{eq:prbound}).
We thus prove Lemma~\ref{lem:oneptsec}.
\end{proof}

Next, we introduce Lemma~\ref{lem:Rxor} that implies that if $R^{\beta}(M)$ is small, we can use Theorem~ \ref{thm4.5} to prove the security of the protocol when the adversary observes the measurement outcome $M$.

\begin{lemma}
\label{lem:Rxor}
 Fix any measurement outcome $M$. Suppose $|R^{\beta}(M)|\leq \xi$. Then:
 
 \begin{align}
  \|\pr_{\beta(A_0,A_1), \calE|Z=M} - \pr_U \| \leq \xi + \eps
 \end{align}
\end{lemma}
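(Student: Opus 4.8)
\textbf{Proof plan for Lemma~\ref{lem:Rxor}.}

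The plan is to unwind the definitions so that $R^{\beta}(M)$ becomes literally a difference of two probabilities weighted by $(-1)$, turning the bound $|R^{\beta}(M)| \leq \xi$ into a statistical-distance statement about the bit $\beta(A_0,A_1)$ under the sub-normalised distribution conditioned on $\calE$ and $Z=M$. First I would recall that, for any $\{0,1\}$-valued random variable $B$ and any event $\calE$,
\begin{align}
\EE\bigl(1_\calE \cdot (-1)^{B} \mid Z=M\bigr) = P(B=0,\calE\mid Z=M) - P(B=1,\calE\mid Z=M).
\end{align}
Applying this with $B = \beta(A_0,A_1)$ gives that $R^{\beta}(M)$ is exactly the bias between the two outcomes of $\beta(A_0,A_1)$ in the sub-distribution $\pr_{\beta(A_0,A_1),\calE\mid Z=M}$. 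For a binary outcome, the statistical distance to the uniform distribution $\pr_U$ is governed by this bias (up to the mass defect of $\calE$), which is where the two error terms $\xi$ and $\eps$ will come from.

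The key steps, in order, would be: (i) write $p_0 = P(\beta(A_0,A_1)=0,\calE\mid Z=M)$ and $p_1 = P(\beta(A_0,A_1)=1,\calE\mid Z=M)$, so that $R^{\beta}(M)=p_0-p_1$ and $p_0+p_1 = P(\calE\mid Z=M) \geq 1-\eps$; (ii) compute $\|\pr_{\beta(A_0,A_1),\calE\mid Z=M} - \pr_U\|$ as the $\ell_1$-distance between the two-point sub-distribution $(p_0,p_1)$ and the point $(\tfrac12,\tfrac12)$, namely $|p_0-\tfrac12| + |p_1-\tfrac12|$; (iii) bound this using the triangle inequality by $|p_0 - p_1|/2 \cdot 2$ plus the contribution of the defect $1-(p_0+p_1) \leq \eps$ — more precisely, $|p_0-\tfrac12|+|p_1-\tfrac12| \leq |p_0-p_1| + |(p_0+p_1)-1| \leq \xi + \eps$; (iv) conclude. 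One should be slightly careful about exactly how $\pr_U$ is interpreted here (the uniform distribution on $\{0,1\}$, possibly tensored against nothing since the event $\calE$ is just a flag), but the computation is the same in all reasonable readings.

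The only mild obstacle is bookkeeping around the sub-normalisation: because we condition on $\calE$ jointly rather than dividing by $P(\calE\mid Z=M)$, the distribution has total mass less than $1$, and one must decide whether $\pr_U$ in the statement is the genuine uniform distribution (mass $1$) or a correspondingly sub-normalised object. In either case the slack is at most the mass defect $\eps$, so the stated bound $\xi+\eps$ holds; I would simply make this convention explicit and then the estimate in step (iii) is a one-line triangle-inequality argument. There is no hard analytic content here — the lemma is purely a restatement of ``small bias $\Rightarrow$ close to uniform'' for a single bit, with an $\eps$ correction for the smoothing event, and it serves as the bridge that lets the quantitative bound on $R^{\beta}(M)$ from Lemma~\ref{lem:oneptsec} be fed into Theorem~\ref{thm4.5}.
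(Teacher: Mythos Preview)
Your proposal is correct and follows essentially the same approach as the paper: both write $R^{\beta}(M)=p_0-p_1$ with $p_0+p_1=P(\calE\mid Z=M)\in[1-\eps,1]$ and then bound $|p_0-\tfrac12|+|p_1-\tfrac12|$. The only cosmetic difference is the final arithmetic: the paper adds and subtracts the two constraints to get $|p_i-\tfrac12|\leq(\xi+\eps)/2$ for each $i$ separately, whereas you use the one-line inequality $|p_0-\tfrac12|+|p_1-\tfrac12|\leq|p_0-p_1|+|(p_0+p_1)-1|$ directly.
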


\begin{proof}
 
 Fix a measurement outcome $M$ and suppose $|R^{\beta}(M)| \leq \xi$.
From the definition of $R^{\beta}(M)$ we have that:

\begin{align}
 R^{\beta}(M) &=\EE( 1_\calE \cdot (-1)^{\beta(A_0,A_1)} \,|\, Z=M ) \\
  &= P(\beta(A_0,A_1)=0, \calE | Z=M) -P( \beta(A_0,A_1) =1 , \calE|Z=M)
\end{align}

From $R^{\beta}(M)\leq \xi$ :
\begin{align}
   - \xi \leq P(\beta(A_0,A_1)=0, \calE | Z=M) -P(\beta(A_0,A_1) =1 , \calE|Z=M) \leq \xi \label{eq:lemRxorineq}
\end{align}

From $P(\calE|Z=M)\geq 1-\eps$ and basic probability theory:

\begin{align}
  1-\eps \leq P(\calE|Z=M) = &P(\beta(A_0,A_1)=0, \calE | Z=M) + P( \beta(A_0,A_1) =1 , \calE|Z=M)  \leq 1 \label{eq:ineq1} 
\end{align}

Combining equation \eqref{eq:lemRxorineq} with equation \eqref{eq:ineq1}
we get:

\begin{align}
 \left|  P(\beta(A_0,A_1)=0, \calE | Z=M) - \frac{1}{2} \right| \leq \frac{\xi+\eps}{2}
\end{align}

and 

\begin{align}
   \left| P(\beta(A_0,A_1)=1, \calE | Z=M) - \frac{1}{2} \right| \leq \frac{\xi+\eps}{2} \label{eq:pt2}
\end{align}

Then the $\ell_1$ distance between $\pr_{\beta(A_0,A_1), \calE|Z=M}$ and $\pr_U$ is:

\begin{align}
 \begin{split}
  \|\pr_{\beta(A_0,A_1),\calE|Z=M}-\pr_U\| &= \left|P(\beta(A_0,A_1)=0,\calE | Z=M) - \frac{1}{2} \right|
  \\& \, \,\, \, \,+  \left|P(\beta(A_0,A_1)=1,\calE | Z=M) - \frac{1}{2} \right|\\ &\leq \xi + \eps
 \end{split}
\end{align}
\end{proof}

Thus we have proven that if $R^{\beta}(M)$ is small, Lemma~\ref{lem:Rxor} together with Theorem~\ref{thm4.5} imply that
 Protocol~\ref{prot:OTIQM} is secure againsta dishonest user Bob that observes the measurement outcome $M$.

\subsection{Security For $\mu-$net}

In \cite{Liu15}, it is shown that there exists an $\mu-$net $\Wtil$ for the set of all possible separable measurement outcomes $W$ with respect to the operator norm $\|\cdot\|$.
In this section, we show that the protocol is secure for all the measurement outcomes in the $\mu-$net.

First, we introduce the following lemma as presented and proved in \cite{Liu15}.

\begin{lemma}{\cite[Lemma~3.5]{Liu15}} \label{lemma:3.5}
 For any $0 < \mu \leq 1$, there exists a set $\Wtil \subset W$, of cardinality $|\Wtil| \leq \left(\frac{9m}{\mu}\right)^{4m}$, which is a $\mu-$net for $W$ with respect to the operator norm $\|\cdot\|$.
\end{lemma}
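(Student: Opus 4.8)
The plan is to build $\Wtil$ out of single-qubit pieces, exploiting that every element of $W$ is an $m$-fold tensor product. Recall that a separable measurement outcome on $m$ qubits has the form $M = M_1 \otimes \cdots \otimes M_m$ with each $M_j$ a positive semidefinite single-qubit operator, and since $\|M\| = \prod_j \|M_j\| \leq 1$ we may, after rescaling the factors, assume $\|M_j\| \leq 1$ for every $j$. I would therefore fix a parameter $\delta > 0$, construct a finite set $\mathcal{N}$ of positive semidefinite single-qubit operators of operator norm at most $1$ such that every such operator lies within operator-norm distance $\delta$ of some element of $\mathcal{N}$, and take
\begin{align}
\Wtil = \{\, N_1 \otimes \cdots \otimes N_m \;:\; N_1,\dots,N_m \in \mathcal{N} \,\}.
\end{align}
A tensor product of single-qubit positive semidefinite operators each bounded by the identity is again a separable measurement outcome (one completes it to a separable POVM by repeatedly applying $I^{\otimes m} - N_1\otimes\cdots\otimes N_m = (I-N_1)\otimes I^{\otimes(m-1)} + N_1\otimes(I^{\otimes(m-1)} - N_2\otimes\cdots\otimes N_m)$ and recursing on the last summand), so $\Wtil \subset W$ as required.

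For the single-qubit net I would invoke a standard volumetric bound. The real vector space of $2\times 2$ Hermitian matrices has dimension $4$, and the set $\mathcal{M}$ of positive semidefinite single-qubit operators of operator norm at most $1$ is a convex subset of its operator-norm unit ball. Take $\mathcal{N}$ to be a maximal $\delta$-separated subset of $\mathcal{M}$: by maximality it is a $\delta$-net of $\mathcal{M}$, it lies inside $\mathcal{M}$, and the disjoint operator-norm balls of radius $\delta/2$ about its points all sit inside a ball of radius $1+\delta/2$, so comparing volumes gives $|\mathcal{N}| \leq (1+2/\delta)^4 \leq (3/\delta)^4$ for $0<\delta\leq 1$.

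The gluing step is a telescoping estimate: given $M = M_1\otimes\cdots\otimes M_m \in W$, choose $N_j \in \mathcal{N}$ with $\|M_j - N_j\|\leq\delta$, write $M - N_1\otimes\cdots\otimes N_m$ as a sum of $m$ terms of the form $N_1\otimes\cdots\otimes N_{j-1}\otimes(M_j-N_j)\otimes M_{j+1}\otimes\cdots\otimes M_m$, and use multiplicativity of the operator norm under tensor products together with $\|N_i\|,\|M_i\|\leq 1$ to conclude $\|M - N_1\otimes\cdots\otimes N_m\| \leq m\delta$. Choosing $\delta = \mu/m$ then makes $\Wtil$ a $\mu$-net for $W$, and
\begin{align}
|\Wtil| \leq |\mathcal{N}|^m \leq (3/\delta)^{4m} = (3m/\mu)^{4m} \leq (9m/\mu)^{4m},
\end{align}
which is the claimed bound.

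There is no serious obstacle here; the two things requiring care are that the approximation must be measured in the \emph{operator} norm (the norm used both in the definition of a $\delta$-non-negligible measurement outcome and in the later stability argument that lets security at a net point imply security at an arbitrary separable outcome), not the Frobenius or trace norm, and that the net must genuinely be a subset of $W$ — both handled above. The real content is the observation that the tensor structure decouples the covering problem into $m$ independent four-dimensional problems, which is why the net size is only exponential in $m$ rather than exponential in the dimension $\sim 4^m$ of the full space of $m$-qubit operators; everything else is the routine volume computation.
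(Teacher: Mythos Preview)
The paper does not actually prove this lemma: it is quoted verbatim from \cite{Liu15} and introduced with the words ``as presented and proved in \cite{Liu15}'', so there is no in-paper argument to compare against. Your reconstruction is correct and is precisely the standard argument one expects (and the one Liu uses): reduce to a $\delta$-net on the four-real-dimensional set of single-qubit PSD operators of operator norm at most $1$ via the volumetric bound $|\mathcal{N}|\leq(3/\delta)^4$, take $m$-fold tensor products, and control the approximation error by the telescoping sum, which yields $m\delta$ and forces $\delta=\mu/m$. The only cosmetic point is that your bound $(3m/\mu)^{4m}$ is in fact sharper than the stated $(9m/\mu)^{4m}$; the extra factor of $3$ in the statement is slack.
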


We then use Lemma~\ref{lemma:3.5}, and set
  
\begin{align}
 \mu = 2^{-(\alpha/2)k} \cdot \frac{\delta^2}{4^m},
\end{align}

The value of $\mu$ is chosen so that it is small enough to approximate any measurement outcome, see equation \eqref{eq:37} in the next section.

Together with the fact that $k \leq m \leq k^{\theta}$ and $\delta = 2^{-\delta_0 k}$ the cardinality of $\Wtil$ is bounded by:

\begin{align}
 |\Wtil| & \leq \left( 9m \cdot 2^{\frac{\alpha}{2}k}\frac{4^m}{\delta^2} \right)^{4m} = \left(2^{\log(9m)+\frac{\alpha}{2}k+2\delta_0 k +2m} \right)^{4m}  \\ 
 & = 2^{4m\log(9m)+ 4(\alpha/2+2\delta_0) km +8m^2} \leq 2^{4m\log(9m)+ (2\alpha+8\delta_0 + 8 )m^2} \leq 2^{4k^\theta\log(9k^\theta)+ (2\alpha+8\delta_0 + 8 )k^{2\theta}} \\ 
 & =  2^{4k^\theta\log{9}+4 k^\theta \theta\log{k}+ (2\alpha+8\delta_0 + 8 )k^{2\theta}} .
\end{align}

For sufficiently large $k$ it holds that $\log{k}\leq k \leq k^\theta \leq k^{2 \theta}$. This also implies that $k^\theta \log{k} \leq k^{2\theta}$. Then for all sufficiently large $k$,

\begin{align}
 |\Wtil| & \leq 2^{4k^\theta\log{9}+4 k^\theta \theta\log{k}+ (2\alpha+8\delta_0 + 8 )k^{2\theta}}  \leq 2^{(4\log{9}+4\theta+ 2\alpha+8\delta_0 + 8 )k^{2\theta}} \\
 &\leq 2^{\gamma k^{2 \theta}},
 \end{align}

where $\gamma$ is a constant.

Next we use Lemma~\ref{lem:oneptsec} and we set 

\begin{align}
 \lambda = 2^{-(\alpha/2)k} \cdot 2r.
\end{align}

Then we have that

\begin{align}
  \pr_{F,\, G; \, S,\, T}(|R^{\beta}(M)|\geq \lambda ) \leq 8 e^{1/3r} \sqrt{\pi r} (e^2 /2)^{-r/4}.
\end{align}

Finally, using the union bound we show that with high probability for all $\Mtil \in \Wtil$ and all non-degenerate linear functions $\beta$, $R^{\beta}(\Mtil)$ is small.

\begin{align}
\pr_{F,\, G; \, S,\, T}\Bigl( &\exists \Mtil \in \Wtil,\, 
\text{s.t. $\Mtil$ is $\delta$-non-negligible, and } \exists \beta \text{ s.t. } 
\abs{R^{\beta}(\Mtil)}  \geq \lambda \Bigr) \\
&\leq  |\Wtil| \cdot \sum_{\beta} \pr_{F,\, G; \, S,\, T}(|R^{\beta}(\Mtil)|\geq \lambda) \\
&\leq  |\Wtil| \cdot 2^{2\ell'} \cdot \pr_{F,\, G; \, S,\, T}(|R^{\beta}(\Mtil)|\geq \lambda) \\
&\leq  2^{\gamma k^{2\theta}} \cdot 2^{2\ell'}\cdot 
\Bigl( 8e^{1/(3r)} \sqrt{\pi r} (e^2/2)^{-r/4} \Bigr) \\
&\stackrel{r=4(\gamma+1)k^{2\theta}}{=} 2^{\gamma k^{2\theta}+3+(\gamma+1)k^{2\theta} + 2 \ell'} \cdot 
 e^{\frac{1}{12(\gamma+1)k^{2\theta}} + \frac{1}{2} \ln{4\pi (\gamma+1)k^{2\theta}}+2(\gamma+1)k^{2\theta}}  \\
&\stackrel{\ell'=\kappa k}{=}   \exp \left\{ (3+(2\gamma+1)k^{2\theta} + 2 \kappa k )\ln{2} \right. \\
&\, \, \, \,  \left. +\frac{1}{12(\gamma+1)k^{2\theta}} + \frac{1}{2} \ln{4\pi (\gamma+1)}+\theta \ln{k}-2(\gamma+1)k^{2\theta}\right\}  \label{eq:346}
\end{align}

Since $k^{2\theta}\ln{2}>0$ and $e^{k^{2\theta}\ln{2}} \geq 1$  we multiply \eref{eq:346} with $e^{k^{2\theta}\ln{2}}$.

Furthermore, using the fact that 
\begin{align} 
          f(k)= 2\kappa k \ln{2} + \theta \ln{k}+\frac{1}{12(\gamma+1)k^{2\theta}} + 3 \ln{2}+ \frac{1}{2} \ln{4\pi (\gamma+1)} = o(k^{2}),
\end{align}

since 

\begin{align}
 \lim_{k\rightarrow \infty} \frac{f(k)}{k^{2}} = \lim_{k\rightarrow \infty} \frac{2\kappa k \ln{2} \theta \ln{k}+\frac{1}{12(\gamma+1)k^{2\theta}} + 3 \ln{2}+ \frac{1}{2} \ln{4\pi (\gamma+1)}}{k^{2}} = 0,
\end{align}

\eref{eq:346} becomes:

\begin{align}
 &\exp\left\{(2\gamma+1)k^{2\theta}\ln{2}-2(\gamma+1)k^{2\theta}+2 \kappa k \ln{2} + \theta \ln{k}\right. \\ 
&\, \, \, \, \left. +\frac{1}{12(\gamma+1)k^{2\theta}} + 3 \ln{2}+ \frac{1}{2} \ln{4\pi (\gamma+1)}\right\} \\
&\leq   \exp{\left\{2(\gamma+1)(\ln{2}-1)k^{2\theta}+o(k^2)\right\}}   \\
&= \exp{\left\{- \left( 2(\gamma+1)(1-\ln{2})k^{2\theta} - o(k^2) \right)\right\}}.
\end{align}

Thus 
\begin{align}
\pr_{F,\, G; \, S,\, T}\Bigl( &\exists \Mtil \in \Wtil,\, 
\text{s.t. $\Mtil$ is $\delta$-non-negligible, and } \exists \beta \text{ s.t. } 
\abs{R^{\beta}(\Mtil)}  \geq \lambda \Bigr) \leq e^{-\Omega(k^{2\theta})}. \label{eq:ub}
\end{align}

Equation \eqref{eq:ub} implies that with high probability over $F$ and $G$,

\begin{align}
\label{eq:provfg}
\forall \Mtil \in \Wtil , \,  \text{($\Mtil$ is $\delta$-non-negligible)} 
\Rightarrow  \abs{R^{\beta}(\Mtil)} \leq \lambda.
\end{align}

Thus Protocol~\ref{prot:OTIQM} is secure in the case where the adversary observes any measurement outcome in the set $\Wtil$.

\subsection{Approximating Measurement Outcomes}

We now show that any measurement outcome $M$ can be approximated by another measurement outcome $\Mtil$, just as in \cite{Liu15}.

First, we introduce a lemma proved and used in \cite{Liu15} that shows that if $M$ is $2\delta-$non-negligible then $\Mtil$ is $\delta-$non-negligible.

\begin{lemma}{\cite[Lemma~3.6]{Liu15} } \label{lem:3.6}
Suppose that $M,\Mtil \in \left(\CC^{2\times2} \right)^{\otimes m}$, and $0 \preceq M \preceq I$, and $0 \preceq \Mtil \preceq I$. Suppose that $M$ is $2\delta-$non-negligible, where $0 < \delta \leq \frac{1}{2}$, and 
$\Tr(M) \geq 1$. Suppose that $\Mtil$ satisfies $\| M - \Mtil \| \leq \mu$, where $\mu \leq \frac{2}{3} \delta \cdot 2^{- m}$. Then $\Mtil$ is $\delta-$non-negligible.
\end{lemma}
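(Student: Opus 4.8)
The statement is a robustness estimate: $\delta$-non-negligibility of a POVM element survives a small enough operator-norm perturbation, at the cost of halving the constant. The plan is to compare the two quantities appearing in the definition of non-negligibility --- the ``numerator'' $\Tr(\Mtil\rho)$ and the ``denominator'' $\Tr(\Mtil)$ --- with their counterparts for $M$, and then substitute the hypotheses. Throughout write $d=2^m$ and let $\rho$ be the fixed state against which non-negligibility is measured, so that $\Tr(\rho)=1$; what we must show is $\Tr(\Mtil\rho)\ge\delta\,\Tr(\Mtil)/d$.

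First I would control the perturbation of the numerator using the matrix H\"older inequality $\abs{\Tr(XY)}\le\|X\|\cdot\|Y\|_{\Tr}$ together with $\|\rho\|_{\Tr}=\Tr(\rho)=1$:
\begin{align}
\abs{\Tr(\Mtil\rho)-\Tr(M\rho)} = \abs{\Tr((\Mtil-M)\rho)} \le \|\Mtil-M\| \le \mu ,
\end{align}
so that $\Tr(\Mtil\rho)\ge\Tr(M\rho)-\mu$. Then I would control the denominator using $\abs{\Tr X}\le d\,\|X\|$, which gives $\abs{\Tr\Mtil-\Tr M}\le 2^m\mu$; the direction I will actually use is $\Tr\Mtil\le\Tr M+2^m\mu$. (The opposite direction, together with $\Tr M\ge1$ and $\mu\le\tfrac23\delta\,2^{-m}\le\tfrac13\,2^{-m}$, also yields $\Tr\Mtil\ge\tfrac23>0$, so that $\delta$-non-negligibility of $\Mtil$ is not vacuous.)

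Now I would chain these with the assumption that $M$ is $2\delta$-non-negligible, i.e.\ $\Tr(M\rho)\ge 2\delta\,\Tr(M)/d$, obtaining
\begin{align}
\Tr(\Mtil\rho) &\ge \frac{2\delta\,\Tr(M)}{d}-\mu , \\
\frac{\delta\,\Tr(\Mtil)}{d} &\le \frac{\delta\,\Tr(M)}{d}+\delta\mu .
\end{align}
Comparing the two right-hand sides, it suffices to verify $\tfrac{\delta\,\Tr(M)}{d}\ge(1+\delta)\mu$; using $\Tr(M)\ge1$ this reduces to $\tfrac{\delta}{d}\ge(1+\delta)\mu$, and since $\mu\le\tfrac23\delta\,2^{-m}$ and $\delta\le\tfrac12$ we indeed have $(1+\delta)\mu\le\tfrac32\cdot\tfrac23\delta\,2^{-m}=\delta/d$. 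This gives $\Tr(\Mtil\rho)\ge\delta\,\Tr(\Mtil)/d$, i.e.\ $\Mtil$ is $\delta$-non-negligible.

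I do not anticipate a genuine obstacle: the content is entirely in choosing the correct direction for each of the two triangle-type bounds --- lower-bounding $\Tr(\Mtil\rho)$ through $\Tr(M\rho)$ but \emph{upper}-bounding $\Tr(\Mtil)$ through $\Tr(M)$ --- and in keeping the constants $2$, $\tfrac23$ and $1+\delta$ aligned. The only mildly delicate point is that one should resist lower-bounding the denominator $\Tr(\Mtil)$ by the crude constant $\tfrac23$; it is the use of $\Tr(M)\ge1$ at that step that makes the stated threshold $\mu\le\tfrac23\delta\,2^{-m}$ exactly sufficient, just as $\delta\le\tfrac12$ is exactly what makes the factor $\tfrac23(1+\delta)\le1$ work.
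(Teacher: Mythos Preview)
Your argument is correct. The paper does not actually prove this lemma; it is quoted from \cite{Liu15} and simply invoked where needed, so there is no in-paper proof to compare against. The approach you outline---bounding $\Tr(\Mtil\rho)$ from below and $\Tr(\Mtil)$ from above via $\abs{\Tr(XY)}\le\|X\|\cdot\|Y\|_{\Tr}$ and $\abs{\Tr X}\le 2^m\|X\|$, then reducing to $\delta/2^m\ge(1+\delta)\mu$---is exactly the standard perturbation estimate one would expect, and your observation that the constants $\tfrac23$ and $\delta\le\tfrac12$ are tuned to make $(1+\delta)\cdot\tfrac23\le 1$ is on point.
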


The next lemma shows that if the quantity $R^{\beta}(\Mtil)$ is defined in terms of an event $\calEtil$, we can define the quantity $R^{\beta}(M)$ by choosing $\calE$ such that $R^{\beta}(M) \approx R^{\beta}(\Mtil)$.

\begin{lemma}
\label{lem:continuity}
Suppose that $M, \Mtil \in (\CC^{2\times 2})^{\tensor m}$, and $0 \preceq M \preceq I$, and $0 \preceq \Mtil \preceq I$.  Suppose that $M$ is $2\delta$-non-negligible, where $0 < \delta \leq \tfrac{1}{2}$, and $\norm{M} = 1$.  Suppose that $\Mtil$ satisfies $\norm{M-\Mtil} \leq \mu$, where $\mu \leq \tfrac{1}{2}$, and $\Mtil$ is $\delta$-non-negligible.

Suppose there exists an event $\calEtil$, occurring with probability $P(\calEtil|\Ztil=\Mtil)$; and let  $R^{\beta}(\Mtil)$ be defined in terms of $\calEtil$, as shown in equation (\ref{eqn:bias}).

Then there exists an event $\calE$, occurring with probability $P(\calE|Z=M) = P(\calEtil|\Ztil=\Mtil) $, such that if  $R^{\beta}(M)$ is defined in terms of $\calE$, then the following statement holds:

\begin{align}
\abs{R^{\beta}(M) - R^{\beta}(\Mtil)} \leq 2\mu \biggl( \frac{2^m}{\delta} \biggr)^2.
\end{align}
\end{lemma}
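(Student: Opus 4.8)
The plan is to transport the smoothing event $\calEtil$ (defined relative to the outcome $\Mtil$) to an event $\calE$ (relative to $M$) that preserves probability and keeps $R^{\beta}$ almost unchanged, by exploiting the fact that $M$ and $\Mtil$ are close in operator norm and that $\Mtil$ is $\delta$-non-negligible. First I would recall that, in the isolated-qubits setting, conditioning on a measurement outcome $M$ means weighting the joint distribution of $(S,T)$ (and hence $F,G$) by $\Tr(M \rho_{st})/\Tr(M\rho)$, where $\rho_{st}$ is the state Bob receives given the messages $s,t$ and $\rho = \sum_{st} \Pr[S=s,T=t]\,\rho_{st}$. Since an event $\calEtil$ is just a (possibly randomised) subset of the underlying sample space, I would simply \emph{define} $\calE := \calEtil$ as an event on the same sample space; the content of the lemma is then that the conditional probability $P(\calE \mid Z=M)$ and the conditional expectation $R^{\beta}(M)$ — both of which are now computed with the $M$-weighting instead of the $\Mtil$-weighting — differ from their tilded counterparts by the stated amounts. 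To make $P(\calE\mid Z=M) = P(\calEtil\mid \Ztil=\Mtil)$ hold \emph{exactly}, I would instead define $\calE$ by re-randomising: keep the conditional law of $(S,T,F,G)$ given $\calEtil$ but reweight the ``$\calE$ happens'' indicator so that the $M$-conditional probability matches; concretely, set $P(\calE, S=s,T=t\mid Z=M) := c \cdot P(\calEtil,S=s,T=t\mid \Ztil=\Mtil)$ for a normalising constant $c$ chosen so the total equals $P(\calEtil\mid\Ztil=\Mtil)$. This is legitimate precisely because $\Mtil$ is $\delta$-non-negligible, so $\Tr(\Mtil\rho_{st})>0$ wherever it matters and the reweighting factors are finite.

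Next I would estimate the multiplicative discrepancy between the two conditional laws. Writing $p_{st}=\Pr[S=s,T=t]$, the ratio of weights is
\begin{align}
\frac{P(S=s,T=t\mid Z=M)}{P(S=s,T=t\mid \Ztil=\Mtil)} = \frac{\Tr(M\rho_{st})/\Tr(M\rho)}{\Tr(\Mtil\rho_{st})/\Tr(\Mtil\rho)}.
\end{align}
Since $\|M-\Mtil\|\leq\mu$ we have $|\Tr((M-\Mtil)\rho_{st})|\leq\mu$ and $|\Tr((M-\Mtil)\rho)|\leq\mu$ for every density operator, while $\Mtil$ being $\delta$-non-negligible together with $\Tr(\Mtil)\geq 1$ (which follows here from $\|M\|=1$ and $M$ being $2\delta$-non-negligible, $\Tr(M)\geq 2^m/2 \cdot 2\delta / \|M\|$, modulo the bookkeeping in Lemma~\ref{lem:3.6}) gives a lower bound $\Tr(\Mtil\rho_{st})\geq \delta\cdot\Tr(\Mtil)/2^m \geq \delta/2^m$ for the relevant $(s,t)$, and similarly $\Tr(\Mtil\rho)$ is bounded below. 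Substituting, each conditional weight changes by at most an additive $O(\mu\,(2^m/\delta)^2)$ type term; summing the signed contributions $(-1)^{\beta(A_0,A_1)}\cdot[\text{weight}]$ over all $(s,t)$, and using $|(-1)^{\beta}|=1$ together with $\sum_{st}(\cdots)\leq 1$, collapses the sum of per-term errors into the single bound $2\mu(2^m/\delta)^2$. The indicator $1_{\calE}$ only rescales things by a factor $\leq 1$, so it does not worsen the estimate.

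The main obstacle will be the bookkeeping of the trace-ratio perturbation: one has to be careful that the denominators $\Tr(M\rho)$ and $\Tr(\Mtil\rho)$ are both bounded away from zero (this is where the $2\delta$- versus $\delta$-non-negligibility split and the hypothesis $\|M\|=1$ earn their keep, via Lemma~\ref{lem:3.6}), and that after re-randomising to force $P(\calE\mid Z=M)=P(\calEtil\mid\Ztil=\Mtil)$ the extra normalisation constant $c$ is itself within $1\pm O(\mu(2^m/\delta)^2)$ so it does not introduce a larger error into $R^{\beta}(M)$. The inequality $\mu\leq\tfrac12$ is exactly what guarantees $c$ stays in a controllable range. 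Once these denominators are controlled, the rest is the telescoping estimate sketched above, and the constant $2$ and the exponent $2$ on $2^m/\delta$ come out of the two factors (state weight and normalisation, each contributing one factor of $2^m/\delta$). I would present the argument by first proving the weight-perturbation inequality as a standalone sublemma, then assembling $R^{\beta}(M)-R^{\beta}(\Mtil)$ as a single sum and bounding it term by term.
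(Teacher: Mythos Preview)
Your overall strategy---transport $\calEtil$ to an event $\calE$ for $M$, then bound the change in $R^{\beta}$ via operator-norm closeness and non-negligibility lower bounds on denominators---matches the paper's. But there is a genuine gap in your per-term estimate, and the paper organises the computation differently in a way that avoids it.

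The gap is your claim that $\Tr(\Mtil\rho_{st}) \geq \delta\,\Tr(\Mtil)/2^m$ for the individual states $\rho_{st}$. The $\delta$-non-negligibility of $\Mtil$ is a statement about the \emph{averaged} state $\xi = 4^{-\ell}\sum_{s,t}\rho_{st}$, namely $\Tr(\Mtil\xi)\geq \delta\,\Tr(\Mtil)/2^m$; it says nothing about any single $\rho_{st}$, and indeed $\Tr(\Mtil\rho_{st})$ can be arbitrarily small (or zero) for some $(s,t)$. So your per-term ratio bound does not go through, and with it the term-by-term summation plan collapses. Relatedly, your construction $P(\calE,S=s,T=t\mid Z=M):= c\cdot P(\calEtil,S=s,T=t\mid \Ztil=\Mtil)$ is not guaranteed to define a legitimate event: dividing by $P(S=s,T=t\mid Z=M)$ to recover $P(\calE\mid S=s,T=t,Z=M)$ can give values exceeding $1$ precisely when $\Tr(M\rho_{st})$ is small.

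The paper sidesteps both issues by defining $\calE$ through the \emph{conditional} probabilities $P(\calE\mid Z=M,S=s,T=t):=P(\calEtil\mid \Ztil=\Mtil,S=s,T=t)$, which are automatically in $[0,1]$. The payoff is that with this choice one can write
\[
R^{\beta}(\Mtil)=\frac{\Tr(\Mtil\nu)}{\Tr(\Mtil\xi)},\qquad R^{\beta}(M)=\frac{\Tr(M\nu)}{\Tr(M\xi)},
\]
with the \emph{same} matrices $\nu,\xi$ (both of trace norm at most $1$). Now only the global denominators $\Tr(M\xi)$ and $\Tr(\Mtil\xi)$ appear, and these \emph{are} lower-bounded by the $2\delta$- and $\delta$-non-negligibility hypotheses. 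A single telescoping identity for the difference of two ratios, together with $|\Tr((M-\Mtil)\nu)|\leq\mu$ and $|\Tr((M-\Mtil)\xi)|\leq\mu$, then gives the bound $2\mu(2^m/\delta)^2$ directly---no per-$(s,t)$ control and no normalising constant $c$ are needed.
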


\begin{proof}

By assumption, there is an event $\calEtil$, defined by the probabilities $P(\calEtil \;|\; \Ztil=\Mtil,\, S=s,\, T=t)$. 
Let $\rho_{st}$ be the quantum state used to encode messages $(s,t)$, i.e., 
this is the state of the one-time memory, conditioned on $S=s$ and $T=t$.  

We start by writing $R^{\beta}(\Mtil)$ in a more explicit form.  First consider $R^{\beta}(\Mtil)$, and note that $A_0 = F(S)$, $A_1 = G(T)$ and $s,t$ are chosen uniformly at random.  We can write $R^{\beta}(\Mtil)$ in the form: 
\begin{align}
\begin{split}
R^{\beta}(\Mtil) &= \frac{1}{P(\Ztil=\Mtil)} \sum_{s,t \in \set{0,1}^\ell} (-1)^{\beta(F(s),G(t))} 
P(\calEtil,\, S=s,\, T=t,\,\Ztil=\Mtil) \\
 &= \frac{1}{P(\Ztil=\Mtil)} \sum_{s,t \in \set{0,1}^\ell} (-1)^{\beta(F(s),G(t))}  
P(\calEtil,\, |\, \Ztil=\Mtil,\, S=s,\, T=t) \Tr(\Mtil\rho_{st}) \, 4^{-\ell} \\
 &= \frac{1}{P(\Ztil=\Mtil)} \, \Tr(\Mtil\nu),
\end{split} \label{eq:355}
\end{align}
where we define the matrix $\nu \in (\CC^{2\times 2})^{\tensor m}$ as follows:
\begin{align}
\begin{split}
\nu = 4^{-\ell} \sum_{s,t \in \set{0,1}^\ell} (-1)^{\beta(F(s),G(t))} 
 &P(\calEtil |\, \Ztil=\Mtil,\, S=s,\, T=t) \, \rho_{st}.
\end{split}
\end{align}

The trace norm of $\nu$ is $\norm{\nu}_{\Tr}=\Tr(\sqrt{\nu \nu^\dagger})$. But note that

\begin{align}
\nu^\dagger &= \left( 4^{-l} \sum_{s,t} (-1)^{\beta(F(s),G(t))} P(\calEtil|\Ztil=\Mtil,\,S=s,\,T=t)\,\rho_{st}\right)^\dagger \\
&= 4^{-l} \sum_{s,t} (-1)^{\beta(F(s),G(t))} P(\calEtil|\Ztil=\Mtil,\,S=s,\,T=t)\,\rho_{st}^\dagger
 \end{align}

Since density operators are self-adjoint $\rho_{st}^\dagger=\rho_{st}$. Then 

\begin{align}
\nu^\dagger = 4^{-l} \sum_{s,t} (-1)^{\beta(F(s),G(t))} P(\calEtil|\Ztil=\Mtil,\,S=s,\,T=t)\,\rho_{st} = \nu. 
\end{align}

Then the trace norm is: 
\begin{align} \norm{\nu}_{\Tr} = \Tr(  4^{-l} \sum_{s,t} (-1)^{\beta(F(s),G(t))} P(\calEtil|\Ztil=\Mtil,\,S=s,\,T=t)\,\rho_{st} ) \\
=~4^{-l} \sum_{s,t} (-1)^{\beta(F(s),G(t))} P(\calEtil|\Ztil=\Mtil,\,
S=s,\,T=t)\, \Tr(\rho_{st}) 
\end{align}

Since $\rho_{st}$ is a density operator  $\Tr(\rho_{st}=1)$
and because 
\begin{align} P(\calEtil|\Ztil=\Mtil,\,S=s,\,T=t) \leq  1 \\ \Rightarrow  (-1)^{\beta(F(s),G(t))} P(\calEtil|\Ztil=\Mtil,\,S=s,\,T=t) \leq 1 \\
\Rightarrow \sum_{s,t}  (-1)^{\beta(F(s),G(t))} P(\calEtil|\Ztil=\Mtil,\,S=s,\,T=t) \leq 4^l
\end{align}
we get:

\begin{align} 
\norm{\nu}_{\Tr} = 4^{-l} \sum_{s,t} (-1)^{\beta(F(s),G(t))} P(\calEtil|\Ztil=\Mtil,\,S=s,\,T=t) \leq 4^{-l}\cdot 4^l.
\end{align}

Thus $\norm{\nu}_{\Tr} \leq 1$.

In addition, we can rewrite $P(\Ztil=\Mtil)$ in the following way:

\begin{align}
\begin{split}
P(\Ztil=\Mtil) &= \sum_{s,t \in \set{0,1}^\ell} P(\Ztil=\Mtil,\, S=s,\, T=t) \\
 &= \sum_{s,t \in \set{0,1}^\ell} \Tr(\Mtil\rho_{st}) \, 4^{-\ell} \\
 &= \Tr\left(\Mtil \left( 4^{-\ell} \sum_{s,t \in \set{0,1}^\ell} \rho_{st}\right) \right) \, \\
 &= \Tr(\Mtil\xi), 
\end{split} \label{eq:357}
\end{align}
where we define the matrix $\xi \in (\CC^{2\times 2})^{\tensor m}$ as follows:
\begin{align}
\xi = 4^{-\ell} \sum_{s,t \in \set{0,1}^\ell} \rho_{st}.
\end{align}
Also, note that $\norm{\xi}_{\Tr} \leq 1$.

Taking into account \eref{eq:355} and \eref{eq:357} we can rewrite $R^\beta(\Mtil)$ as
\begin{align}
\label{eq:mtilbias}
R^{\beta}(\Mtil) = \frac{\Tr(\Mtil\nu)}{\Tr(\Mtil\xi)}, 
\end{align}
where $\nu, \xi \in (\CC^{2\times 2})^{\tensor m}$ satisfy $\norm{\nu}_{\Tr} \leq 1$ and $\norm{\xi}_{\Tr} \leq 1$.  

We now consider the measurement outcome $M$.  We  construct 
an event $\calE$ in order to define the quantity $R^{\beta}(M)$. The event $\calE$ conditioned on $Z=M$  behaves similarly to $\calEtil$ conditioned on $\Ztil=\Mtil$.

We now construct the event $\calE$. For a fixed measurement outcome $M$ and for all $s,t \in \set{0,1}^\ell$ we define:
\begin{align}
 P(\calE|Z=M,S=s,T=t)=P(\calEtil | \Ztil = \Mtil,S=s,T=t)
\end{align}

Note that this implies $P(\calE|Z=M)=P(\calEtil | \Ztil = \Mtil)$. Using this we can rewrite the quantitiy $R^{\beta}(M)$ in a similar way as $R^{\beta}(\Mtil)$:

\begin{align}
R^{\beta}(M) = \frac{\Tr(M\nu)}{\Tr(M\xi)},
\end{align}
where $\nu$ and $\xi$ are the \textit{same} matrices used to express $R^{\beta}(\Mtil)$ in equation (\ref{eq:mtilbias}).  In addition, we can lower-bound $\Tr(M\xi)$ and $\Tr(\Mtil\xi)$ as follows:
\begin{align}
\Tr(M\xi) & \geq  2\delta \cdot 2^{-m} \Tr(M) \geq   2\delta \cdot 2^{-m} \norm{M} \\
 &\geq   2\delta \cdot 2^{-m}, \\
\Tr(\Mtil\xi)& \geq  \delta \cdot 2^{-m} \Tr(\Mtil) \\ 
& \geq  \delta \cdot 2^{-m} \norm{\Mtil} \\
 &\geq  \delta \cdot 2^{-m} (1-\mu) \geq \delta \cdot 2^{-m} \cdot \tfrac{1}{2}.
\end{align}

Where we used that $M$ is $2\delta$-non-negligible, $\Mtil$ is $\delta-$non-negligible and the inequalities \eqref{eq:tr1} and \eqref{eq:tr2}.

Note that we use equation \eqref{eq:Mcloseness} and $\|M\|=1$ to get:

\begin{align}
 &\|M\| = \|M - \Mtil +\Mtil \| \leq \|M - \Mtil \| + \| \Mtil\| \\
 &\implies 1 \leq \mu + \| \Mtil \| \\
 &\implies \| \Mtil \| \geq 1 - \mu \label{eq:tr2}
\end{align}

We also used the fact that $\mu \leq \frac{2}{3} \cdot \delta \cdot 2^{-m}$, $\delta \leq \frac{1}{2}$ and $m\geq k \geq k_0 \geq 1$ (as defined in \cite{Liu15}) to lower bound $1-\mu$ as follows:
\begin{align}
 1-\mu \geq 1 - \frac{2}{3} \cdot \delta \cdot 2^{-m} \geq 1 - \frac{1}{3} \cdot 2^{-m} \geq 1 - \frac{1}{6} \geq \frac{1}{2}
\end{align}

Now we can write $R^{\beta}(M) - R^{\beta}(\Mtil)$ as follows:
\begin{align}
R^{\beta}(M) - R^{\beta}(\Mtil) &= \frac{\Tr(M\nu)}{\Tr(M\xi)} - \frac{\Tr(\Mtil \nu)}{\Tr(\Mtil \xi)} \\
&= \frac{\Tr(M\nu)}{\Tr(M\xi)} -\frac{\Tr(\Mtil \nu)}{\Tr(M\xi)}  + \frac{\Tr(\Mtil \nu)}{\Tr(M\xi)} - \frac{\Tr(\Mtil \nu)}{\Tr(\Mtil \xi)} \\
&= \frac{\Tr(M\nu)-\Tr(\Mtil \nu)}{\Tr(M\xi)}  + \frac{\Tr(\Mtil \nu)\Tr(\Mtil \xi)}{\Tr(M\xi)\Tr(\Mtil \xi)} - \frac{\Tr(\Mtil \nu) \Tr(M\xi)}{\Tr(M\xi)\Tr(\Mtil \xi)} \\
&= \frac{\Tr((M-\Mtil)\nu)}{\Tr(M\xi)} + \Tr(\Mtil\nu) \frac{\Tr((\Mtil-M)\xi)}{\Tr(M\xi)\Tr(\Mtil\xi)}.
\end{align}
We can then upper-bound this quantity:
\begin{align}
\begin{split}
\abs{R^{\beta}(M) - R^{\beta}(\Mtil)}
 &\leq \frac{\mu}{  2\delta \cdot 2^{-m}} + (1+\mu) \frac{\mu}{  2\delta \cdot 2^{-m} \cdot  \delta \cdot 2^{-m} \cdot \tfrac{1}{2}} \\
 &= \frac{\mu}{  2\delta \cdot 2^{-m}} \biggl( 1 + \frac{(1+\mu)}{ \delta \cdot 2^{-m} \cdot \tfrac{1}{2}} \biggr) \\
 &\leq 2\mu \biggl( \frac{2^m}{ \delta} \biggr)^2.
\end{split}
\end{align}
This completes the proof of Lemma~\ref{lem:continuity}.
\end{proof}

From Lemma~\ref{lem:continuity} we can show that Protocol~\ref{prot:OTIQM} is secure, when the adversary observes any separable measurement outcome $M \in W$ that is $2\delta$-non-negligible.

Note that $\norm{M}=1$ (that is assumed without loss of generality \cite{Liu15}) implies $\Tr(M) \geq 1$: 
\begin{align}
\Tr(M) &\geq 1 \label{eq:tr1}
\end{align}

Let $\Mtil \in \Wtil$ be the nearest point in the $\mu$-net $\Wtil$. Then we have:

\begin{align}
 \|M-\Mtil \|\leq \mu, \label{eq:Mcloseness}
\end{align}

where $\mu = 2^{- (\alpha/2)k}\frac{\delta^2}{4^m}$.

Then from Lemma~\ref{lem:3.6}, $\Mtil$ is $\delta$-non-neglibible. Then from equation \eqref{eq:provfg} we get that $|R^{\beta}(\Mtil)| \leq \lambda$, where $\lambda = 2^{-(\alpha/2)k}\cdot2r$:
 
\begin{align}
-2^{-\frac{\alpha}{2}k} \cdot 2r  \leq R^{\beta}(\Mtil) \leq 2^{-\frac{\alpha}{2}k} \cdot 2r.  \label{eq:add1}
\end{align} 

Using Lemma~\ref{lem:continuity} and substituting $\mu$ we get that:

\begin{align}
 \label{eq:37}
|R^{\beta}(M)-R^{\beta}(\Mtil)| \leq 2 \mu \left( \frac{2^m}{\delta} \right)^2 = 2 \cdot 2^{-(\alpha/2)k}  \\
\implies -2 \cdot 2^{-(\alpha/2)k} \leq R^{\beta}(M)-R^{\beta}(\Mtil) \leq 2 \cdot 2^{-(\alpha/2)k}.  \label{eq:add2}
\end{align}

By adding equations \eqref{eq:add1} and \eqref{eq:add2} we get:

\begin{align}
-2^{-(\alpha/2)k}\cdot 2(r+1) \leq R^{\beta}(M) \leq 2^{-(\alpha/2)k} \cdot 2(r+1)\\
\implies |R^{\beta}(M)| \leq 2^{-\frac{\alpha}{2}k} \cdot 2(r+1)
\end{align}

Then from Lemma~\ref{lem:Rxor} we see that Protocol~\ref{prot:OTIQM} is secure for all $2\delta$-non-negligible measurement outcomes $M \in W$ that a dishonest user Bob may observe:

\begin{align}
 \| \pr_{ \beta(A_0, A_1) \calE|Z=M} - \pr_U\| \leq 2^{-\frac{\alpha}{2}k} \cdot 2(r+1) + \eps = 2^{-\frac{\alpha}{2}k} \cdot 2(r+1) + 2^{-\eps_0 k}  \leq 2^{-\Omega(k)}. \label{eq:secbound}
\end{align}

Consider any LOCC adversary, and let $Z$ be the random variable representing the measurement outcome. We can then write:

\begin{align}
\begin{split}
 \|\pr_{\beta(A_0, A_1) Z}& - \pr_U \times \pr_Z \| \\
  & \leq \sum_{M} P(Z=M)\|\pr_{\beta(A_0, A_1) |Z=M} - \pr_U \|\\
    & \leq 2 \delta + \sum_{M: M\text{ is $2\delta$-non-negligible}} P(Z=M)\|\pr_{\beta(A_0, A_1) |Z=M} - \pr_U \|,\label{as1}
  \end{split}
\end{align}

since $\sum_{M: M \text{is $2 \delta$-negligible}} P(Z=M) \leq 2\delta$.

Taking into account the bound shown in equation~\eqref{eq:secbound} and the fact that $P(\neg \calE | Z=M) \leq \eps$, equation \eqref{as1} becomes

\begin{align}
\begin{split}
 \|\pr_{\beta(A_0, A_1) Z}& - \pr_U \times \pr_Z \| \\
  & \leq 2 \delta + 2 \eps + \sum_{M: M\text{ is $2\delta$-non-negligible}} P(Z=M)\left( \|\pr_{\beta(A_0, A_1), \, \calE |Z=M} - \pr_U \|  \right) \\
  & \leq 2 \delta + 2 \eps +  2^{-\frac{\alpha}{2}k} \cdot 2(r+1) + \eps \\ 
  & \leq 2\cdot 2^{-\delta_0 k} + 3\cdot 2^{-\eps_0k} +  2^{-\frac{\alpha}{2}k} \cdot 2(r+1). 
  \end{split}
\end{align}

  Note that in the last step we use the definitions of $\delta =  2^{-\delta_0 k} $ and $\eps= 2^{-\eps_0k}$.

Then Theorem~\ref{thm4.5} with $\ell' = \kappa k$, where $0 < \kappa < \min\left\{\frac{\delta_0}{2},\frac{\eps_0}{2},\frac{\alpha}{4}\right\}$ and

\begin{align}
 \|\pr_{\beta(A_0, A_1) Z} - \pr_U \times \pr_Z \| \leq 2\cdot 2^{-\delta_0 k} + 3\cdot 2^{-\eps_0k} +  2^{-\frac{\alpha}{2}k} \cdot 2(r+1) = \frac{\eps'}{2^{2\ell'+1}},
\end{align}

implies that

\begin{align}
\begin{split}
\label{eq70}
  \| \pr_{A_{1-D}\, A_{D} \, D \, Z}  - \pr_{U^\ell} \times \pr_{A_{D} \ \,D \, Z} \| & \leq \eps' \\
  &\leq  2^{2\ell'+1} \cdot \left( 2\cdot 2^{-\delta_0 k} + 3\cdot 2^{-\eps_0k} +  2^{-\frac{\alpha}{2}k} \cdot 2(r+1) \right) \\
    &\leq  2^{2\ell'+1} \cdot \left( 2\cdot 2^{-\delta_0 k} + 4\cdot 2^{-\eps_0k} +  2^{-\frac{\alpha}{2}k} \cdot 2(r+1) \right) \\
  & \leq 2^{-\delta_0 k+2\ell'+2} + 2^{-\eps_0k+2\ell'+3} +  2^{-\frac{\alpha}{2}k+2\ell'+2} + 2^{-\frac{\alpha}{2}k+2\ell'+2+\ln{r}} \\
  & \leq 2^{-(\delta_0 k-2(\ell'+1))} + 2^{-(\eps_0k-2\ell'+3)} +  2^{-(\frac{\alpha}{2}k-2(\ell'+1))} 	\\
  &\; \; \; \, + 2^{-(\frac{\alpha}{2}k-2(\ell'+2+\theta \ln{k})-\ln{(\gamma+1)})}
 \end{split}
\end{align}

Next we examine the term  $2^{-(\delta_0 k-2(\ell'+1))}$, since $\ell' < \frac{\delta_0}{2}k$ then:
\begin{align}
 \exists c>0 \; \exists k'_0 : \forall k>k'_0\; & \text{ the following holds: }  \delta_0 k-2(\ell'+1) \geq c k \\
 \implies& \delta_0 k-2(\ell'+1) \in \Omega{(k)}
\end{align}

then we have that for sufficiently large $k$:

\begin{align}
2^{-(\delta_0 k-2(\ell'+1))} \leq 2^{-\Omega{(k)}} \label{eq:lreqq1}
\end{align}

In a similar way we get that since $\ell'<\frac{\eps_0}{2}k$ then for sufficiently large $k$:

\begin{align}
2^{-(\eps_0 k-2\ell'+3)} \leq 2^{-\Omega{(k)}} \label{eq:lreqq2}
\end{align}

Finally since $\ell'< \frac{\alpha}{4}k$ then for sufficiently large $k$

\begin{align}
2^{-(\frac{\alpha}{2} k-2(\ell'+1))} \leq 2^{-\Omega{(k)}} \label{eq:lreqq3}
\end{align}

and

\begin{align}
2^{-(\frac{\alpha}{2}k-2(\ell'+2+\theta \ln{k})-\ln{(\gamma+1)})} = 2^{-(\frac{\alpha}{2}k-2\ell' -o(k)) } \leq 2^{-\Omega{(k)}}, \label{eq:lreqq4}
\end{align}

 which holds since 
 
\begin{align}
 f(k)&=2\theta\ln{k}+4+\ln{(\gamma+1)} \in o(k) .
\end{align}

Thus from equations \eqref{eq:lreqq1}, \eqref{eq:lreqq2}, \eqref{eq:lreqq3} and \eqref{eq:lreqq4},  for sufficiently large $k$

\begin{align}
\| \pr_{A_{1-D}\, A_{D} \, D \, Z}  - \pr_{U^\ell} \times \pr_{A_{D} \ \,D \, Z} \| \leq 2^{-\Omega{(k)}},
\end{align}

which completes the proof of Theorem~\ref{thm:main} \qed
\chapter{Flavours Of Oblivious Transfer}
\label{Chapter4}
\lhead{Chapter 4. \emph{Flavours Of Oblivious Transfer}}

In the previous chapter, we showed that a secure string \ROT protocol can be constructed in the isolated qubits model.
In this chapter we use the \ROT functionality to construct protocols that implement more complex oblivious transfer functionalities.

First we present a protocol that implements the \OT functionality using an instance of the \ROT functionality. 
As we have already discussed in \cref{Chapter1}, an \OT protocol is sufficient to implement any two-party computation securely, which makes it a fundamental problem in cryptography.

Secondly, we present a reduction from \kOT and \kROT to a series of $k$ \OT{s}, that was first introduced in \cite{BCR86}. 

Finally, we construct a protocol that implements the weaker \sROT functionality using only $\log{k}$ \ROT functionalities. 

These results are more general as these protocols are not restricted to the isolated qubits model as they rely on the existence and composability of a secure \ROT protocol in a cryptographic model.

\section{\OT from \ROT}

In this section, we introduce a protocol that implements the \OT functionality making use of a \ROT functionality. A sketch of Protocol~\ref{prot:new-OT} can be seen in \fref{img:new-OT}

\begin{prot}
 \label{prot:new-OT}
 A \OT protocol between user Alice with inputs $A_0,A_1 \in \{0,1\}^\ell$ and user Bob with input $D \in \{0,1\}$.
 
 \begin{enumerate}
  \item Alice  and Bob  use a \ROT functionality with no input and input $D$ respectively.
  \item Alice receives outputs $S_0,S_1 \in \{0,1\}^\ell$ and Bob receives $S_D \in \{0,1\}^\ell$.
  \item Alice then sends two messages $Y_0,Y_1$ such that:
  \begin{align}
   Y_0 &=S_0 \oplus A_0 \\
   Y_1 &=S_1 \oplus A_1
  \end{align}
 \item Bob then receives output:
  \begin{align}
   X_D = Y_D \oplus S_D
  \end{align}
 \end{enumerate}
\end{prot}

\begin{figure}[H]
 \centering
 \def\svgwidth{0.8\textwidth}
  \input{./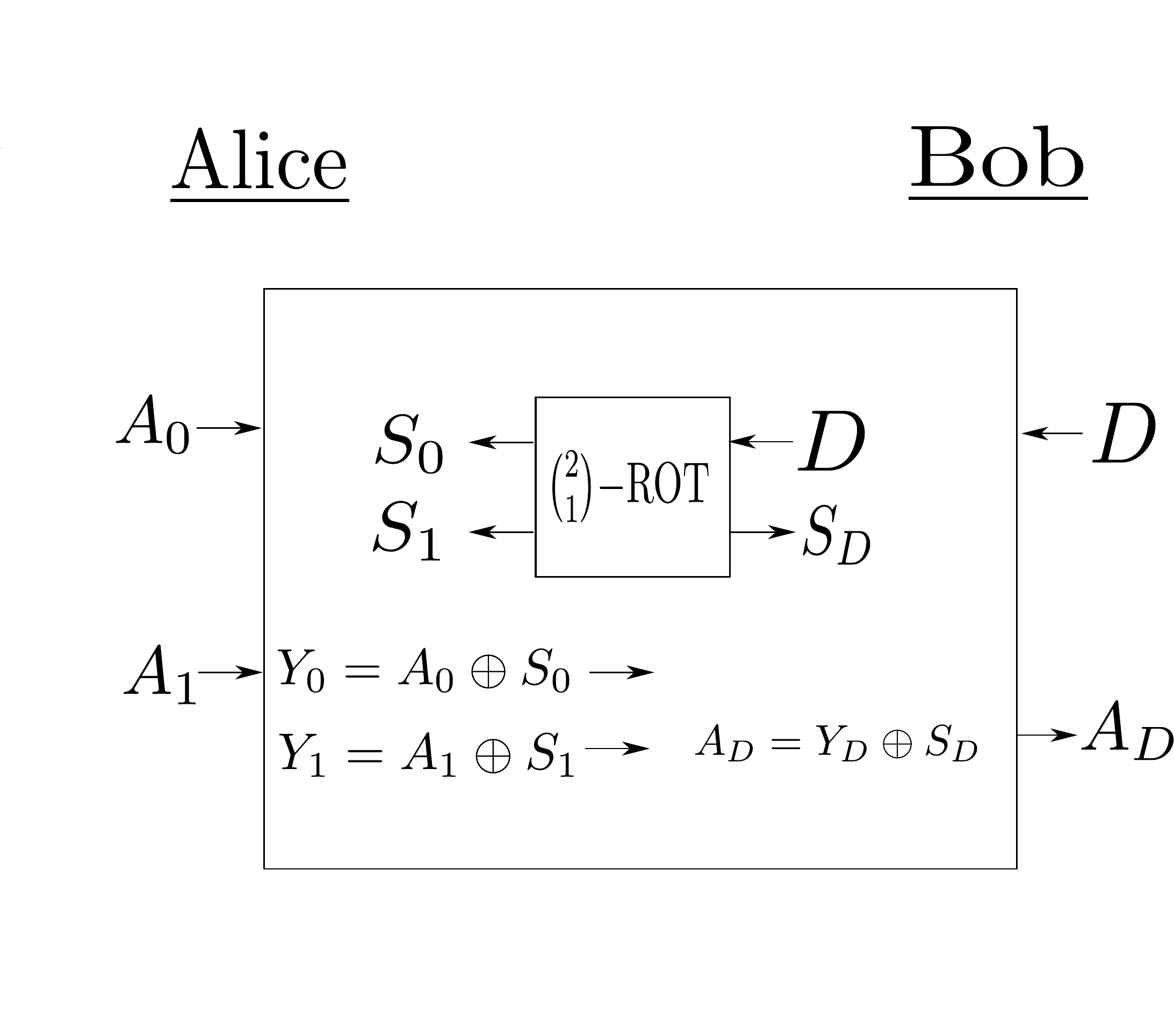_tex}
   \caption{Sketch of the \OT Protocol \ref{prot:new-OT} using a \ROT functionality.}
  \label{img:new-OT}
\end{figure}

We introduce the theorem that states that if the \ROT functionality is implemented securely, then so is the \OT functionality. 

\begin{theorem}
\label{thm:new-OT}
If the \ROT functionality used in Protocol~\ref{prot:new-OT} fulfills the $\eps-$security Definition~\ref{def:secrot}, then the \OT Protocol~\ref{prot:new-OT} is $\eps-$secure according to Definition~\ref{def:secot}.
\end{theorem}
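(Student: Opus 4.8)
The plan is to verify the three conditions of Definition~\ref{def:secot} for Protocol~\ref{prot:new-OT}, reducing each one to the corresponding condition of Definition~\ref{def:secrot} for the underlying \ROT functionality. The key observation throughout is that the map $(S_0,S_1)\mapsto(Y_0,Y_1)=(S_0\oplus A_0,\,S_1\oplus A_1)$ is, for fixed $(A_0,A_1)$, a bijection on $\{0,1\}^\ell\times\{0,1\}^\ell$, and that $S_D$ acts as a one-time pad on $A_D$ for the honest receiver while $S_{1-D}$ (resp.\ $S_{1-D'}$) remains hidden from a dishonest receiver.

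First I would treat \textbf{correctness}: assuming the \ROT is $\eps$-correct, honest Alice gets $(S_0,S_1)$ and honest Bob gets $G=S_D$, so he outputs $X_D=Y_D\oplus S_D=(S_D\oplus A_D)\oplus S_D=A_D$, which fails only when the \ROT correctness fails, i.e.\ with probability at most $\eps$; Alice gets no extra output. Next, \textbf{security for Bob} (against dishonest Alice): Alice's view $V'$ in Protocol~\ref{prot:new-OT} consists of her \ROT output/view together with the inputs $A_0,A_1$ and the locally computed $Y_0,Y_1$; by the \ROT security for Bob there exist $S_0',S_1'$ with $\pr[G=S_D']\ge 1-\eps$ and $\pr_{DV'_{\text{\ROT}}S_0'S_1'}\appe\pr_D\cdot\pr_{V'_{\text{\ROT}}S_0'S_1'}$. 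I would then set $A_0':=Y_0\oplus S_0'$, $A_1':=Y_1\oplus S_1'$; since $Y_D\oplus S_D'=$ Bob's actual output whenever $G=S_D'$, we get $\pr[G=A_D']\ge 1-\eps$, and the independence of $D$ from $(V'_{\text{\ROT}},S_0',S_1')$ transfers to independence of $D$ from $(V',A_0',A_1')$ because $V'$ and $(A_0',A_1')$ are deterministic functions of $(V'_{\text{\ROT}},S_0',S_1')$ and Alice's \ROT-inputs, none of which depend on $D$.

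The main obstacle will be \textbf{security for Alice} (against dishonest Bob), because here one must produce the choice bit $D'$ and argue the two Markov-chain conditions \eqref{def-cond1}. I would take $D'$ to be exactly the random variable guaranteed by the \ROT security for Alice, so that $\pr_{A_{1-D'}GA_{D'}D'}\appe\pr_U\cdot\pr_{GA_{D'}D'}$ already holds at the \ROT level. For Protocol~\ref{prot:new-OT}, Bob's final output $G'$ is obtained from his \ROT-view $G$ together with the messages $(Y_0,Y_1)=(S_0\oplus A_0,S_1\oplus A_1)$. The first required condition, $\pr_{D'A_0A_1}\appe\pr_{D'}\cdot\pr_{A_0A_1}$, holds because $D'$ is a function of Bob's \ROT-view which is independent of Alice's inputs $(A_0,A_1)$ (Alice has no \ROT-input). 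For \eqref{def-cond1}, the point is that, conditioned on $(A_{D'},D')$, the message $Y_{1-D'}=S_{1-D'}\oplus A_{1-D'}$ is a one-time pad of $A_{1-D'}$ whose key $S_{1-D'}$ is, by the \ROT guarantee, (approximately) uniform and independent of $(G,A_{D'},D')$; hence $Y_{1-D'}$ carries no information about $A_{1-D'}$ beyond what is already in $G'$ through $A_{D'},D'$. Formally I would write $G'$ as a (randomised) function of $(G,Y_{D'},Y_{1-D'})$, substitute $Y_{1-D'}=S_{1-D'}\oplus A_{1-D'}$, and use the \ROT bound $\pr_{A_{1-D'}GA_{D'}D'}\appe\pr_U\cdot\pr_{GA_{D'}D'}$ to replace $S_{1-D'}$ by an independent uniform string, after which $Y_{1-D'}$ becomes independent uniform and thus $A_{1-D'}$ drops out, giving $\pr_{G'A_{D'}D'A_{1-D'}}\appe\pr_{G'|A_{D'}D'}\cdot\pr_{A_{D'}D'A_{1-D'}}$ up to the same $\eps$. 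I would conclude by collecting the (at most a constant number of) $\eps$-errors incurred, so that Protocol~\ref{prot:new-OT} is $O(\eps)$-secure, and note that the constant can be absorbed or the statement read with the stated $\eps$ if the \ROT errors are taken slightly smaller.
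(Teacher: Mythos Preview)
Your proposal is correct and follows essentially the same route as the paper: correctness is immediate from $Y_D\oplus S_D=A_D$; for security for Bob you set $A_i':=Y_i\oplus S_i'$ and transfer the \ROT independence; for security for Alice you take $D'$ from the \ROT guarantee, use that Alice's inputs $(A_0,A_1)$ are untouched by the \ROT to get $\pr_{D'A_0A_1}=\pr_{D'}\cdot\pr_{A_0A_1}$, and then apply the one-time-pad argument on $Y_{1-D'}=S_{1-D'}\oplus A_{1-D'}$ to obtain the Markov condition~\eqref{def-cond1}. The only difference is cosmetic: you are a bit more explicit about error accumulation (writing $O(\eps)$), whereas the paper simply states $\eps$-security without tracking constants.
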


\subsection{Proof of Theorem \ref{thm:new-OT}}

In order to prove Theorem~\ref{thm:new-OT} we need to show that all the conditions of Definition~\ref{def:secot} are fulfilled.

\subsubsection{Correctness}
 
 \begin{proof}
 
 If both Alice and Bob are honest, they follow Protocol~\ref{prot:new-OT}. 
 
 Then if the \ROT functionality is implemented correctly Alice will receive outputs $S_0,S_1$ and Bob will receive $S_D$ except with probability $\eps$.
 
 After Alice sends $Y_0,Y_1$, where $Y_i = A_i \oplus S_i$, Bob outputs $Y_D \oplus S_D = A_D$, implying correctness except with probability $\eps$.

 \end{proof}

 \subsubsection{Security for Alice}
 
 \begin{proof}
 For an honest user Alice, security of the \ROT functionality implies that there exists a random variable $D'$ such that
 
\begin{align}
 \pr_{S_{1-D'}G'S_{D'}D'} \appe \pr_{U} \cdot \pr_{G'S_{D'}D'} \label{eq:ld5},
\end{align}
 
 where $S_0,S_1$ are the outputs of honest Alice and $G'$ is the output of a dishonest user Bob.

 Define random variable $D''=D'$, then since an honest user Alice does not use her inputs $A_0,A_1$ in the \ROT it is clear that
 
 \begin{align}
  \pr_{G'D''S_0S_1A_0A_1} = \pr_{G'D''S_0 S_1} \cdot \pr_{A_0A_1} \label{eq:ld3},
 \end{align}

which implies that

\begin{align}
 \pr_{D''A_0 A_1} = \pr_{D''} \cdot \pr_{A_0A_1} \label{eq:ldA}.
\end{align}

Furthermore \eref{eq:ld3} implies that 

\begin{align}
   \pr_{G'D''S_{D''}A_0A_1} = \pr_{G'S_{D''}|D''} \cdot \pr_{D''} \cdot \pr_{A_0A_1},
\end{align}

and from \eref{eq:ldA} we have that

\begin{align}
  \pr_{G'D''S_{D''}A_0A_1} = \pr_{G'S_{D''}|D''} \cdot \pr_{A_0A_1D''},
\end{align}
  
  and 
  
\begin{align}
  \pr_{G'D''S_{D''}A_0A_1} = \pr_{G'S_{D''}|D''A_{D''}} \cdot \pr_{A_{D''}A_{1-D''}D''} \label{eq:ld4}.
\end{align}

 Bob's input $G''$ depends on $G',Y_0,Y_1$, where $Y_0 = S_0 \oplus A_0$ and $Y_1 = S_1 \oplus A_1$.
 
 Then from \eref{eq:ld5} and \eqref{eq:ld3}  we get
 
 \begin{align}
  \pr_{S_{1-D''}G'D''S_{D''}A_{D''}} \appe \pr_{U} \cdot \pr_{G'S_{D''}D''} \cdot \pr_{A_{D''}},
 \end{align}

 which implies that
 
  \begin{align}
  \pr_{S_{1-D''}|G'D''S_{D''}A_{D''}} \appe \pr_{U}.
 \end{align}
 
 Then $Y_{1-D''}$ is independent of $A_{1-D''}$ given $G',D'',S_{D''},A_{D''}$ and $Y_{D''}=A_{D''}\oplus S_{D''}$.
 
 Therefore

 \begin{align}
 \pr_{Y_0Y_1 G' D'' S_{D''}A_{D''}A_{1-D''}} \appe \pr_{Y_0Y_1| G' D'' S_{D''}A_{D''}} \cdot \pr_{ G' D'' S_{D''}A_{D''}} \cdot \pr_{A_{1-D''}} 
 \end{align}

 and taking into account \eref{eq:ld4}

 \begin{align}
 \pr_{Y_0Y_1 G' D'' S_{D''}A_{D''}A_{1-D''}} &\appe \pr_{Y_0Y_1| G' D'' S_{D''}A_{D''}} \cdot \pr_{ G' S_{D''}|D'' A_{D''}} \cdot \pr_{D'' A_{D''} A_{1-D''}} \\
 \Rightarrow \pr_{ G'' D'' A_{D''}A_{1-D''}} &\appe  \pr_{G'' |D''A_{D''}} \cdot \pr_{D''A_{D''}A_{1-D''}} \label{eq:ldB}
 \end{align}

 Thus, equations \eqref{eq:ldA} and \eqref{eq:ldB} imply that Protocol~\ref{prot:new-OT} is secure for Alice.
 \end{proof}

 \subsubsection{Security for Bob}
 \begin{proof}
 If the \ROT functionality is secure for honest user Bob with input $D$, there exist random variables $S'_0,S'_1$ such that
 
 \begin{align}
  \pr[G=S'_D] \geq 1 - \eps
 \end{align}

 and

 \begin{align}
  \pr_{S'_0 S'_1 D} \appe \pr_{S'_0 S'_1} \pr_{D} \label{eq:ld1},
 \end{align}

 where $D$ and $G$ are Bob's input and output used in $F_{\text{\ROT}}$.
 
 We can then define random variables $A'_0 = Y_0 \oplus S'_0$ and $A'_1 = Y_1 \oplus S'_1$, where $Y_0$ and $Y_1$ are the messages sent by Alice to Bob after the \ROT.

 Then from \eref{eq:ld1} and since Alice receives no further information from Bob after the \ROT has been used it is clear that
 
 \begin{align}
  \pr_{A'_0 A'_1 D} \appe \pr_{A'_0 A'_1} \pr_{D}.
 \end{align}

 Finally since Bob is honest, his output $G'$ will be $G'=G \oplus Y_D$, which implies that
 
  \begin{align}
  \pr[G=A'_D] \geq 1 - \eps.
 \end{align}

 Thus Protocol~\ref{prot:new-OT} is secure for Bob.

\end{proof}
This completes the proof of Theorem~\ref{thm:new-OT}.

\section{\kOT And \kROT From \OT}

The following \kOT protocol makes use of $k$ \OT functionalities and was first presented in \cite{BCR86}.

\begin{prot}
\label{prot:ineff-kOT}
 A \kOT protocol between user Alice with inputs $X_1,\dots,X_k \in \{0,1\}^\ell$ and user Bob with input $D \in \{1,\dots,k\}$.

\begin{enumerate}
 \item  Alice chooses strings $B_1,\dots, B_{k} \in \{0,1\}^\ell$ uniformly at random. 
 \item Alice inputs  $A_{1,0}= B_1$ and $A_{1,1}=X_1$ in the first \OT. Bob inputs his choice $D_1=\delta_{1,D}$ and receives string $A_{1,D_1}$.
\item  For $i=2,\dots,k$: \begin{enumerate}
                                 \item Alice inputs strings  $A_{i,0} = B_{i} \oplus B_{i-1}$ and $A_{i,1} = X_i \oplus B_{i-1}$ in the $i^{\text{th}}$ \OT.
                                 \item Bob inputs his $i^{th}$ choice $D_i = \delta_{i,D}$ and receives the string $A_{i,D_i}$
                           \end{enumerate}
\item Bob receives output:   \begin{align}   X_D = A_{D,1}\oplus \left( \bigoplus_{\substack{j=1}}^{D-1} A_{j,0} \right)	   \end{align}	
\end{enumerate}
\end{prot}

It is easy to see that a \kROT protocol can be constructed if Alice chooses her input messages $X_1,\dots, X_k$ uniformly at random. 
A sketch of the security proof for this protocol can be found in \cite{BCR86}. We present this protocol and its possible extension to a \kROT protocol to argue that indeed such a protocol can be constructed
from a secure \OT. However, it requires $k$ (or~$k-1$)\footnotemark \OT functionalities. In the next section we present a protocol that fulfills a weaker security definition but requires only $\log{k}$ \ROT{s}. We will
later use that protocol in \cref{Chapter5} to achieve secure password-based identification.

\footnotetext{This protocol can be implemented using $k-1$ \OT{s} if in the last \OT Alice inputs $A_{k,0}= B_{k-1} \oplus X_{k-1}$ and $A_{k,1} = B_{k-1} \oplus X_k$. }

\section{\sROT from \ROT}

\subsection{Protocol And Security Definition}

In this section, we introduce a protocol that implements the \sROT functionality. 
While the following \sROT protocol fulfills a weaker security definition, it
is more efficient than Protocol~\ref{prot:ineff-kOT} or its extension to a \kROT as it makes use of only $\log{k}$ \ROT{s} instead of $k$ \OT{s}.

Alice with no input, receives $\log{k}$ pairs of strings $(A_{i,0},A_{i,1})$ from the $i^{\text{th}}$ \ROT, for $i~\in~\{1,\dots,\log{k}\}$. 
Her output messages $S_1,\dots,S_k$ will later be composed of the possible additions of these strings, for example $S_1 = \bigoplus_{i=1}^{\log{k}}A_{i,0}$.

Bob with input $D \in \{1,\dots,k\}$, that can be seen as a string $\{D_1|D_2|\dots|D_{\log{k}}\}$, in turn inputs the $i^{\text{th}}$ bit of his choice $D_i$ to the $i^{\text{th}}$ \ROT functionality and obtains
output $A_{i,D_i}$.
He finally adds the outputs he received to obtain his output of the \sROT funcitonality $S_D~=~\bigoplus_{i=1}^{\log{k}}A_{i,D_i}$. A sketch of the protocol can be seen in \fref{img:sROT}.

\begin{prot}
\label{prot:sROT}
Sender-randomised \sROT protocol between user Alice with no input and user Bob with input $D=\{D_1|D_2|\dots|D_{\log{k}}\} \in \{1,\dots,k\}$.

\begin{enumerate}
\item  For $i=1,\dots,\log{k}$: \begin{enumerate}
                                 \item Alice with no input receives strings $A_{i,0}, A_{i,1} \in \{0,1\}^\ell$ as outputs of the $i^{\text{th}}$ \ROT,
                                 \item Bob inputs his $i^{\text{th}}$ choice $D_i$ and receives the string $A_{i,D_i}$.
                                \end{enumerate}
\item Alice receives outputs: 
\begin{align}
 \begin{array}{cl}
   S_1 &= A_{1,0} \oplus A_{2,0} \oplus \dots \oplus A_{\log{k},0} \\
   S_2 &= A_{1,1} \oplus A_{2,0} \oplus \dots \oplus A_{\log{k},0} \\
   \vdots& \hspace{0.6cm} \vdots \hspace{3cm}  \vdots\\
   S_k &= A_{1,1} \oplus A_{2,1} \oplus \dots \oplus A_{\log{k},1} 
 \end{array}
\end{align}

and Bob receives output:
\begin{align}
S_D = \bigoplus_{i=1}^{\log{k}}A_{i,D_i} 
\end{align}
	
\end{enumerate}
\end{prot}

\begin{figure}[H]
 \centering
 \def\svgwidth{0.8\textwidth}
  \input{./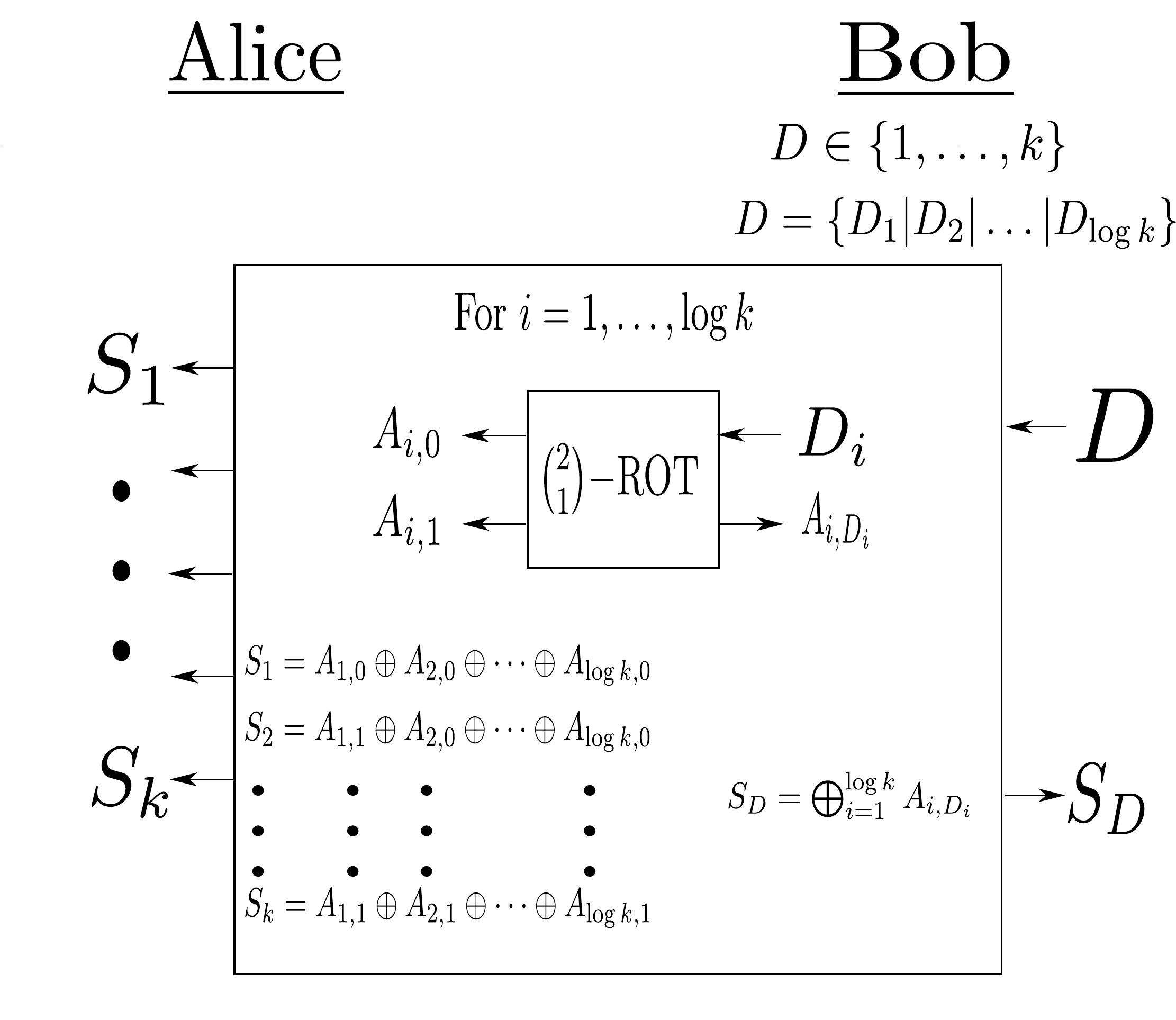_tex}
   \caption{Sketch of the \sROT Protocol \ref{prot:sROT} using $\log{k}$ \OT functionalities.}
  \label{img:sROT}
\end{figure}

We now introduce the theorem that states that if the \ROT functionalities used in the above protocol are secure, in the sense of Definition~\ref{def:secot}, then the protocol implements the \sROT functionality 
securely, according to the Definition~\ref{def:seckrot}.

\begin{theorem}
\label{thm:sROT}
If the \ROT functionalities used in Protocol~\ref{prot:sROT} are $\eps-$secure, according to Definition~\ref{def:secrot}, then Protocol~\ref{prot:sROT} is $\eps'-$secure according to Definition~\ref{def:secsrot}, where $\eps' \leq \eps \log{k}$.
\end{theorem}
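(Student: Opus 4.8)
The plan is to verify the three requirements of the \sROT security definition (Definition~\ref{def:secsrot}) one at a time, invoking the $\eps$-security of each of the $\log k$ underlying \ROT sub-protocols (Definition~\ref{def:secrot}) and paying the factor $\log k$ only through a union bound over the instances. \emph{Correctness} is immediate: if both parties are honest, each of the $\log k$ \ROT instances is correct except with probability $\eps$, so all of them succeed except with probability $\eps\log k=\eps'$; conditioned on this, Alice holds $2\log k$ independent uniform strings $A_{i,0},A_{i,1}$, every $S_j$ is the XOR of one string from each of the $\log k$ independent pairs and is therefore uniform, the tuple $(S_1,\dots,S_k)$ is a deterministic function of Alice's \ROT outputs and hence independent of $D$, and Bob recovers $S_D=\bigoplus_i A_{i,D_i}$ from the outputs $A_{i,D_i}$ he received.

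For \emph{security for Alice}, fix a dishonest Bob and write $G_i'$ for his (classical, by the standing assumption that qubits are measured at the end of each sub-protocol) state after the $i$-th \ROT, so his final output $G'$ is a function of $G_1',\dots,G_{\log k}'$. Applying the security-for-Alice clause of Definition~\ref{def:secrot} to session $i$ gives a bit $D_i'$ for which $A_{i,1-D_i'}$ is within statistical distance $\eps$ of being uniform and independent of $(A_{i,D_i'},D_i',G_i')$. Put $D':=\{D_1'|\cdots|D_{\log k}'\}\in\{1,\dots,k\}$. For any $I\neq D'$ choose a coordinate $j$ with $I_j\neq D_j'$; then $S_I=A_{j,1-D_j'}\oplus R$ with $R=\bigoplus_{i\neq j}A_{i,I_i}$ depending only on sessions $\neq j$, while $S_{D'}$ and $G'$ depend on session $j$ only through $A_{j,D_j'}$ and $G_j'$ respectively. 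Since statistical distance is non-increasing under post-processing, the single-session guarantee makes $A_{j,1-D_j'}$, and hence $S_I$ (which is $A_{j,1-D_j'}$ shifted by the independent $R$), $\eps$-close to uniform given $(S_{D'},D',G')$ — the desired bound for this $I$, with error $\eps\le\eps'$.

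For \emph{security for Bob}, fix a dishonest Alice; the security-for-Bob clause of Definition~\ref{def:secrot} applied to session $i$ yields strings $A_{i,0}',A_{i,1}'$ with $G_i=A_{i,D_i}'$ except with probability $\eps$ and with $D_i$ within statistical distance $\eps$ of being independent of $(V_i',A_{i,0}',A_{i,1}')$, where $V_i'$ is Alice's view after session $i$. Define $S_j':=\bigoplus_i A_{i,c_i(j)}'$ using the same index pattern $c(j)\in\{0,1\}^{\log k}$ that defines $S_j$ in Protocol~\ref{prot:sROT}; a union bound then gives $S_D=\bigoplus_i G_i=\bigoplus_i A_{i,D_i}'=S_D'$ except with probability $\eps\log k$. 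Since honest Bob discloses $D_j$ only by feeding it into the $j$-th \ROT, an induction over the $\log k$ sessions — using at step $i$ both the hypothesis that $D_1,\dots,D_{i-1}$ are jointly close to independent of $(V_{i-1}',\{A_{j,b}'\}_{j<i})$ and the single-session guarantee for session $i$ — shows $D=(D_1,\dots,D_{\log k})$ is $\eps\log k$-close to independent of $(V',S_1',\dots,S_k')$, the $S_j'$ being functions of the $A_{i,b}'$. Both conditions therefore hold with error at most $\eps'=\eps\log k$.

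The main obstacle is the composition step: the $\log k$ \ROT sub-protocols are run sequentially and either party may behave adaptively, so the stand-alone guarantees of Definition~\ref{def:secrot} must be stitched together. This is precisely where the assumption that each party measures all of its qubits at the end of every sub-protocol is used — it makes the residual state after each \ROT classical, hence absorbable into the view passed to the next session, which is what legitimises the hybrid arguments above. Within that, the two places that need the most care are the hybrid establishing joint independence of $D$ in the security-for-Bob part, and checking in the security-for-Alice part that the single extracted $D'$ works simultaneously for every $I\neq D'$.
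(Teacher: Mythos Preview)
Your proposal is correct and follows essentially the same approach as the paper: the three clauses of Definition~\ref{def:secsrot} are checked separately, the composite choice $D'$ is built bit-by-bit from the per-session $D_i'$, one differing coordinate $j$ is used to push the uniformity of $A_{j,1-D_j'}$ through the XOR to $S_I$, and the $\log k$ factor arises from a union bound over the sessions. If anything you are more careful than the paper about the composition step and the inductive joint-independence argument in the security-for-Bob part; the paper handles these points more tersely, relying directly on the standing assumption that sub-protocols may be treated as ideal functionalities once all qubits are measured.
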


\subsection{Proof Of Theorem~\ref{prot:sROT}}
The following proof consists of three parts as we have to show that all three requirements of Definition~\ref{def:seckrot} hold.

\subsubsection{Correctness}

First we show that the correctness requirement of Definition~\ref{def:secsrot} holds if the correctness requirement of Definition~\ref{def:secrot} holds.

\begin{proof} 
 We show that for two honest users who follow Protocol~\ref{prot:sROT}, the protocol is correct if the \ROT functionality is implemented correctly.
 
 An honest user Alice receives outputs $(A_{i,0}, A_{i,1})$ from the $i^{\text{th}}$ \ROT used in the protocol, for $i=1,\dots,\log{k}$. 
 She receives outputs $S_1\dots S_k$ by modulo 2 addition of these inputs, for example $S_1 = \bigoplus_{i=1}^{\log{k}}A_{i,0}$.
 
 Then Bob with input $D=\{D_1|\dots|D_{\log{k}}\}$, uses $D_i$ as input to the $i^\text{th}$ \ROT. 
 If the \ROT{s} are correct he receives $A_{i,D_i}$ except with probability $\eps$ for all $i=1,\dots,\log{k}$.

 He then correctly computes $X_D=\bigoplus_{i=1}^{\log{k}}A_{i,D_i}$ except with probability $\eps' \leq \eps \cdot \log{k}$. 
 
 Thus Protocol~\ref{prot:sROT} is correct.
\end{proof}

\subsubsection{Security For Alice}

Secondly we show that if the \ROT are secure for Alice in the sense of Definition~\ref{def:secrot} then the security for Alice condition of Definition~\ref{def:secsrot} holds.

\begin{proof}
 For an honest user Alice with no input and any dishonest user Bob with output $G'$
 we can define a random variable $D'$ such that $D'=\{D'_1|\dots|D'_{\log{k}}\}$, where $D'_i$ is Bob's input to the $i^{\text{th}}$ \ROT functionality used in the protocol.

Then for all $I=\{I_1|\dots|I_{\log{k}} \}$ such that $I \neq D'$, there exists at least one $j \in \{1,\dots,\log{k}\}$ such that $I_j \neq D'_j $.
 
Since the \ROT{s} are secure for Alice for any dishonest user Bob with output $G'_j$ 
   there exists a random variable $D'_i$ such that 
 
 \begin{align}
  \pr_{A_{j,I_j} G'_j A_{j,D'_j} D'_j} \appe \pr_{U}  \cdot \pr_{G'_j A_{j,D'_j} D'_j} \label{eq:123}.
 \end{align}
 
 Then since $S_{D'} = \bigoplus_{i=1}^{\log{k}} A_{i,D'_i}$ and  
 $S_I = \bigoplus_{j=1}^{\log{k}} A_{j,I_j}$ for any $I \neq D'$,

 \begin{align}
  \pr_{S_{I}G'S_{D'}D'} \appee \pr_U \cdot \pr_{G'S_{D'} D'} \label{eq:124},
 \end{align}
 
 where $\eps' \leq \eps \cdot \log{k}$.

 Equation \eqref{eq:124} proves that Protocol~\ref{prot:sROT} is secure for Alice according to Definition~\ref{def:secsrot}.
 
\end{proof}

\subsubsection{Security For Bob}

Finally we show that if the \ROT functionalities are secure for Bob according to Definition~\ref{def:secrot} then Protocol~\ref{prot:sROT} is secure for Bob, in the sense of Definition~\ref{def:secsrot}.

\begin{proof}
For an honest user Bob with input $D=\{d_1|\dots|d_{\log{k}}\} \in \{1,\dots,k\}$ and any dishonest user Alice we define random variables $S'_1, \dots S'_k$ in the following way:
 
 \begin{align}
 S'_I = \bigoplus_{j=1}^{\log{k}}A_{j,I_j} \text{ , for } I \in \{1,\dots,k\},
 \end{align}

 where $A'_{i,0}$ and $A'_{i,1}$ are the inputs in the $i^{\text{th}}$ \ROT used in Protocol~\ref{prot:sROT}.
 
 Then since the \ROT{s} are secure for Bob there exist random variables $A'_{i,0}, A'_{i,1}$ such that for the output of the $i^{\text{th}}$ \ROT 

 \begin{align}
  \pr[G_{i_{\text{\OT}}}=A'_{i,D_i}] \geq 1-\eps, \label{sobsrot1}
 \end{align}
 
 and 
 
 \begin{align}
  \pr_{D_i A'_{i,0} A'_{i,1}} \appe \pr_{D_i} \cdot \pr_{A'_{i,0}A'_{i,1}}, \label{sobsrot2}
 \end{align}
 
  for all $i=1,\dots,\log{k}$.
 
 Since $S_D = \bigoplus_{j=1}^{\log{k}} A'_{j,D_j}$, \eqref{sobsrot1} implies that
 
 \begin{align}
  \pr[G_{\text{\sROT}}=S'_D] \geq 1-\eps',
 \end{align}

 with $\eps' \leq \eps \cdot \log{k}$.
 
 Furthermore \eqref{sobsrot2} implies that the distribution of $D_i$ is independent of the inputs of that \OT. 
 Since a dishonest user Alice receives no information the choices of Bob are independent. Then  since $D=\{D_1|\dots|D_{\log{k}}\}$ and $S'_I = \bigoplus_{j=1}^{\log{k}}A'_{j,I_j}$ 
 
  \begin{align}
  \pr_{DS'_1\dots S'_k} \appee \pr_{D} \cdot \pr_{S'_1\dots S'_k}
 \end{align}

\end{proof}

Thus if the \OT functionalities used are correct, then Protocol~\ref{prot:sROT} is secure, this concludes the proof of Theorem~\ref{thm:sROT}.

\chapter{Secure Identification}
\label{Chapter5}
\lhead{Chapter 5. \emph{Secure Identification}}

In this chapter, we aim to construct a protocol that achieves secure password-based identification. 
In order to do so, we first study existing protocols, namely the protocol proposed in \cite{DFSS07}, that achieves secure identification in the bounded quantum storage model. 

First, we adapt this protocol to the isolated qubits model by using a \kOT functionality. As we have seen in \cref{Chapter4}, it is possible to construct a \kOT protocol in this model.
However, we notice that this identification protocol requires interaction from Bob to Alice.

Secondly, we study if it is possible to construct a non-interactive secure identification protocol. We show that such a protocol is impossible to construct, even based on oblivious transfer. 

Finally, we prove the security of an interactive password-based identification protocol that makes use of a \sROT functionality. 
The latter can be implemented efficiently, as we showed in \cref{Chapter4}.

 \section{Secure Identification From \kOT}
 
  \label{sec:IID}
  There are a number of secure password-based identification protocols in the literature, we present \cite[Protocol~Q-ID]{DFSS07} that achieves secure identification in the bounded quantum storage model.
  Let $c:\mathcal{W} \mapsto \{+,\times\}^n$ be the encoding function, where $+$ is the computational and $\times$ is the Hadamard basis. 
  
  \begin{prot}
  \label{prot:SchIID} 
  Interactive Password-based Identification with inputs $W_\A$ and $W_\B$, the passwords of user Alice and user Bob respectively. Let $\mathcal{F}$ and $\mathcal{H}$ be families of strong 2-universal hash functions \cite{DFSS07}:
  \begin{enumerate}
   \item The user Alice picks $x_R \in \{0,1\}^n$ and $\theta \in_R \{+,\times\}^n$ she then sends state $\ket{x}_\theta$ to Bob
    \item Bob measures $\ket{x}_\theta$ in basis $D=c(W_\B)$. Let $X_D$ be the outcome.
   \item Alice picks $f \in \mathcal{F}$ uniformly at random and sends $\theta$ and $f$ to Bob. Both compute $I_W :=\{i: \theta_i = c(W)_i\}$.
   \item Bob picks $h \in \mathcal{H}$ uniformly at random and sends $h$ to Alice.
   \item Alice sends $z:=f(X_{W_\A}|_{I_{W_\A}}) \xor h(W_\A)$ to Bob, where $X_{W_\A}|_{I_{W_\A}}$ is the restriction of $X_{W_\A}$ to the coordinates $X_i$ with $i \in I_{W_\A}$.
   \item Bob accepts if and only if $z=f(X_D|_{I_{W_\B}}) \xor h(W_\B)$
  \end{enumerate} 
  \end{prot}
  
  While this protocol is secure in the bounded-quantum storage model, it is not secure in the isolated qubits model as we have discussed in \cref{Chapter2}.
  Note however that the first part of the protocol (steps 1-4) can be seen as a protocol that implements the \kOT functionality. 
  As we have shown in \cref{Chapter4}, there exists a protocol that achieves that in the isolated qubits model.
   Taking this fact into account, we construct a password-based identification protocol that relies on the security of a \kOT functionality. A sketch of the protocol is presented in \fref{img:secidkOT}.

\label{proof}

\begin{prot}
\label{prot:secid-kOT}
Password-based identification protocol with inputs $W_\A$ and $W_\B$, the passwords of user Alice and user Bob respectively.
Let $\mathcal{H}$ be a family of strong 2-universal hash functions such that $h\in\mathcal{H}$ and $h:\{1,\dots,k\} \mapsto \{0,1\}^\ell$. 
Then the protocol between Alice and Bob is the following:
\begin{enumerate}

\item The user Alice uses a \kOT functionality, $\F_{\text{\kOT}}$, with inputs $X_1,X_2,\dots,X_k \in \mathcal{X}$.

\item The user Bob inputs his choice $D=W_{\B}$ to the \kOT and receives the message $X_{D}$

\item Bob sends a function $h \in \mathcal{H}$ to Alice.

\item Alice sends $z:=X_{W_\A} \xor h(W_\A)$ to Bob.

\item The user Bob outputs 1 if $z=X_D \xor h(W_\B)$ and 0 otherwise.

\end{enumerate}
\end{prot}

In \sref{sec:secid-srot}, we will prove the security of a similar protocol that relies on the weaker \sROT functionality, that can be implemented efficiently using $\log{k}$ \ROT functionalities.

\begin{figure}
 \centering
  \def\svgwidth{0.8\textwidth}
  \input{./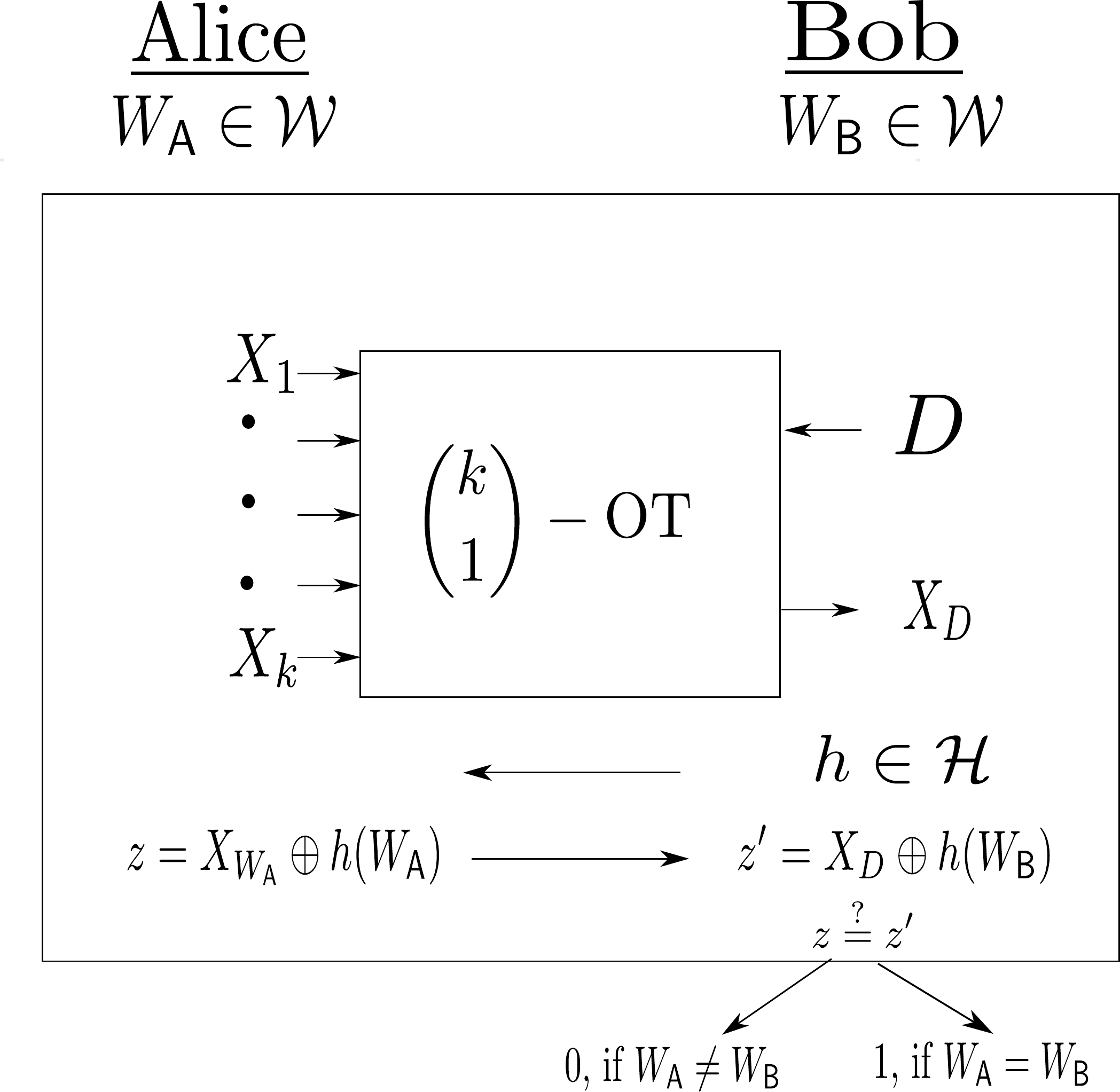_tex}
 \caption{Sketch of the password-based identification protocol (Protocol~\ref{prot:secid-kOT}) that makes use of a \kOT functionality.}
   \label{img:secidkOT}
\end{figure}

  Note however, that both Protocol~\ref{prot:SchIID} and \ref{prot:secid-kOT} use interaction from Bob to Alice.
  But if we assume that we can implement a \kOT functionality securely and non-interactively,
  can we use it to construct a non-interactive secure identification protocol?
  
  \section{Impossibility Proof}
  \label{sec:impossibility}

  In this section we study if it is possible to construct a non-interactive password-based identification protocol using a \kOT.
   We introduce a general protocol that uses one instance of the $F_{\text{\kOT}}$ functionality to implement the identification functionality $F_{ID}$ 
   and we then prove that such a protocol cannot be secure.
  We aim to emphasize the importance of the interaction from Bob to Alice (step 4 of Protocol~\ref{prot:secid-kOT}) in order to implement the identification functionality securely.

  \subsection{Non-Interactive Password-Based Identification}
  
  We formally introduce the protocol later (Protocol~\ref{prot:NIID}), but we first describe it to give some intuition and argue why the protocol is a general form for all such possible protocols. 
  The user Alice has as input a password $W_{\A}$, can choose inputs $X_1, \dots , X_k$ to the \kOT and sends some extra information $Y$ to the user Bob depending on the specific protocol. 
  In Protocol~\ref{prot:secid-kOT} for example, $Y$ is the function $f$ and the message $z=f(X_{W_\A} \oplus h(W_\A)$.
  The user Bob has as input his password $W_{\B}$ and makes a choice regarding the message he will retrieve from the \kOT. His choice $D$ may depend on his password so that
  if his password choice is equal to the choice of Alice he will be able to correctly identify her while he will not be able to do so in any other case. So his choice
  is described by a deterministic function $c:\mathcal{W}\times\mathcal{Y} \mapsto \{1,\dots,k\}$ such that $D=c(W_{\B},Y)$ returns the message that when combined with the information $Y$ will
  allow him to check if $W_{\A} = W_{\B}$. 
  
  We note that any non-interactive protocol that uses the $\F_{\text{\kOT}}$ functionality once has to be of this form. Since it is non-interactive the user Alice can use the functionality of $\F_{\text{\kOT}}$ once and at most send some additional information $Y$. On the other hand the user Bob receives information $Y$ and can interact with the $\F_{\text{\kOT}}$ functionality by inputing his choice $D$, that can at most depend on both $Y$ and his password choice $W_\B$. Finally, he can, at most, use $W_\B$, $D$, $X_D$ and $Y$ as inputs to some function $g$ to evaluate the equality function.
 
 \begin{prot}
 \label{prot:NIID}
Non-interactive identification protocol with inputs $W_\A$ and $W_\B$, the passwords of user Alice and user Bob respectively :
\begin{enumerate}

\item The user Alice uses the non-interactive \kOT functionality, $\F_{\text{\kOT}}$, with inputs \break
$X_1,X_2,\dots,X_k~\in~\mathcal{X}$ and sends additional information $Y$ to Bob. She chooses the inputs and additional information
uniformly at random from a joint distribution $\pr_{X_1 \dots X_k Y|W_\A}$.

\item The user Bob inputs his choice $D=c(W_{\B},Y)$ to the \kOT 
 and receives the message $X_{D}$.

\item The user Bob then computes and outputs the acceptance predicate, G: 

\begin{align}
 G=g(W_{\B},X_D,Y) = \left\{ 
 \begin{array}{cc}
 0 \text{, if he rejects,} \\ 1 \text{, if he accepts.}
 \end{array} \right.
\end{align}

\end{enumerate}

\end{prot}

A non-interactive identification protocol in this \kOT  hybrid model is defined by the following ingredients:
$\pr_{Y X_1 \dots X_k | W_\A} , \pr_{D|W_\B Y} , \pr_{G|W_\B DX_D Y}$

\label{error-case}
In order for Protocol~\ref{prot:NIID} to be secure it must fulfill the conditions of the security definition Definition~\ref{def:secid}.
We consider the special case for $\epsilon=0$ for perfect security of the protocol.
We did not study the case that a non-interactive $\epsilon-$secure password-based identification protocol can be constructed using oblivious transfer.
Although studying the $\epsilon>0$ case remains an interesting problem for future research, 
we consider the intuition we collect from the following proof (\sref{proof:niid}) sufficient to emphasise the importance of interaction between Bob and Alice as discussed in \sref{interaction}.
This result justifies the use of interaction in the construction of a secure password-based identification protocol, which is the main goal of this thesis.

Then for users Alice and Bob that hold $X_1,\dots,X_k,Y$ and $D,X_D,Y,G$ respectively, we can formulate the following security definition, that is equivalent to Definition~\ref{def:secid}.

\begin{definition}
\label{def:secNIID}
The non-interactive identification Protocol \ref{prot:NIID} is secure if the following conditions are fulfilled:
\begin{description}
  
\item [Correctness:] For honest user Alice and honest user Bob, Bob outputs $G=1$ if $W_\A = W_\B$.

\item [Security for Alice:] For any dishonest user Bob, for any distribution of $W_{\A}$, there exists a random variable
$W'$ that is independent of $W_{\A}$ and such that:

\begin{align}
\pr_{W_{\A} W' Y X_{D}|W' \neq W_\A } = \pr_{W_{\A}\lr W' \lr Y X_D|W' \neq W_{\A} }
\end{align}

\item [Security for Bob:] For any dishonest user Alice, for any distribution of $W_{\B}$, there exists a random variable $W'$
independent of $W_{\B}$ such that if  \break $W' \neq W_\B$ then $\pr[G=1|W_\B\neq W']=0,$ and:

\begin{align}
\pr_{W_{\B} W' Y X_1\dots X_k|W'\neq W_\B} = \pr_{W_{\B} \lr W'\lr Y X_1\dots X_k|W' \neq W_{\B} }
\end{align}

\end{description}
\end{definition}

The following theorem that states that it is impossible for a protocol that uses one instance of a \kOT functionality to implement the identification functionality securely.
  
\begin{theorem} 
\label{thm:NIID}
  If Protocol \ref{prot:NIID} is correct and secure for Alice according to Definition~\ref{def:secNIID}, then it is not secure for Bob. 
\end{theorem}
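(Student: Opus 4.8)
The plan is to argue by contradiction: assume Protocol~\ref{prot:NIID} is correct and secure for Alice and, \emph{in addition}, secure for Bob, and derive a contradiction; this yields the stated implication. Throughout I take $W_\A,W_\B$ uniform on $\mathcal{W}$ (the definitions quantify over all input distributions), write $\vec x=(x_1,\dots,x_k)$ for the $k$-OT inputs, and set $S(y,\vec x)=\{w\in\mathcal{W}:g(w,x_{c(w,y)},y)=1\}$, the set of passwords an honest Bob would accept from the transcript $(y,\vec x)$. I assume $k=|\mathcal{W}|\ge 3$: for $k=2$ the conditional‑independence clause of Definition~\ref{def:secNIID} is essentially vacuous, so the statement should be read with $k\ge 3$. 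The first move is to read off structure from security for Bob: feeding it the trivial dishonest Alice that always sends one fixed transcript $(y^0,\vec x^0)$, and using $\eps=0$ with $W_\B$ uniform, the promised $W'$ (independent of $W_\B$) must satisfy $\pr[G=1,W_\B\ne W']=0$, which forces $\pr[W'=w]=1$ for every $w\in S(y^0,\vec x^0)$ and hence $|S(y^0,\vec x^0)|\le 1$. Since $(y^0,\vec x^0)$ was arbitrary, \emph{every} transcript accepts at most one password; combined with correctness (which always places the true password in the accepting set) this shows that an honest Alice holding $W_\A=a$ produces, with certainty, a transcript with accepting set exactly $\{a\}$. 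Equivalently, writing $V_{\hat w}:=(Y,X_{c(\hat w,Y)})$ for the view of the honest‑Bob tester of a candidate password $\hat w$, the bit $g(\hat w,X_{c(\hat w,Y)},Y)$ equals exactly $\mathbf 1[W_\A=\hat w]$.

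Next I would extract structure from security for Alice. Apply it to the dishonest Bob who queries $D=c(\hat w,Y)$ and keeps his view $V_{\hat w}$. Since $\mathbf 1[W_\A=\hat w]$ is a function of $V_{\hat w}$ by the previous step, the ``effective guess'' $W'$ is forced to be $\hat w$ almost surely: any value $w'\ne\hat w$ in the range of $W'$ would, because $W'\perp W_\A$ and $W_\A$ is then conditionally uniform over at least two values, make $\mathbf 1[W_\A=\hat w]$ non‑constant given $\{W'=w'\}$, contradicting that $W_\A$ is conditionally independent of $V_{\hat w}$ given $W'$ on $\{W'\ne W_\A\}$. With $W'\equiv\hat w$, that clause says exactly that the conditional law of $V_{\hat w}$ given $W_\A=a$ is the same for all $a\ne\hat w$. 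Marginalising to $Y$ and chaining over $\hat w$ (here $k\ge 3$ ensures any two passwords share a third) gives $\pr_{Y\mid W_\A=a}$ independent of $a$; call it $\mu$.

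Finally I would close the argument by splitting on the behaviour of the selection function $c$ on $\mathrm{supp}(\mu)$. If some $y^{*}\in\mathrm{supp}(\mu)$ and distinct $w_0,w_1$ satisfy $c(w_0,y^{*})\ne c(w_1,y^{*})$, then since $\pr[Y=y^{*}\mid W_\A=w_i]=\mu(y^{*})>0$, correctness yields, for each $i$, a string $x^{(i)}$ with $g(w_i,x^{(i)},y^{*})=1$; a dishonest Alice sending $Y=y^{*}$ with $x^{(i)}$ in slot $c(w_i,y^{*})$ and arbitrary entries elsewhere produces a reachable transcript with $\{w_0,w_1\}\subseteq S$, contradicting the first step. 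Otherwise $c(\cdot,y)$ is constant for every $y\in\mathrm{supp}(\mu)$, so an honest Bob always queries the same slot $j(Y)$ regardless of his password; then his view $(Y,X_{j(Y)})$ already determines $W_\A$ (it is the unique $w$ with $g(w,X_{j(Y)},Y)=1$, by the first step), which contradicts security for Alice — no $W'$ independent of $W_\A$ can render $W_\A$ conditionally independent of this view on $\{W'\ne W_\A\}$, since there $W_\A$ is still a non‑constant function of the view (using $k\ge 3$). Either branch is a contradiction, so the protocol cannot be secure for Bob.

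The step I expect to be the main obstacle is the extraction in the second paragraph: converting the \emph{existential} clause of Definition~\ref{def:secNIID} (``there is $W'$, independent of $W_\A$, making $W_\A$ conditionally independent of Bob's view given $W'$, on $\{W'\ne W_\A\}$'') into the concrete statement that Bob's view is identically distributed for all $W_\A\ne\hat w$; this genuinely uses the fact, established in the first step, that the view already determines $\mathbf 1[W_\A=\hat w]$, together with $k\ge 3$, to pin $W'\equiv\hat w$. A secondary subtlety is that the ``constant‑$c$'' branch of the last step must be closed through the view‑determines‑$W_\A$ argument (hence through security for Alice) rather than by building an over‑accepting transcript: a naive attack that merely mixes two honest transcripts does not help, since a mixture of honest transcripts still accepts exactly one password.
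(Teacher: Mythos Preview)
Your proof is correct (for $k\ge 3$, which you rightly note is where the definition has content), but it takes a genuinely different route from the paper.

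The paper works \emph{forward} from correctness and security for Alice alone, without ever assuming security for Bob. It proves three structural lemmas: (i) $(X_i,Y)$ is independent of $W_\A$; (ii) $c(\cdot,y)$ is \emph{injective} for every $y$ of positive probability; (iii) for every $(w,y)$ there is an accepting $x$. With injectivity and (iii) in hand, dishonest Alice picks any $y$, and for each $w\in\mathcal{W}$ places an accepting string in slot $c(w,y)$; since the slots are all distinct this is well-defined, and Bob accepts for \emph{every} password. This is a direct, constructive violation of security for Bob.

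Your argument instead runs by contradiction and uses security for Bob as an input early on: from it you extract the ``accepting set has size $\le 1$'' lemma, then combine with correctness to get that honest transcripts pin $W_\A$, and with security for Alice to make $\pr_{Y\mid W_\A}$ constant. You then split on whether $c(\cdot,y)$ is constant. Your case~(a) builds a transcript with $|S|\ge 2$ from two distinct slots (a weaker attack than the paper's all-$k$ attack, but enough for a contradiction); your case~(b) is essentially the same mechanism as the paper's injectivity lemma, namely that a collision in $c$ lets Bob read off $W_\A$ from his view, contradicting Alice's security.

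What each buys: the paper's route is cleaner and more constructive---it never needs the contradiction hypothesis and produces an explicit universal attack, and proving full injectivity of $c$ avoids your case split entirely. Your route, on the other hand, isolates the nice structural fact $|S(y,\vec x)|\le 1$ from Bob's security and makes explicit the delicate step of pinning the simulator's $W'$ to $\hat w$ using that the view already computes $\mathbf 1[W_\A=\hat w]$; this is exactly the place where $k\ge 3$ is used, and your identification of it as the main obstacle is accurate.
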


\subsection{Proof Of Theorem~\ref{thm:NIID}}  
\label{proof:niid}

We first introduce some lemmas that we will use later to prove Theorem~\ref{thm:NIID}.

\begin{lemma} 
\label{lem:ind}
 If Protocol \ref{prot:NIID} is secure for Alice then for all $i \in \{1,\dots,k\}$ the joint distribution of the random variables $X_i$ and $Y$ are independent of $W_\A$ .
\end{lemma}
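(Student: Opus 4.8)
The plan is to turn the \kOT call against itself: a dishonest receiver can always point its single call at the slot it cares about, so ``security for Alice'' must forbid that slot (together with the free side‑information $Y$) from carrying any information about $W_{\A}$. Fix $i\in\{1,\dots,k\}$ and consider the dishonest receiver $B_i$ that ignores its own password, sends the choice $D:=i$ to $\F_{\text{\kOT}}$, receives $X_i$, and halts with total view $(Y,X_i)$. Since $B_i$ is a (dishonest) receiver, Definition~\ref{def:secNIID} applies: for any distribution of $W_{\A}$ there is a random variable $W'$, independent of $W_{\A}$, such that, conditioned on the event $\{W'\neq W_{\A}\}$,
\[
  \pr_{(Y,X_i)\,\lr\, W'\,\lr\, W_{\A}} \quad\text{holds, i.e.}\quad \pr_{(Y,X_i)\,|\,W_{\A}\,W'}=\pr_{(Y,X_i)\,|\,W'} .
\]
First I would observe that for this particular adversary $W'$ may be taken independent not just of $W_{\A}$ but of the whole pair $(Y,X_i)$: the only interaction $B_i$ has with $\F_{\text{\kOT}}$ is the fixed index $i$, which carries no information at all, so the input $W'$ that the ideal‑world adversary feeds to $\F_{ID}$ is determined by data that is independent of Alice's entire side $(W_{\A},X_1,\dots,X_k,Y)$.

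Granting $W'\perp(W_{\A},Y,X_i)$, the Markov condition collapses. For any value $w'$ in the support of $W'$ and any password value $a\neq w'$ occurring with positive probability, conditioning the identity on $W'=w'$, $W_{\A}=a$ and using the joint independence gives
\[
  \pr_{(Y,X_i)\,|\,W_{\A}=a}\;=\;\pr_{(Y,X_i)\,|\,W'=w',\,W'\neq W_{\A}},
\]
and the right‑hand side does not depend on $a$. Hence $\pr_{(Y,X_i)\,|\,W_{\A}=a}$ is the same distribution for all $a\neq w'$. Taking the distribution of $W_{\A}$ to be uniform on $\mathcal{W}$ (recall $|\mathcal{W}|=k\geq 2$) and ranging $w'$ over the support of $W'$, these equalities chain together — for $k\geq 3$ two such families of admissible $a$'s overlap — to force $\pr_{(Y,X_i)\,|\,W_{\A}=a}$ to be a single distribution, common to all $a\in\mathcal{W}$; a separate choice of prior, supported away from the one exceptional value, disposes of the remaining low‑cardinality case. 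Since $\pr_{(Y,X_i)\,|\,W_{\A}}$ is a property of the protocol, this is exactly the statement that $(X_i,Y)$ is independent of $W_{\A}$, and as $i$ was arbitrary the lemma follows.

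The step I expect to be the real obstacle is the claim that $W'$ can be taken independent of $(Y,X_i)$ — equivalently, that against the ``read slot $i$'' attack the extracted password guess is essentially trivial. The literal obliviousness condition of Definition~\ref{def:secNIID} only asserts $W'\perp W_{\A}$, and it is conditioned on $\{W'\neq W_{\A}\}$, so by itself it never constrains Bob's view in the ``diagonal'' case where the extracted guess coincides with Alice's password. The fix is to appeal to the simulation underlying the equivalent definition: for the non‑adaptive, input‑oblivious receiver $B_i$ the simulator extracts $W'$ from the constant transcript ``choose~$i$'', so $W'$ is independent of everything on Alice's side, and the diagonal case is then eliminated by varying the a priori distribution of $W_{\A}$ as above. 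I would carry this out carefully, and, if a self‑contained treatment is preferred, isolate the observation ``$B_i$ forces a trivial guess'' as a short preliminary remark before the main computation.
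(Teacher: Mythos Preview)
Your opening move---let a dishonest Bob ignore his password, feed the fixed index $D:=i$ to the \kOT, and record the view $(Y,X_i)$---is exactly what the paper does. The divergence is in the very next step: you try to control $\pr_{(Y,X_i)\mid W_\A}$ directly, and this is what generates the two complications you flag, namely (i) the need to upgrade $W'\perp W_\A$ to $W'\perp(W_\A,Y,X_i)$ by reaching back into an underlying simulation, and (ii) the chaining/prior-varying argument to cover the ``diagonal'' value $a=w'$ that the conditioned Markov chain never speaks about.

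The paper sidesteps both by reversing the direction of conditioning: it analyses $\pr_{W_\A\mid W'X_iY}$ instead of $\pr_{(Y,X_i)\mid W_\A}$. In that direction the diagonal case is automatic---conditioning on $W'=W_\A$ together with $W'=w'$ pins $W_\A$ to $w'$ regardless of $(X_i,Y)$, so $\pr_{W_\A\mid W'X_iY,\,W'=W_\A}=\pr_{W_\A\mid W',\,W'=W_\A}$ trivially. The security Markov chain handles the complementary event $W'\neq W_\A$, the two pieces reassemble to $\pr_{W_\A\mid W'X_iY}=\pr_{W_\A\mid W'}$, and then the given independence $W'\perp W_\A$ yields $\pr_{W_\A\mid W'X_iY}=\pr_{W_\A}$, which is the lemma. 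No simulator internals, no extra independence of $W'$ from the view, no varying of priors. Your route can be made to work, but the paper's flip of the conditional buys the same conclusion for much less.
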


\begin{proof}
 
 Since Protocol \ref{prot:NIID} is secure for Alice, for all $\pr_{W_\A}$, for all $i \in \{1,\dots,k\}$ there exists $W'$ independent of $W_\A$ such that:
 
 \begin{align}
 \pr_{W_\A W' X_i  Y  ,\, W' \neq W_\A} &=  \pr_{W_\A \lr W' \lr X_i Y | W' \neq W_\A}.
 \end{align}

 Then by definition:
 
 \begin{align}
 \pr_{W_\A|W' X_i  Y  ,\, W' \neq W_\A} &= \pr_{W_\A|W' , \, W' \neq W_\A}   \label{eq:l1pr1}
 \end{align}

 We also note that trivially when $W'=W_\A$,   
 
 \begin{align}
 \pr_{W_\A|W' X_i Y  ,\,W_\A = W'} = \pr_{W_\A | W' ,\, W_\A=W'} \label{eq:l1pr2} 
 \end{align}
 
 Using the property of the marginal distribution: 
  
 \begin{align}
  \pr_{W_\A|W'X_iY} =&  \pr [ W'\neq W_\A ] \pr_{W_\A|W'X_i Y  ,\,W'\neq W_\A}  \nonumber \\
     & +  \pr [ W' = W_\A ] \pr_{W_\A|W'X_i Y ,\, W' = W_\A} \\
   \stackrel{\text{\eqref{eq:l1pr1},\eqref{eq:l1pr2}}}{=}& \pr [ W'\neq W_\A ] \pr_{W_\A|W'  ,\,W'\neq W_\A} +  \pr [ W' = W_\A ] \pr_{W_\A|W' ,\, W' = W_\A} \\
   =&\pr_{W_\A|W'} \label{eq:l1pr3a}
 \end{align}
 
 Using the fact that $W'$ is independent of $W_\A$ equation  \eqref{eq:l1pr3a} becomes:
 
 \begin{align}
    \pr_{W_\A|W'X_iY} = \pr_{W_\A}  \label{eq:l1pr3}
 \end{align}

  From equation \eqref{eq:l1pr3} we observe that $W', X_i, Y$ are independent of $W_\A$.
  \end{proof}

 The next lemma states that if Protocol \ref{prot:NIID} is secure for Alice and correct then the function $c(\cdot,y)$ is injective for all possible $y$.

\label{Proof: c(W,Y) is injective}
\begin{lemma} \label{lem:inj}
If Protocol \ref{prot:NIID} is correct and secure for Alice then for the function $c: \mathcal{W}\times \mathcal{Y} \rightarrow [k]$ the following holds:
  \begin{align} 
 \forall y\in \mathcal{Y} \;\text{ with } \pr_Y(y)>0,\; \; \; \; c(\cdot,y)  \text{ is injective.}
  \end{align}

\end{lemma}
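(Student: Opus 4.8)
The plan is to argue by contradiction, exploiting the fact that Bob's \kOT query $D=c(W_\B,Y)$ depends only on $W_\B$ and $Y$, never on $W_\A$. Suppose $c(\cdot,y_0)$ fails to be injective for some $y_0\in\mathcal{Y}$ with $\pr_Y(y_0)>0$; then there are two distinct passwords $w_1\neq w_2$ in $\mathcal{W}$ with $c(w_1,y_0)=c(w_2,y_0)=j_0$ for some index $j_0\in[k]$. The point is that when honest Bob holds password $w_2$ and the message Alice sends happens to be $Y=y_0$, Bob queries index $j_0$, receives $X_{j_0}$, and outputs $G=g(w_2,X_{j_0},y_0)$ — the \emph{same} deterministic function of $(X_{j_0},Y)$ — no matter whether honest Alice holds password $w_1$ or $w_2$; but correctness demands that Bob accept in the second case and reject in the first, which becomes impossible once we know the conditional law of $X_{j_0}$ is the same in both runs.

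First I would set up the two both-honest executions: honest Bob with fixed input $W_\B=w_2$, and honest Alice with input $W_\A=w_1$ in one execution and $W_\A=w_2$ in the other. By Lemma~\ref{lem:ind} the joint distribution of $(X_{j_0},Y)$ is independent of $W_\A$, so $\pr_{Y\mid W_\A=w_1}(y_0)=\pr_{Y\mid W_\A=w_2}(y_0)=\pr_Y(y_0)>0$ and conditioning on $Y=y_0$ is legitimate in both executions, with
\[
\pr_{X_{j_0}\mid Y=y_0,\,W_\A=w_1}\;=\;\pr_{X_{j_0}\mid Y=y_0}\;=\;\pr_{X_{j_0}\mid Y=y_0,\,W_\A=w_2}.
\]
Conditioned on $Y=y_0$, in both executions Bob's choice is $D=c(w_2,y_0)=j_0$, he receives $X_{j_0}$, and he outputs $G=g(w_2,X_{j_0},y_0)$; hence, by the displayed identity, the conditional distribution of $G$ given $Y=y_0$ is identical in the two executions.

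Next I would invoke correctness. In the execution with $W_\A=w_2=W_\B$, correctness (for $\eps=0$) forces $G=1$ with certainty, so $g(w_2,x,y_0)=1$ for every $x$ in the support of $\pr_{X_{j_0}\mid Y=y_0}$; in the execution with $W_\A=w_1\neq w_2=W_\B$, correctness forces $G=0$ with certainty — honest Bob must reject when the passwords differ — so $g(w_2,x,y_0)=0$ for every such $x$. Since $\pr_Y(y_0)>0$ this support is non-empty, so some $x$ would satisfy $g(w_2,x,y_0)=1$ and $g(w_2,x,y_0)=0$ simultaneously, a contradiction. Hence $c(\cdot,y)$ is injective whenever $\pr_Y(y)>0$; and since $|\mathcal{W}|=k$ equals the size of the codomain $[k]$, injectivity here is in fact bijectivity, which is the form in which Theorem~\ref{thm:NIID} will use it.

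The step I expect to be the main obstacle — or at least the one needing the most care — is being explicit that correctness is used in the soundness direction (honest Bob rejects when $W_\A\neq W_\B$), not merely the completeness direction written verbatim in Definition~\ref{def:secNIID}: it is precisely this that makes the two identical conditional laws of $G$ collide with incompatible requirements. The second delicate point is that $D$ is adaptive in $Y$, so one must condition on the fixed value $Y=y_0$ before comparing executions; once that is done, Lemma~\ref{lem:ind} supplies the equality of the two conditional laws of $X_{j_0}$ for free and the rest is immediate. This is also the moral of the section: the query $c(W_\B,Y)$ carries no more information than $Y$ itself, so a non-injective $c$ would force Bob to treat $w_1$ and $w_2$ identically even when one of them is the true password.
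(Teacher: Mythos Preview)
Your argument is valid, but notice that it never actually uses the non-injectivity hypothesis you set up. Honest Bob with password $w_2$ queries $j_0=c(w_2,y_0)$ regardless of what $c(w_1,y_0)$ is; your appeal to Lemma~\ref{lem:ind} to equate the conditional law of $X_{j_0}$ under $W_\A=w_1$ and $W_\A=w_2$ works for \emph{any} pair $w_1\neq w_2$ and any $y_0$ with $\pr_Y(y_0)>0$, so the collision $c(w_1,y_0)=j_0$ is dead weight. What you have really shown is that two-sided correctness together with Lemma~\ref{lem:ind} (hence with security for Alice) is outright contradictory --- which makes the injectivity conclusion hold vacuously, and in fact short-circuits Theorem~\ref{thm:NIID} itself.

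The paper takes a genuinely different route that does exploit the collision. It fixes $W_\A=j$, lets a \emph{dishonest} Bob query $D=c(m,y)=c(j,y)$, and observes that Bob now holds $X_{c(j,y)}$ and can evaluate $g(j,X_{c(j,y)},y)=1$ (by completeness, since this is exactly what an honest Bob with password $j$ would compute) in addition to $g(m,X_{c(m,y)},y)$. The paper then argues that this lets Bob test two candidate passwords from a single query, which is claimed to violate the Markov-chain condition in security for Alice. So the paper derives the contradiction from security for Alice via an explicit two-guess attack enabled by the collision, whereas you derive it from correctness alone (in both directions) once Lemma~\ref{lem:ind} has stripped $W_\A$ out of Bob's view. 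Your route is shorter and makes Lemmas~\ref{lem:inj} and~\ref{lem:possiblex} unnecessary; its cost is a heavier reliance on the soundness half of correctness, which --- as you correctly flag --- is not written into Definition~\ref{def:secNIID} but is tacitly used by the paper's own proof as well.
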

    
\begin{proof}

  Let us assume that the function  $c(W_\B,Y)$ is not injective. 
  
  Then $\exists y : \pr_Y(y) >0$ and $\exists j,m \in \mathcal{W}$ with $j \neq m$ such that 
  
  \begin{align}
c(j,y)=c(m,y). \label{eq:inj1}   
  \end{align}

Then clearly,
  
    \begin{align}
      X_{c(j,y)}=X_{c(m,y)},
     \end{align}
     
 which immediately implies that,
 
  \begin{align}
  g(j,X_{c(m,y)},y) &\stackrel{\eqref{eq:inj}}{=}g(j,X_{c(j,y)},y) \label{eq:inj}. 
  \end{align}

 Let us assume that $W_\A=j$ and $W_\B=m$. Since Protocol \ref{prot:NIID} is correct, Bob computes:
 
  \begin{align}
 g(m,X_{c(m,y)},Y) =0,
  \end{align}
  
  but also
  
  \begin{align}
  g(j,X_{c(m,y)},y) &\stackrel{\eqref{eq:inj}}{=}g(j,X_{c(j,y)},y)= 1 \label{eq:insecurity}.
  \end{align}

  This means that for $W_\A \neq W_\B$, Bob learns the password of Alice and thus 
  
\begin{align}
\pr_{W_{\A} W' Y X_{D}|W' \neq W_\A } \neq \pr_{W_{\A}\lr W' \lr Y X_D|W' \neq W_{\A} },
\end{align}

  which means that Protocol \ref{prot:NIID} is not secure for Alice.
 Thus if Protocol \ref{prot:NIID} is correct and secure for Alice the function $c(W_\B,Y)$ is injective. 
 \end{proof}

  From correctness we expect that for all password inputs $w \in \mathcal{W}$ thre exists a $y \in \mathcal{Y}$ and there exists a $x \in \mathcal{X}$ that Alice can input in the \kOT and will lead Bob to output $G=g(w,x,y)=1$.
  The following lemma states that it must be so for all $y \in \mathcal{Y}$ with $\pr_Y(y)>0$, for all password inputs $w \in \mathcal{W}$ simultaneously.
  Intuitively this is so because otherwise a dishonest user Bob would gain some information on the password of Alice from the message $Y$. He could for example exclude some password choices after seeing $Y$,
   making the protocol insecure for Alice. 
  
  \begin{lemma} \label{lem:possiblex}
   If Protocol \ref{prot:NIID} is correct and secure for Alice, then for all $w \in \mathcal{W}$, for all $y \in \mathcal{Y}$ such that $\pr_Y(y)>0$  there exists a $x \in \mathcal{X}$ such that: 
   \begin{align}
    g(w,x,y)=1
   \end{align}
  \end{lemma}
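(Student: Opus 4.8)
The plan is to derive the statement directly from the correctness clause together with Lemma~\ref{lem:ind}. First I would fix an arbitrary password $w \in \mathcal{W}$ and look at the honest execution of Protocol~\ref{prot:NIID} in which both parties hold $W_\A = W_\B = w$. By the correctness requirement of Definition~\ref{def:secNIID}, Bob's output $G = g(w, X_{c(w,Y)}, Y)$ equals $1$ with certainty, the randomness being over Alice's sampling of $(X_1,\dots,X_k,Y)$ according to $\pr_{X_1\dots X_k Y|W_\A=w}$. Since $g$ is a fixed function, this forces every tuple $(x_1,\dots,x_k,y)$ in the support of that conditional distribution to satisfy $g(w, x_{c(w,y)}, y) = 1$.

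The second ingredient is Lemma~\ref{lem:ind}, which (under security for Alice) states that the pair $(X_i,Y)$ is independent of $W_\A$; in particular its marginal $Y$ is independent of $W_\A$. Hence for any $y$ with $\pr_Y(y) > 0$ we also have $\pr_{Y|W_\A=w}(y) = \pr_Y(y) > 0$, so the event $\{Y = y\}$ has positive probability in the execution with $W_\A = w$. On that event there is at least one tuple $(x_1,\dots,x_k)$ for which $(x_1,\dots,x_k,y)$ lies in the support of $\pr_{X_1\dots X_k Y|W_\A=w}$, and by the previous paragraph the element $x := x_{c(w,y)} \in \mathcal{X}$ then satisfies $g(w,x,y) = 1$. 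This is exactly the required conclusion.

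I do not expect a real obstacle: the proof is a short combination of two facts already in hand. The only point worth emphasising is why security for Alice (through Lemma~\ref{lem:ind}) is genuinely needed and correctness alone is not enough. Correctness only constrains $g(w,\cdot,y)$ for values $y$ in the support of $Y$ conditioned on $W_\A = w$; a priori this support could be strictly smaller for some passwords $w$, which would invalidate the ``for all $y$ with $\pr_Y(y) > 0$'' quantifier in the lemma. Independence of $Y$ from $W_\A$ is precisely what rules this out.
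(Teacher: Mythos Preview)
Your proposal is correct and follows essentially the same approach as the paper: use correctness on the run with $W_\A=W_\B=w$ to obtain $g(w,x_{c(w,y)},y)=1$ for every $y$ in the support of $Y$ conditioned on $W_\A=w$, and then invoke Lemma~\ref{lem:ind} to replace that conditional support by $\{y:\pr_Y(y)>0\}$. The paper phrases the first step by contraposition whereas you argue directly, but the two ingredients and their roles are identical; your closing remark on why Lemma~\ref{lem:ind} is genuinely needed is in fact more explicit than the paper's own discussion.
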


\begin{proof}
  We will prove this lemma by contraposition.
  
 Assume that  there exists a $w \in \mathcal{W}$ and there exists a $y \in \mathcal{Y}$ with $\pr_{Y|W_\A}(y|w)>0$ such that for all $x \in \mathcal{X}$:
 \begin{align}
      g(w,x,y)=0
 \end{align}

 Let $W_\A=W_\B=w$. Then for all $x \in \mathcal{X}$ 
 
 \begin{align}
  g(w,x,y)=0,
 \end{align}

 which implies that Protocol \ref{prot:NIID} is not correct.
 
 So far we have shown that if Protocol \ref{prot:NIID} is correct, then for all $w\in \mathcal{W}$, for all $y \in \mathcal{Y}: \pr_{Y|W_\A}(y|w)>0$ there exists a $x \in \mathcal{X}$ such that $g(w,x,y)=1$.
 
 Furthermore security for Alice implies that $\pr_{Y|W_\A} = \pr_Y$ via Lemma \ref{lem:ind} and thus we can conclude that:
 
 For all $w \in \mathcal{W}$ for all $y \in \mathcal{Y}$ with $\pr_Y(y)>0$, there exists a $x \in \mathcal{X}$ such that:

 \begin{align}
  g(w,x,y)=1
 \end{align}

\end{proof}

 Note that on the one hand, the information $Y$ Alice sends does not give any information about her password input, which is necessary to ensure her security. On the other hand,
 it also means that Alice does not commit to a password choice by sending the information $Y$ to Bob.

We now prove Theorem~\ref{thm:NIID} using the above lemmas. The intuition behind the following proof is that if the identification Protocol \ref{prot:NIID} is correct and secure for Alice, a dishonest Bob
cannot learn anything about the password of Alice from the output of the \kOT $X_D$ or the additional information $Y$ alone except for the output $G$.
He also does not learn anything about the other inputs in the \kOT, thus allowing a dishonest user Alice to launch an attack by choosing
the inputs to the \kOT, $X_1, \dots , X_k$, such that each one of them combined with $Y$ will force Bob to accept for all of his password choices $W_\B$. 
Then Protocol~\ref{prot:NIID} is clearly not secure for Bob since the dishonest user Alice does	not need to choose one password $W_\A$ but can force Bob to always accept.

\begin{proof}
 If Protocol \ref{prot:NIID} is correct and secure for Alice, a dishonest user Alice can use the following attack to force Bob to accept for all of his password choices.
 Alice inputs $W_\A=1$, chooses a value for $Y$ honestly and then picks  the inputs to the \kOT, $X_1,\dots,X_k$, such that for every password choice $W_\B$ of the user Bob,
 he will obtain $X_i=X_{c(W_\B,Y)}$ such that he will output $G=1$.
 
\paragraph{Attack Strategy Of Dishonest User Alice}
 \begin{enumerate}
 \item Alice chooses $Y=y$ (honestly) according to the distribution $\pr_{Y|W_\A=1}$, and sends it to Bob.
 
 \item Alice uses the non-interactive \kOT functionality, with inputs $X_1,\dots,X_k$ that she chooses as follows.
 
 For every password $w \in \mathcal{W}$:
 \begin{enumerate}
  \item Find a $x$ such that: 
    \begin{align}\pr_{X_j|W_{\A}=w,Y=y}(x)>0 \text{, with } j=c(w,y)\end{align}
   and 
 \begin{align}G= g(w,x,y)=1.\end{align} 
  \item Set input $X_j = x \text{, with } j=c(w,y)$. 
 \end{enumerate}
\end{enumerate}

 Note that step 2 is possible because correctness and security for Alice imply, via Lemma \ref{lem:possiblex}, that for all possible choices of $Y$ and for all possible password choices $w \in \mathcal{W}$ there exists  a $x \in \mathcal{X}$ such that $G=1$.

Furhtermore, Lemma~\ref{lem:inj} implies that the function $c(W_\B,y)$ is injective for all $y \in \mathcal{Y}$. 
Then once $y$ is chosen, for every $w \in \mathcal{W}$ there exists only one $j \in \{1,\dots,k\}$, such that $j=c(w,y)$.

 These two facts allow a dishonest Alice to choose the inputs of the \kOT, such that for every password choice of Bob $w \in \mathcal{W}$ he retrieves a $x \in \mathcal{X}$ such that he outputs $G=1$.

 In more detail, after receiving the \kOT and $Y=y$, the honest user Bob chooses a password $W_\B$, inputs his choice $D=c(W_\B,y)$ to the \kOT and receives the message $X_D$.
 As described above for every one of his password choices he receives a message $X_D=x$ such that $g(w,x,y)=1$.
 He then outputs $G=g(W_\B,X_D,Y)=1$ for any of his password choices, which implies that Protocol \ref{prot:NIID} is not secure for Bob.

  \end{proof}
  
 \subsection{The Importance Of Interaction}
  \label{interaction}
  
  Theorem~\ref{thm:NIID} shows that a non-interactive protocol using one instance of a \kOT functionality cannot implement the identification functionality securely. We proved that 
  security for Alice and correctness of the protocol allow the attack described above to succeed and we claim that this is true as long as Alice has knowledge of the function $g(W_\B,X_D,Y)$.
  This knowledge allows her to choose the inputs to the \kOT such that Bob will accept for all of his passwords, making the protocol insecure.
  
  It is then interesting to examine where (interactive) protocols that are known to be secure differ. In the examples of \cite[Protocol~Q-ID]{DFSS07} and Protocol~\ref{prot:secid-kOT} the user Bob sends some information to Alice after receiving the \kOT,
  in this case a strong 2-universal hash function. This interaction makes the protocol secure for Bob against the above attack, because the (possibly extended) function $g(W_\B,X_D,Y)$ as defined above, is 
  fixed after Alice has chosen her inputs to the \kOT. 
  Fixing the function $g$ after Alice has commited to her inputs to the \kOT denies her the possibility to choose them in such a way that $G=1$ for all passwords.
  Thus we conclude that as long as the function that is used by Bob to determine the acceptance predicate is fixed before Alice chooses her inputs to the \kOT, she can choose the inputs such that the above attack will work.
  
  Extending the non-interactive Protocol \ref{prot:NIID} by allowing multiple uses of the \kOT functionality, or even \kbOT functionalities from Alice to Bob will still be insecure for Bob as the function $g(\cdot)$ is fixed before Alice
  chooses her inputs to the \kOT{s}. 
  For the same reason, an interaction from Bob to Alice before she chooses her inputs to the \kOT would not stop the above attack from functioning.

\section{Secure Identification From \sROT With Interaction}
\label{sec:secid-srot}

In \sref{proof} we introduced Protocol~\ref{prot:secid-kOT}, a password-based identification protocol based on a \kOT functionality. 
The \kOT construction we showed in \cref{Chapter4} is, however, inefficient as it requires $k$ instances of an \OT functionality, where $k$ is the number of passwords. 

In this section, we show that one can instead use the \sROT functionality that only requires $\log{k}$ 
\OT{s}. While weaker than the \kOT or \kROT functionalities, it is sufficient to achieve security for the password-based identification protocol.

We introduce the password-based identification protocol that relies on the security of a \sROT functionality. A sketch of the protocol is shown in Figure~\ref{img:secidsROT}

  \begin{prot}
 \label{prot:secid-sROT}
Password-based identification protocol with inputs $W_\A, W_\B \in \mathcal{W}$, the passwords of user Alice and user Bob respectively, where $\mathcal{W} = \{1,\dots,k\}$. 
Let $\mathcal{H}$ be a family of strong 2-universal hash functions such that $h\in\mathcal{H}$ and $h: \mathcal{W} \mapsto \{0,1\}^\ell$. 
Then the protocol between Alice and Bob is the following:

\begin{enumerate}
\item The users Alice and Bob employ a \sROT functionality, $\F_{\text{\sROT}}$ that takes no input from Alice and input $D \in \mathcal{W}$ from Bob.

\item The user Alice receives outputs  $S_1,S_2,\dots,S_k \in \mathcal{S}$ and Bob receives output string $S_{D} \in \mathcal{S}$, where $\mathcal{S} =\{0,1\}^\ell$.

\item Bob chooses a function $h \in \mathcal{H}$ uniformly at random and sends $h$ to Alice.

\item Alice sends $z:=S_{W_\A} \xor h(W_\A)$ to Bob.

\item The user Bob accepts if and only if $z=S_D \xor h(W_\B)$ .

\end{enumerate}
\end{prot}

We now introduce a theorem that states that the secure identification protocol we propose above is secure in the sense of Definition~\ref{def:secid} 
if the \sROT functionality used is secure according to Definition~\ref{def:seckrot}.
 
 \begin{theorem}
  \label{thm:idsROT}
  If there exists a protocol that implements the \sROT functionality $\eps-$securely according to Definition~\ref{def:secsrot} and the min-entropy of password choices $W$ is $H_{min}(W) \geq 1$,
  then Protocol~\ref{prot:secid-sROT} is $\eps'-$secure in the sense of Definition~\ref{def:secid}, where $\eps' = \eps + \frac{k^2}{2^\ell}$.
 \end{theorem}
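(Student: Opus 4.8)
The plan is to verify the three requirements of Definition~\ref{def:secid} one at a time, reducing each to the corresponding clause of the \sROT security definition (Definition~\ref{def:secsrot}) together with the strong 2-universality of $\mathcal{H}$. \textbf{Correctness} is immediate: if $W_\A=W_\B$ then Bob's input to the \sROT is $D=W_\B=W_\A$, so by correctness of the \sROT he obtains $S_D=S_{W_\A}$ except with probability $\eps$; in that case $z=S_{W_\A}\oplus h(W_\A)=S_D\oplus h(W_\B)$ and Bob accepts, so $G=1$ except with probability $\eps\le\eps'$.

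\textbf{Security for Alice.} Consider a dishonest Bob; let $D'$ be the \sROT input guaranteed by the security-for-Alice clause of Definition~\ref{def:secsrot}, and collect into $G'$ Bob's whole view ($D'$, his \sROT output, the hash $h$ he chose, and the message $z$). Define $W':=D'$. Since Alice has no input to the \sROT and $h$ is chosen by Bob, everything in Bob's view other than $z$ is independent of $W_\A$; and $z=S_{W_\A}\oplus h(W_\A)$. Conditioned on $W'\neq W_\A$, i.e. on $W_\A=I$ for some $I\neq D'$, Definition~\ref{def:secsrot} gives $\pr_{S_I S_{D'} D' G'}\appe \pr_U\times\pr_{S_{D'}D'G'}$, so $S_{W_\A}$ is $\eps$-close to uniform and independent of everything Bob holds (in particular $h$, chosen before $z$, is independent of the freshly sampled $S_{W_\A}$, which Bob never receives); hence $z$ carries no information about $W_\A$. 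This yields $\pr_{W_\A W' G'\,|\,W'\neq W_\A}\appe\pr_{W_\A\lr W'\lr G'\,|\,W'\neq W_\A}$. The hypothesis $H_{min}(W)\ge 1$ guarantees $\pr[W'\neq W_\A]\ge\tfrac12$, so this conditioning is legitimate.

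\textbf{Security for Bob.} This is the delicate part. Consider a dishonest Alice; let $S_1',\dots,S_k'$ and Alice's view $V'$ (which includes the hash $h$ she received) be as in the security-for-Bob clause of Definition~\ref{def:secsrot}, so Bob's \sROT output equals $S_D'$ except with probability $\eps$, and $D=W_\B$ is independent of $(V',S_1',\dots,S_k')$. Honest Bob then sends a uniformly random $h\in\mathcal{H}$, independent of the \sROT data, and Alice replies with some $z$ that is a function of $V'$, $S_1',\dots,S_k'$ and $h$ but not of $W_\B$; Bob accepts iff $z=S_{W_\B}'\oplus h(W_\B)$. Call a password $w$ \emph{accepting} (for this run) if $z\oplus S_w'=h(w)$. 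If $w\neq w'$ are both accepting, then $h(w)\oplus h(w')=S_w'\oplus S_{w'}'$, an event that does not depend on $z$ and whose probability over the random choice of $h$ is $2^{-\ell}$ by strong 2-universality (the strings $S_w',S_{w'}'$ are fixed before $h$ is sent). A union bound over the at most $\binom{k}{2}$ pairs shows that, except with probability at most $k^2/2^{\ell+1}\le k^2/2^{\ell}$ over $h$, there is at most one accepting password. On this good event define $W'$ to be that unique accepting password, or $W'=\perp$ if none exists (and $W'=\perp$ on the bad event as well). Then $W'$ is a function of $(V',S_1',\dots,S_k')$, hence independent of $W_\B=D$, giving $\pr_{W_\B W' V'\,|\,W'\neq W_\B}\appe\pr_{W_\B\lr W'\lr V'\,|\,W'\neq W_\B}$; and if $W'\neq W_\B$ then $W_\B$ is not accepting, so Bob rejects, except when the \sROT failed to deliver $S_D'$ (probability $\eps$) or $h$ fell in the bad set (probability $\le k^2/2^{\ell}$). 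Hence $\pr[G=1\mid W'\neq W_\B]\le\eps+k^2/2^{\ell}=\eps'$.

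\textbf{Main obstacle.} The substantive step is the security-for-Bob argument: one must rule out that a dishonest Alice, who sees $h$ \emph{before} choosing $z$, can hedge so that her single message $z$ is simultaneously consistent with two distinct passwords. This is exactly where strong 2-universality of $\mathcal{H}$ enters and where the $k^2/2^{\ell}$ term comes from; the rest is bookkeeping with Definition~\ref{def:secsrot}. Two points require care along the way: $W'$ must be constructed only from data independent of $W_\B$ (so that both the independence and the Markov-chain statement hold), and the conditioning on $W'\neq W_\A$ (resp.\ $W'\neq W_\B$) must be well-defined, which is the role of the assumption $H_{min}(W)\ge 1$.
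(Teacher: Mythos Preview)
Your proposal is correct and follows essentially the same route as the paper: the same choice $W'=D'$ for security for Alice, and for security for Bob the same collision argument (your ``at most one accepting password'' is exactly the paper's event $\mathcal{E}$ that the values $Z_i:=S_i'\oplus h(i)$ are pairwise distinct), with $W'$ defined as the unique $w$ with $z=S_w'\oplus h(w)$ and the $k^2/2^\ell$ term coming from the same union bound over $\binom{k}{2}$ pairs via strong 2-universality. One small imprecision: $W'$ is a function of $(V',S_1',\dots,S_k',h,z)$, not just $(V',S_1',\dots,S_k')$; independence from $W_\B$ still holds because honest Bob samples $h$ independently of $W_\B$ and $z$ is determined by Alice's view, which is the argument the paper also uses.
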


 In \cref{Chapter4} we have showed how to construct a secure \sROT protocol relying on a secure \ROT.
 Moreover, in \cref{Chapter3} we have showed that a secure \ROT protocol exists in the isolated qubits model.
 Taking these results into account, the previous theorem states that our password-based identification protocol is secure in the isolated qubits model.
 
 \begin{figure}[H]
 \centering
  \def\svgwidth{0.8\textwidth}
  \input{./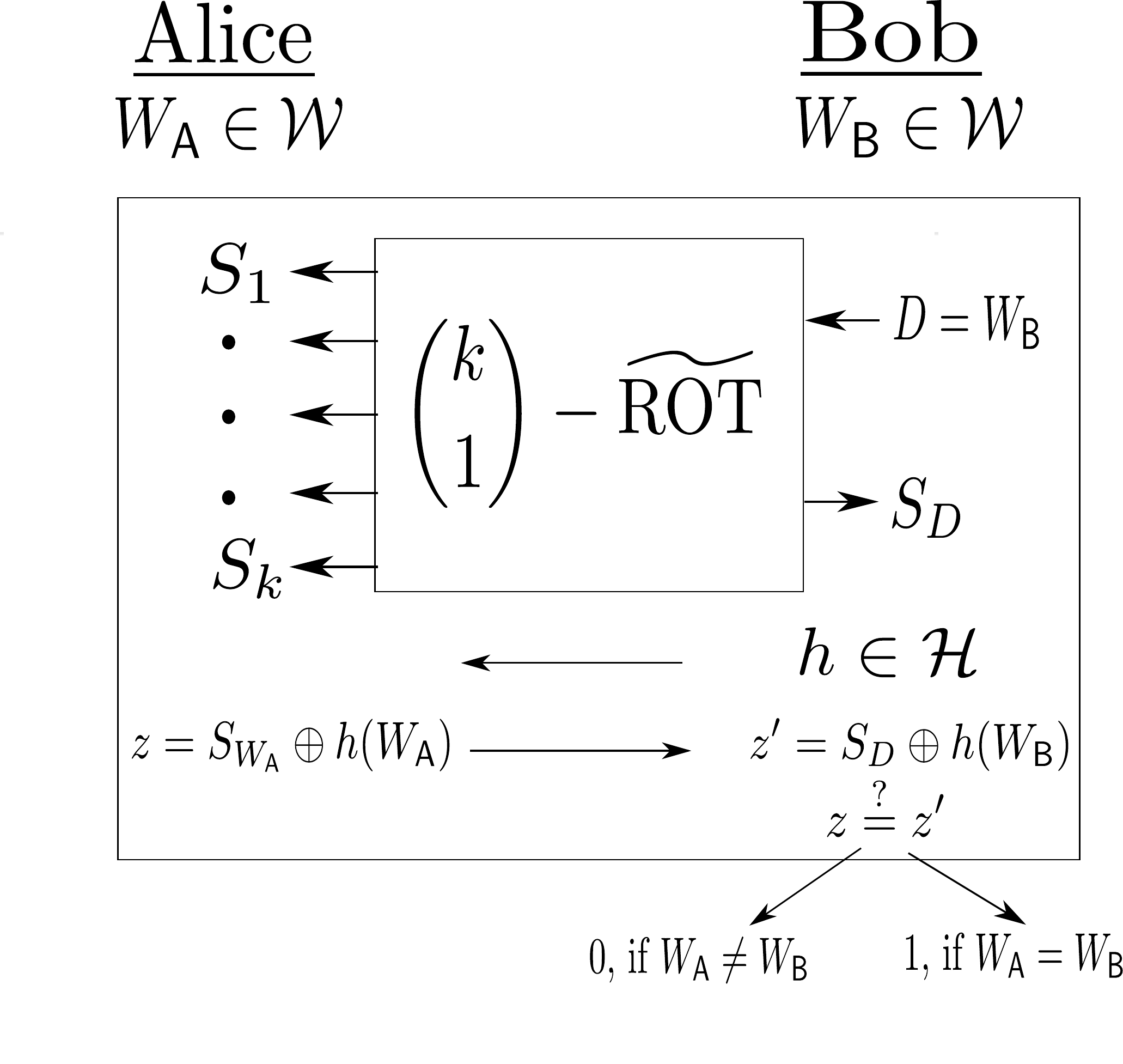_tex}
 \caption{Sketch of the password-based identification protocol (Protocol~\ref{prot:secid-sROT}) that makes use of a \sROT functionality.}
   \label{img:secidsROT}
\end{figure}

 \subsection{Proof Of Theorem~\ref{thm:idsROT}} 
 
 Finally, in this section, we prove that if the \sROT used in Protocol~\ref{prot:secid-sROT} is secure then the identification protocol is secure.
 
  \subsubsection{Correctness}
  
\begin{proof}
 Honest users Alice and Bob hold inputs $W_\A$ and $W_\B$. 
 
 Following Protocol~\ref{prot:secid-sROT} Bob inputs $D=W_\B$ in the \sROT functionality which is correct and thus Bob receives output
 
 \begin{align}
  G_{\text{\sROT}} = S_D = S_{W_\B},
 \end{align}

 except with probability $\eps$.

Then $z'=S_D\oplus h(W_\B)=S_{W_\B} \oplus h(W_\B)$.

Since Alice is honest $z=S_{W_\A} \oplus h(S_{W_\A})$ and thus if $W_\A = W_\B$  then $z=z'$ and Bob will output $G=1$. 

Similarly if $W_\A \neq W_\B$ Bob will output $G=0$ except with probability $\frac{k^2}{2^\ell}$, the probability that $h(W_\A)=h(W_\B)$ given $W_\A \neq W_\B$.

Thus Protocol~\ref{prot:secid-sROT} is correct except with probability $\eps'=\eps+\frac{k^2}{2^\ell}$.
 
\end{proof}

  \subsubsection{Security For Alice}
    \label{sec:prsecalice}

  \begin{proof}
    
    For honest user Alice with input $W_\A$ and any dishonest user Bob we can define a random variable $W'=D'$, where $D'$ is Bob's input
    in the \sROT used in the protocol. 
    Since Bob has received no information before he decides on his input to the \sROT, $D'$ and thus $W'$ are independent of the input $W_\A$ that honest Alice holds.
    
    Furthermore $W_\A$ is independent of the messages $S_1, \dots S_k$ that Alice receives from the \sROT since she has chosen it before receiving any output from the \sROT and the latter takes no input from Alice.
    
    From the above it is clear that if $W' \neq W_\A$ and since $W' = D'$ then
    
    \begin{align}
     \pr_{W_\A W' S_{W'} | W' \neq W_\A }= \pr_{W_\A \lr W' \lr S_{W'}|W'\neq W_\A} \label{lreq1}.
    \end{align}

   Furthermore from the security of \sROT, there exists a random variable $D'$ such that for all $I \neq D'$ the following holds
   
   \begin{align}
    \pr_{S_I D' S_{D'}|D' \neq I} \appe \pr_U \cdot \pr_{D' S_{D'} | D' \neq I} \label{lreq2}.
   \end{align}

   Then for $I=W_\A$ and $W_\A \neq W'$ equation \eqref{lreq2} becomes,
   
   \begin{align}
        \pr_{S_{W_\A} W' S_{W'}|W' \neq W_\A} \appe \pr_U \cdot \pr_{W' S_{W'}  | W' \neq W_\A} \label{lreq22}.
   \end{align}

   Consider the random variable $Z=S_{W_\A} \oplus h(W_\A)$ that describes the message Alice sends to Bob after receiving the hash function $h$.
   Taking into account that $W_\A$ is independent of $W',D'$ and $S_1,\dots,S_k$, including $S_{W'}$ and $S_{W_\A}$, conditioned on the event $W'\neq W_\A$
   Then from the properties of the modulo 2 addition and equation \eqref{lreq22} we have that 
   
   \begin{align}
    \pr_{ZW_\A W'S_{W'}|W'\neq W_\A} \appe \pr_{U} \cdot \pr_{W_\A W'S_{W'}|W'\neq W_\A}.
   \end{align}
    
    From \eref{lreq1} the above can be written as:
    
    \begin{align}
    \pr_{ZW_\A W'S_{W'}|W'\neq W_\A} &\appe \pr_{U} \cdot \pr_{W_\A \lr W' \lr S_{W'}|W'\neq W_\A} \\
    &\appe \pr_{W_\A \lr W' \lr Z S_{W'}|W'\neq W_\A} \\
    &\appee \pr_{W_\A \lr W' \lr Z S_{W'}|W'\neq W_\A},
   \end{align}
   
   with $\eps' = \eps + \frac{k^2}{2^\ell}$.
   
   Then Protocol~\ref{prot:secid-sROT} is $\eps'-$secure for Alice.
  \end{proof}

  \subsubsection{Security For Bob}
  \label{sec:prsecbob}

  \begin{proof}

  Since the \sROT functionality, that is implemented by an honest user Bob with input $D=W_\B$ and a dihonest user
  Alice, is secure for Bob, there exist random variables $S'_1, \dots , S'_k$ such that
  
  \begin{align}
   \pr_{DV'S'_1 \dots S'_k} \appe \pr_{D} \cdot \pr_{V'S'_1\dots S'_k}.
  \end{align}

It is clear that $V',S'_1,\dots,S'_k$ are independent of $W_\B$.

We then define  $Z_i = S'_i \oplus h(i)$ for $i \in \mathcal{W}$. Consider the event $\mathcal{E}$ that all $Z_i$'s are distinct.
Since $h$ is strong 2-universal and is also independent of $S'_i$ the $Z_i$'s are pairwise independent. Then from the union bound it follows that 
the  event $\mathcal{E}$ occurs except with probability $k(k-1)/2 \cdot 1/2^\ell \leq k^2/2^{\ell+1}$.

We define random variable $W'$ such that the message sent by Alice $Z= S_{W'}\oplus h(W')$. If $Z \neq Z_i$ for all $i$ then we set $W' = \bot$ and honest Bob always outputs $G=0$
regardless of his password choice $W_\B$. In this case a dishonest user Alice learns nothing about $W_\B$. Similarly from the way that $W'$ is defined Bob will output $G=1$ if $W' = W_\B$.

Note that from security of the \sROT functionality, and since $h$ is picked uniformly at random, $W'$ is independent of $W_\B$.
This further implies that $Z_1,\dots,Z_k,Z$ are also independent of $W_\B$. Moreover since the event $\mathcal{E}$ is determined by the $Z_i$'s it also holds that
$Z_1,\dots,Z_k,Z$ are independent from $W_\B$ conditioned on the event $\mathcal{E}$ and even
given $W'$ conditioned on $\mathcal{E}$ and $W' \neq W_\B$.

Now consider $Z_1,\dots,Z_k,Z,G$, if $W' \neq W_\B$ and event $\mathcal{E}$ then Bob outputs $G=0$ with probability $P[G=0|W' \neq W_\B, \mathcal{E}] = 1$.
Then  $Z_1,\dots,Z_k,Z,G$ are independent of $W_\B$ given $W'$ conditioned on the event $W'\neq W_\B$ and $\mathcal{E}$, that is:

\begin{align}
 \pr_{W_\B W' Z_1 \dots Z_k Z G |W' \neq W_\B, \mathcal{E}} \appe \pr_{W_\B \lr W' \lr Z_1 \dots Z_k Z G |W' \neq W_\B ,\mathcal{E}}.\label{lreq3}
\end{align}

We then define $p = \pr[\mathcal{E}|W'\neq W_\B]$ and $\overline{p} = \pr[\overline{\mathcal{E}}|W'\neq W_\B]$. 
Note that $\pr[\overline{\mathcal{E}}] \leq k^2/2^{\ell + 1}$.

Furthermore since $H_{min}(W) \geq 1$ it is easy to see that  $\pr[W'=W_\B]\leq\frac{1}{2}$.
Then

\begin{align}
 \overline{p} = \pr[\overline{\mathcal{E}}|W'\neq W_\B] = \frac{\pr[\overline{\mathcal{E}}]}{1-\pr[W'=W_\B]} 
  \leq 2 \pr[\overline{\mathcal{E}}] \leq \frac{k^2}{2^\ell}.
\end{align}

Note that $\overline{p}$ upperbounds the probability $\pr[G=1|W'\neq W_\B] \leq \overline{p} \leq \frac{k^2}{2^\ell} \leq \eps'$, where $\eps'= \eps + \frac{k^2}{2^\ell}$,
fulfilling the first condition for security.

From basic probability theory and using \eref{lreq3} :

\begin{align}
 \pr_{W_\B W' Z'_1 \dots Z'_k Z G |W' \neq W_\B} &= p \cdot \pr_{W_\B W' Z'_1 \dots Z'_k Z G |W' \neq W_\B, \mathcal{E}} + \overline{p} \cdot\pr_{W_\B W' Z'_1 \dots Z'_k Z G |W' \neq W_\B, \overline{ \mathcal{E}}} \\
& \appe p \cdot \pr_{W_\B \lr W' \lr Z'_1 \dots Z'_k Z G |W' \neq W_\B, \mathcal{E}} + \overline{p} \cdot\pr_{W_\B W' Z'_1 \dots Z'_k Z G |W' \neq W_\B, \overline{ \mathcal{E}}}
 \end{align}

 Finally note that $\mathcal{E}$ is independent of $W_\B$ and $W'$ and thus also when conditioned on $W' \neq W_\B$, then from conditional independence
 
 \begin{align}
  \pr_{W_\B \lr W' \lr Z'_1 \dots Z'_k Z G |W' \neq W_\B} = p \cdot \pr_{W_\B \lr W' \lr Z'_1 \dots Z'_k Z G |W' \neq W_\B, \mathcal{E}} + \overline{p} \cdot \pr_{W_\B \lr W' \lr Z'_1 \dots Z'_k Z G |W' \neq W_\B, \overline{ \mathcal{E}}}.
 \end{align}

 The distance between two probability distributions is upper-bounded by one by definition. Then so is the distance between $ \pr_{W_\B W' Z'_1 \dots Z'_k Z G |W' \neq W_\B, \overline{ \mathcal{E}}}$ and 
 $\pr_{W_\B \lr W' \lr Z'_1 \dots Z'_k Z G |W' \neq W_\B, \overline{ \mathcal{E}}}$. 
 
 Thus the  distance between $ \pr_{W_\B  W' Z'_1 \dots Z'_k Z G |W' \neq W_\B} $ and $ \pr_{W_\B \lr W' \lr Z'_1 \dots Z'_k Z G |W' \neq W_\B} $ is upper-bounded by $\overline{p} + \eps \leq k^2/2^{\ell} + \eps=\eps'$.
 
 Then since Alice's output $V'$ is defined by $Z_1, \dots, Z_k, Z, G$:
 
 \begin{align}
  \pr_{W_\B  W' V' |W' \neq W_\B} \appee \pr_{W_\B \lr W' \lr V' |W' \neq W_\B}.
 \end{align}

  Thus Protocol~\ref{prot:secid-sROT} is secure for Bob.
  \end{proof}

Thus Protocol~\ref{prot:secid-sROT} is secure. We note that this protocol is more efficient than the previous identification protocols, because of the more efficient construction of \sROT compared to the \kROT protocol presented in \cref{Chapter4}.
   
\chapter{Conclusions \& Discussion}
\label{Chapter6}
\lhead{Chapter 6. \emph{Conclusions \& Discussion}}

In this final chapter, we summarise our main results and conclusions. Then we open the discussion of these results in relation to current knowledge. 
Finally, we pose some of the questions that arise from this discussion and propose possible future steps.

\section{Conclusions \& Discussion}

In \cref{Chapter3} we presented a secure string \ROT in the isolated qubits model, using the ``leaky'' OTM presented in \cite{Liu14b}.
We note that our proof follows a similar path to the one used in \cite{Liu15}, but makes use of the notion of non-degenerate linear functions coupled with the results of 
\cite{DFSS06}. The resulting proof is simpler than the original and allows us to construct a secure string \ROT in the IQM. 
This comes at a cost, by using Theorem~\ref{thm4.5} security is achieved with an error $2^{2 \ell' +1}$ times larger than the one presented in \cite{Liu15}.
Fortunately, as shown in equation \eqref{eq70}, this factor does not influence the security result.
However, Theorem~\ref{thm:main} implies that the security parameter $k$ has to be of the order of $\ell'$ in order for the protocol to achieve security.

In \cref{Chapter4} we made the first attempt to study more complex two-party functionalities in this model.
We propose secure \OT, \kOT and \sROT protocols that rely on the security and composability of an \ROT functionality.
In order to guarantee composability of the \ROT protocol, we restricted the users to measure at the end of each protocol. 
These protocols can then be implemented in the isolated qubits model using an \ROT protocol that is secure in that model, such as Protocol~\ref{prot:OTIQM} presented in \cref{Chapter3}.

The question that then arises is if the aforementioned assumption is realistic. Since composability of protocols has not been studied in the isolated qubits 
this questions remains an open problem for further study and we will briefly discuss this in the next section.

Following that, in \cref{Chapter5} we address an interesting problem of secure two-party computation, the evaluation of the equality function. 
We present a protocol for secure password-based identification that uses a \sROT functionality, motivated by the protocols proposed in \cite{DFSS07}.
However, the results of \cref{Chapter4} and \cref{Chapter5} are not limited by the specific cryptographic model, they can be implemented in any model in which there exists a protocol that implements the \ROT functionality securely and in a composable way.

An interesting question that we encountered on the way is if it is possible to construct non-interactive secure password-based identification protocols. 
In Section~\ref{sec:impossibility}, we proved constructing such a protocol based on oblivious transfer is impossible.
The interaction from Bob to Alice must define the way he computes his output in order for the protocol to be secure against the attack we presented in Section~\ref{sec:impossibility}.

Moreover, we claim that this result is not restricted to the secure evaluation of the equality function but also applies to more (or even all) secure-function-evaluation problems.
For example in the similar problem of Yao's millionaire problem where Bob computes a different function, a dishonest user Alice still has the ability to predetermine the output for all of her inputs,
as long as Bob's function is not determined after she has commited to her inputs.

\section{Future Work}

This thesis studies the construction of a secure string \OT protocol if the users are restricted to operations on single qubits and classical communication
between them and gives examples of possible applications to construct more complex secure two-party computation protocols such as password-based identification. 
This leads of course to new questions that remain an open challenge for the future. 

As we mentioned in the last section, studying if composability holds in the IQM is likely the most interesting problem that arises from this thesis.
If it is shown to be so, we have shown that a secure \OT construction is possible in the IQM, which would imply that any secure two-party computation functionality can be implemented.
If however composability does not hold in the IQM, then constructing and analysing protocols in this model would prove an exciting challenge in itself. 
For example, the problem of analysing the security of two parallel \OT{s} and modelling the measurement strategies of an adversary who is allowed to partially measure qubits from the first and second \OT and adapt his
measurement strategy depending on partial results of each \OT seems to be a first challenge for further research.

As we have already described in \cref{Chapter1} there are numerous results that prove the impossibility of oblivious transfer in a fully quantum world. 
Nevertheless, there exist different approaches to restrict the users in a realistic fashion and achieve oblivious transfer. 
One of the most interesting questions that arises from this train of thought is to find the minimal and most realistic restrictions or assumptions needed to achieve secure \OT{s}. 
For example, Liu has the question of allowing a number of entangling operations on the isolated qubits in \cite{Liu15},  which could be a possible approach to generalise the isolated qubits model.

We have discussed in more detail two approaches to limit an adversary, restricting him to single-qubit operations or restricting his qubit-storage capacity. 
So far, existing protocols that are secure against one type of adversary are not secure against the other. The question then is, could we construct protocols that combine the power of these two models? For example using
a \OT that is secure in the IQM and one that is secure in the noisy-storage model to construct one \OT that is secure in both models and using the modulo 2 addition of their outputs?
This would then mean that the adversary would need to both have larger qubit storage capacities, in order to break the noisy-storage model \OT security and be able to perform entangling operations
on the qubits he receives, in order to break the isolated qubits model \OT security.

As a further approach to combine these models, Liu has addressed the question of allowing a number of entangling operations on the isolated qubits in \cite{Liu15}.
This could be the first step to define a more general model and should be investigate further.

One further possibility for future endeavours that arises from our impossibility proof in \sref{proof:niid} is to examine if our result indeed applies to more non-interactive two-party protocols.
However, there exist results that state that quantum one-time programs can be constructed from one-time memories \cite{BGS13}. 
We conjecture that there is a lower-bound on the number of one-time memories needed to construct a secure one-time program for password-based identification so that both results hold. 
Unfortunately we did not to study this into more detail in this thesis and we leave it as an open question.

Furthermore, we leave the task of extending the impossibility proof to the error case as discussed in \sref{error-case} as an open problem for the future.
Our intuition is that the attack described in the proof should function since interaction from Bob to Alice seems to be necessary to achieve security for Bob. Nevertheless formalising this intuition is an interesting extension of the 
impossibility proof discussed in this thesis. 


\addtocontents{toc}{\vspace{2em}} 

\appendix 



\chapter{Probability Theory} 

\label{AppendixA} 

\lhead{Appendix A. \emph{Probability Theory}} 

\section{Probability Theory}

\subsection{Random Variables}



The probability distribution of a random variable $X$ that takes values $x \in \Xcal$ is a function $\pr_X: \Xcal \mapsto [0,1]$ and is defined as:

\begin{align}
 \pr_X(x):=\pr[X=x] , \forall x \in \Xcal
\end{align}

Note that for every probability distribution the following holds:
\begin{align}
\sum_{x \in \Xcal} \pr_{X}(x)=1
\end{align}

The joint probablity distribution of two random variables $X$ and $Y$ that take values $x \in \Xcal$ and $y \in \Ycal$ respectively, is defined as:

\begin{align}
 \pr_{XY}(x,y) := \pr[X=x,Y=y],
\end{align}

and indicates the probability that $X$ takes the value $x$ and $Y$ takes the value $y$ simultaneously.

Let $\pr_{XY}$ be the joint distribution of random variables $X$ and $Y$. Then the distribution of $X$ can be obtained by marginalising over $Y$. The distribution $\pr_X$ is then called a marginal distribution:

\begin{align}
 \pr_{X}(x)=\sum_{y\in \Ycal} \pr_{XY}(x,y) \; \; \forall x \in \Xcal.
\end{align}

Let $\pr_{XY}$ be the joint of random variables $X$ and $Y$. If $X$ and $Y$ are \emph{independent} random variables the joint distribution can be written as:

\begin{align}
 \pr_{XY}(x,y) = \pr_X(x) \cdot \pr_Y(y) \; \; \forall x \in \Xcal,\;\; \forall y \in \Ycal.
\end{align}


Furthermore, the \emph{conditional probability distribution} of a random variable $X$ takes the value $x \in \Xcal$ given the event that the random variable $Y$ takes the value $y \in \Ycal$
is defined as:

\begin{align}
 \pr_{X|Y}(x|y):= \frac{\pr_{XY}(x,y)}{\pr_Y(y)}.
\end{align}

%
%
%

Moreover we introduce the symbol $\pr_{X \lr Y \lr Z}$, as used in \cite{DFSS07} and \cite{FS09}, to denote that the distribution of a random variable $X$ is 
independent of a random variable $Z$ given a random variable $Y$:

\begin{align}
 \pr_{X|YZ} = \pr_{X|Y}
\end{align}

Then we write:

\begin{align}
 \pr_{XYZ} = \pr_{X \lr Y \lr Z}
\end{align}

This notation is extended to $\pr_{XYZ|\cal{E}} = \pr_{X \lr Y \lr Z | \cal{E}}$
to denote that the distribution of a random variable $X$ is independent of a random variable $Z$ given a random variable $Y$ conditioned on an event $\cal{E}$:

\begin{align}
 \pr_{X|YZ\cal{E}} = \pr_{X|Y\cal{E}}
\end{align}

\paragraph{Boole's inequality}

The union bound or Boole's inequality states that the probability of at least one event occuring cannot be greater than the sum of the probabilities of all individual events.

Formally for a set of events $A_1, A_2, \dots $ the following inequality holds:

\begin{align}
 \pr \left( \bigcup_i A_i \right) \leq \sum_i \pr(A_i)
\end{align}


Finally, the expected value of a random variable $X$ that takes values $x \in \Xcal$ is defined as:

\begin{align}
\EE(x) = \sum_{x\in \Xcal} x\cdot \pr_X(x) 
\end{align}

\subsection{Uniform Distribution}

If a random variable $X$ is uniformly distributed it means that all of its values are equiprobable. 

\begin{definition}
 A random variable $X$ that takes values $x \in \Xcal$ is uniformly distributed if its distribution $\pr_X$ is of the following form:
 \begin{align}
  \pr_X(x) = \frac{1}{|\Xcal|} \; \; \; \forall x \in \Xcal
 \end{align}
 Then $\pr_X$ is a uniform distribution over $\Xcal$.
\end{definition}

%
%
%

\subsection{$\epsilon$-Net}

Intuitively an $\epsilon$-net is a subset of some normed space such that for every point of the original space there is some point in the 
$\epsilon$-net that is $\epsilon$-close to it. We now introduce the formal definition of an $\epsilon$-net. 

\begin{definition}
 Let $E$ be a subset of some normed space, with norm $\|\cdot\|$ and let $\epsilon >0$. Then $\Etil$ is an $\epsilon$-net for $E$ if $\Etil \subset E$, and for all $x \in E$, there exists some $y \in \Etil$
 such that:
 \begin{align}
  \|x-y\| \leq \epsilon
 \end{align}
\end{definition}


\chapter{Measures of Uncertainty} 

\label{AppendixB} 

\lhead{Appendix B. \emph{Measures of Uncertainty}} 

\section{Renyi Entropy}

\begin{definition} For a random variable $X$ that takes values $x \in \Xcal$, for $\alpha \in \RR$ with $ \alpha \geq 0 $ and $\alpha \neq 1$, the Renyi entropy of order $\alpha$ is defined as

 \begin{align}
  H_\alpha(X):= \frac{1}{1-\alpha} \log\left(\sum_{x \in \Xcal} \pr_X(x)^\alpha \right)
 \end{align}

\end{definition}

We note that the Renyi entropy is a generalised entropy.

For $\alpha = 1$ we obtain the Shannon entropy:

 \begin{align}
  H_1(X):= - \sum_{x \in \Xcal}\pr_X(x) \log \left( \pr_X(x) \right)
 \end{align}

For $\alpha = 0$ we obtain the max-entropy:

 \begin{align}
  H_0(X):= \log|\Xcal|
 \end{align}
 
\section{Min-Entropy}

One important measure of uncertainty for information theory is the Renyi entropy we get for
 $\alpha \rightarrow \infty $, namely the min-entropy:

 \begin{align}
  H_\infty(X):= \min_{x\in\Xcal} \left[- \log \pr_X(x) \right]
 \end{align}

It is the smallest of the Renyi entropies of order $\alpha$ and thus the most conservative estimate of uncertainty in a random variable. This is the reason
why it is widely used in cryptography. 

Similarly one can define the conditional min-entropy

 \begin{align}
  H_\infty(X|Y):=  \min_{x\in\Xcal,\; y \in \Ycal} \left[ - \log \pr_{X|Y}(x|y) \right]
 \end{align}

\section{Smoothed Min-Entropy}

The smoothed min-entropy defined below can be understood as the entropy of a distribution $\pr_X$ that is smoothed by cutting a probability mass $\eps$ from the largest probabilities.

\begin{align}
  H_\infty^\eps(X):=  \max_{\calE:\pr(\calE) \geq 1-\eps} \min_{x\in\Xcal} \left[ - \log \pr_{X}(x) \right]
 \end{align} 

Informally one can think of it as the maximum min-entropy available in any distribution that is $\eps$-close to the distribution $\pr_X$.
 
Furthermore the smoothed conditional min-entropy is defined as:

\begin{align}
  H_\infty^\eps(X|Y):=  \max_{\calE:\pr(\calE) \geq 1-\eps} \min_{x\in\Xcal,\; y \in \Ycal} \left[ - \log \pr_{X|Y}(x|y) \right]
 \end{align} 

The latter is an important measure in cryptography as it defines the maximum amount of randomness that is available from $X$ 
given $Y$ and $S$, except with probability $\eps$.


\chapter{Linear Algebra} 

\label{AppendixC} 

\lhead{Appendix C. \emph{Linear Algebra}} 

\section{Norms}

For any matrix $A \in \CC^{m \times n}$ and vector $x \in \CC^n$ we define the following norms:

Operator norm

\begin{align}
 \|A\| = \max_{\|x\|=1}\|Ax\|
\end{align}

%
%
%

Trace norm

\begin{align}
 \|A\|_{\Tr} = \Tr(\sqrt{AA^\dagger})
\end{align}
 
%
%

\paragraph{Statistical Distance}

Let $P$ and $Q$ be two probability distributions of a random variable $X$ that takes values $x \in \Xcal$.
Then the $\ell_1$ distance between them is defined as:

\begin{align}
 \|P-Q\| = \sum_{x} \left|\pr(x) - Q(x)\right|.
\end{align}

This is commonly called the statistical distance and is used as a distance measure between the two probability distributions.

%
%
%
%
%
%
%
%
%

\addtocontents{toc}{\vspace{2em}} 

\backmatter


\label{Bibliography}

\lhead{\emph{Bibliography}} 
\bibliographystyle{alphaarxiv}
\bibliography{quantum}

\end{document}